\newtheorem{theorem}{Theorem}[section]
\newtheorem{lemma}[theorem]{Lemma}
\newtheorem{corollary}[theorem]{Corollary}
\newtheorem{definition}[theorem]{Definition}
\newtheorem{proposition}[theorem]{Proposition}
\newtheorem{property}[theorem]{Property}
\newtheorem{observation}[theorem]{Observation}
\newtheorem*{rep@theorem}{\rep@title}
\newcommand{\newreptheorem}[2]{%
\newenvironment{rep#1}[1]{%
 \def\rep@title{#2 \ref{##1}}%
 \begin{rep@theorem}}%
 {\end{rep@theorem}}}
\DeclareMathOperator{\rank}{rank}
\DeclareMathOperator{\OPT}{OPT}
\newcommand\E{\mathbb{E}}
\newcommand{\defcal}[1]{\expandafter\newcommand\csname c#1\endcsname{{\mathcal{#1}}}}
\newcommand{\defbb}[1]{\expandafter\newcommand\csname b#1\endcsname{{\mathbb{#1}}}}
\newcounter{calBbCounter}
    \edef\letter{\Alph{calBbCounter}}
\newcommand{\ie}{{\it i.e.}}
\newcommand{\eg}{{\it e.g.}}
\newcommand{\Greedy}{{\textsf{Greedy}}}
\newcommand{\Linear}{{\textsf{Linear}}}
\newcommand{\LinearPrime}{{$\textsf{Linear}\,'$}}
\newcommand{\SP}{{\textsf{MSP}}}
\newcommand{\SSP}{{\textsf{SMSP}}}
\newcommand{\MSSP}{{\textsf{MSMSP}}}
\newcommand{\SFM}{{\textsf{SFM}}}
\title{The Submodular Secretary Problem Goes Linear}
\author{Moran Feldman%
\thanks{School of Computer and Communication Sciences, EPFL. 
Email:
\href{mailto:moran.feldman@epfl.ch}{moran.feldman@epfl.ch}.
Supported by ERC Starting Grant 335288-OptApprox.}
\and
Rico Zenklusen%
\thanks{Department of Mathematics, ETH Zurich,
and Department of Applied Mathematics
and Statistics, Johns Hopkins University.
Email:
\href{mailto:ricoz@math.ethz.ch}{ricoz@math.ethz.ch}.
}}
\begin{document}

\maketitle

\begin{abstract}

During the last decade, the matroid secretary problem (\SP) became
one of the most prominent classes of online selection problems.
The interest in \SP\ is twofold: on the one hand, there are
many interesting applications of \SP; and on the other hand,
there is strong hope that \SP\ admits
$O(1)$-competitive algorithms,
which is the claim of the well-known matroid secretary
conjecture.
Partially linked to its numerous applications in mechanism design,
substantial interest arose also in the study of nonlinear versions
of \SP, with a focus on the submodular
matroid secretary problem (\SSP). 
The fact that submodularity captures the property of diminishing
returns, a very natural property for valuation functions,
is a key reason for the interest in \SSP.
So far, $O(1)$-competitive algorithms have been obtained
for \SSP\ over some basic matroid classes.
This created some hope that, analogously to the matroid secretary
conjecture, one may even obtain $O(1)$-competitive algorithms
for \SSP\ over any matroid.
However, up to now, most questions related to \SSP\ remained open,
including whether \SSP\ may be substantially more difficult
than \SP; and more generally, to what
extend \SP\ and \SSP\ are related.

Our goal is to address these points by presenting general
black-box reductions from \SSP\ to \SP.
In particular, we show 
that any $O(1)$-competitive algorithm for \SP,
even restricted to a particular matroid class,
can be transformed in a black-box way
to an $O(1)$-competitive algorithm for \SSP\ over
the same matroid class.
This implies that
the matroid secretary conjecture
is equivalent to the same conjecture for \SSP.
Hence, in this sense \SSP\ is not harder than \SP.
Also, to find $O(1)$-competitive algorithms for
\SSP\ over a particular matroid class, it suffices to
consider \SP\ over the same matroid class.
Using our reductions we obtain many first and improved
$O(1)$-competitive algorithms
for \SSP\ over various matroid classes by leveraging
known algorithms for \SP.
Moreover, our reductions imply an
$O(\log\log(\rank))$-competitive algorithm
for \SSP, thus, matching the currently best asymptotic
algorithm for \SP, and substantially improving on the previously
best $O(\log(\rank))$-competitive algorithm for \SSP.

\end{abstract}

\medskip
\noindent
{\small \textbf{Keywords:}
matroids, online algorithms, secretary problem, submodular functions
}

\pagenumbering{alph}
\thispagestyle{empty}

\newpage

\pagenumbering{arabic}

\section{Introduction}

Secretary problems are a very natural class of online selection
problems with many interesting applications.
The origin of the secretary problem is hard to track and dates back
to at least the '60s~\cite{D63,
F89,%
G60,%
G60b,%
L61}.
In its original form, also called the \emph{classical secretary problem},
the task is to hire the best secretary out of a set $E$ of candidates
of known size $n=|E|$.
Secretaries get interviewed (or \emph{appear}) 
one by one in a random order. All secretaries that appeared so far
can be compared against each other
according to an underlying linear ordering.
Whenever a secretary got interviewed, one has to decide
immediately and irrevocably whether to hire (or \emph{select})
him. The task is to hire the best
secretary with as high a probability as possible.
Dynkin~\cite{D63} provided an
asympotically optimal algorithm for this
problem, which hires the best secretary with probability at
least $1/e$.
The classical secretary problem is naturally interpreted as a
stopping time problem and, not surprisingly, was mostly studied
by probabilists.

During the last decade, interest in generalized versions
of the classical secretary problem surged.
One reason for this is a variety of applications
in mechanism design 
(see~\cite{AKW14,%
BIKK08,%
BIK07,%
K05} and the references therein).
These generalizations allow hiring of more than
one secretary, subject to a given set of (down-closed)
constraints.
Each secretary reveals a non-negative
weight at appearance, and the task is to hire a maximum weight set
of secretaries.
The arguably most canonical generalization is the
problem of hiring $k$ out of $n$ secretaries instead of a
single one (see~\cite{K05}).
However, more general constraints are required for
many interesting applications.

A considerably more general setting, 
known as the \emph{matroid secretary problem}
({\SP} for short)
and introduced in~\cite{BIK07},
allows for selecting a subset of $E$
that is independent in a given matroid $\cM=(E,\mathcal{I})$.\footnote{
A matroid $\cM=(E,\mathcal{I})$ consists
of a finite set $E$ and a non-empty family
$\mathcal{I}\subseteq 2^E$ of subsets of $E$,
called \emph{independent sets} that satisfy:
(i) if $I\in \mathcal{I}$ and $J\subseteq I$ then $J\in \mathcal{I}$,
and (ii) if $I,J\in \mathcal{I}$ with $|I|>|J|$ then
there is an element $e\in I\setminus J$ such that
$J\cup \{e\} \in \mathcal{I}$.
For further basic matroidal concepts,
such as rank and span, we refer
to~\cite[Volume B]{S03}.
}
Similar to the classical secretary problem, the number
$n=|E|$ of candidates, or \emph{elements}, is known upfront,
elements appear in random order, and no assumption is made
on their weights.
Access to the matroid $\cM$ is provided by an independence
oracle that can be called on appeared elements, \ie, for
any subset $S\subseteq E$ of elements that appeared so far, one
can check whether $S\in \mathcal{I}$ or not.

\SP\ attracted considerable interest recently.
It is very appealing due to the fact that it captures 
a wide set of interesting selection problems in a single framework.
Furthermore, matroids are highly structured constraints,
which gives reasonable hope that strong online algorithms exist.
Indeed, there is a famous conjecture, which we simply
call the \emph{matroid secretary conjecture},
claiming the existence
of an $O(1)$-competitive algorithm
for the matroid secretary problem~\cite{BIK07}.
We recall that an algorithm is
$\alpha$-competitive if the expected weight collected by the algorithm
is at least $\frac{1}{\alpha} \cdot w(\OPT)$, where $w(\OPT)$ is the maximum
weight of any feasible set, \ie, the offline optimum.
Whereas this conjecture remains open, $O(1)$-competitive
algorithms have been obtained for various special cases of
matroids (see~\cite{%
BIK07,%
DP12,%
DK14,%
IW11,%
JSZ13,%
KP09,%
MTW13,%
S13b%
}).
The currently strongest asymptotic competitive ratio
obtained for general \SP---without any restriction
on the underlying matroid---is
$O(\log\log(\rank))$~\cite{lachish_2014_competitive,%
feldman_2015_simple},
where ``$\rank$'' is the rank of $\cM$, \ie, the cardinality of
a maximum cardinality independent set in $\cM$.

Recently, increased interest arose in nonlinear versions
of the secretary problem, with a focus on the
maximization of a non-negative submodular function\footnote{A non-negative submodular function $f$ on a ground set $E$ is a
function $f\colon2^E\rightarrow \bR^+$ giving
a non-negative weight to every subset of $E$ and satisfying
the following \emph{diminishing returns} property:
$f(A\cup\{e\}) - f(A) \geq f(B\cup \{e\}) - f(B)$ for
$A\subseteq B\subseteq E$ and $e\in E\setminus B$.}%
~\cite{barman_2012_secretary,bateni_2013_submodular,%
FNS11c,GRST10,MTW13}, leading
to the \emph{submodular secretary problem}.
Submodular functions have widespread use as valuation
functions because they 
reflect the property of diminishing returns,
\ie, the marginal value
of an element is the bigger the fewer elements have
been selected so far. This makes them natural candidates
for the matroid secretary setting.

Additionally, submodular weight functions capture
further generalizations of the secretary problem.
For example, Gilbert and Mosteller~\cite{gilbert_1966_recognizing}
and Freeman~\cite{freeman_1983_secretary} considered a
variation of the classical secretary problem where
one can select $k$ elements with the goal of maximizing
the value of the highest-valued element. This problem
can be phrased as a submodular secretary problem with
the submodular function $f:2^E\rightarrow \bR^+$
defined by $f(S) = \max\{w(e) \mid e\in S\}$ for
$S\subseteq E$, where
$w(e)$ is the weight revealed by element $e$.

The currently best asymptotic competitive ratio for the
submodular matroid secretary problem (\SSP) is
$O(\log(\rank))$~\cite{GRST10}. Furthermore,
$O(1)$-competitive algorithms have been obtained for
special classes of matroids, including
uniform matroids~\cite{bateni_2013_submodular,%
FNS11,GRST10}, partition matroids~\cite{FNS11,GRST10}, and
laminar and transversal matroids~\cite{MTW13} (both
only for monotone\footnote{A submodular function is monotone if $S \subseteq T \subseteq E$ implies $f(S) \leq f(T)$.} submodular functions).

In general, our understanding of
secretary problems is much more limited when
dealing with submodular weights instead of linear ones,
leading to many open questions.
In particular, is there hope to get an $O(1)$-competitive
algorithm for \SSP? Notice that this corresponds 
to the matroid secretary conjecture translated
to the submodular case.
Or may the submodular case be substantially harder
than the linear one?
Do monotone \SSP\ admit considerably
better competitive ratios than nonmonotone ones?
Can we leverage strong
algorithms for \SP\  to
obtain results for \SSP?

The goal if this paper is to address these questions 
and get a deeper understanding of \SSP\ 
and its relation to \SP, independently of
the structure of particular classes of
underlying matroids.

\subsection{Our Results} \label{sec:results}

Our main result below shows an intimate relation between the (linear)
matroid secretary problem and the submodular version.
More precisely, we show that one can use any
algorithm for \SP\ as a black box to obtain
an algorithm for \SSP\ with a slightly weaker
competitive ratio.
\begin{theorem} \label{thm:main_nonmonotone}
Given an arbitrary algorithm {\Linear} for {\SP} and
a value $\alpha \geq 1$, there exists an algorithm
for {\SSP} whose competitive ratio is at
most $24\alpha(3\alpha +1) = O(\alpha^2)$ for every
matroid $\cM$ on which {\Linear} is guaranteed to be
at least $\alpha$-competitive.
\end{theorem}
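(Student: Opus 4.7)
The plan is to build a two-phase black-box reduction. In the first phase (a \emph{learning} phase), the algorithm passively observes a random subsample of the elements and uses them to construct a reference set $R$. In the second phase (\emph{selection}), every remaining element $e$ is assigned the \emph{linear} weight $w(e) := \max\{0,\, f(R \cup \{e\}) - f(R)\}$, and the black-box algorithm \Linear\ is then run on the restricted ground set with this linear weight function. Crucially, $R$ is frozen before the selection phase begins, so the $w(e)$'s are genuine linear weights fixed ahead of time, allowing \Linear\ to be invoked as a black box on a matroid instance (the contraction/restriction of $\cM$ by the learning-phase elements) for which it must still be $\alpha$-competitive---a point that I would make explicit under a mild closure-under-minors assumption on the matroid class in question.

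To analyze this, I would randomly partition the ground set into two halves $E_1$ (learning) and $E_2$ (selection), e.g.\ using independent fair coin flips, and exploit two submodular identities. First, the subadditive inequality $\sum_{e\in \OPT\cap E_2} \max\{0,f(R\cup\{e\})-f(R)\}\ge f(R\cup \OPT)-f(R)$ lower-bounds $w(\OPT\cap E_2)$ by a marginal of $f$. Combined with standard random-subset lemmas for non-negative submodular functions (which show $\E[f(R\cup \OPT)-f(R)]\ge c\cdot f(\OPT)$ for some absolute constant $c$ when $R$ is a suitably ``innocuous'' random set), this gives $\E[w(\OPT\cap E_2)] = \Omega(f(\OPT))$. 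Since $\OPT\cap E_2$ is independent in the contracted matroid, \Linear\ returns a set $T$ with $\E[w(T)]\ge w(\OPT\cap E_2)/\alpha$. Finally, applying submodularity in the reverse direction yields $f(R\cup T)-f(R)\ge w(T)$ up to the truncation loss, which can be absorbed by outputting the better of $T$ and a fallback solution constructed from $R$ alone.

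The main obstacle is the \emph{nonmonotone} case, where two difficulties compound: marginals can be negative (so the truncation to $\max\{0,\cdot\}$ must be shown to only lose a constant factor against $\OPT$), and $f(R)$ itself may be much larger than $f(\OPT)$, in which case the marginal $f(R\cup \OPT)-f(R)$ could be negative or tiny. I would address this by building $R$ not as an arbitrary random subset but by running a first call to \Linear\ on $E_1$ with a carefully chosen auxiliary linear proxy (e.g.\ singleton values $f(\{e\})$ or a similar valid upper bound on marginals), so that $R$ is an \emph{independent} set whose $f$-value is controlled both from above and from below. This is where I expect the second factor of $\alpha$ in the $O(\alpha^2)$ bound to enter: the first invocation of \Linear\ loses an $\alpha$ in setting up $R$, and the second invocation loses another $\alpha$ in the actual selection. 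The delicate piece will be simultaneously showing that (i) $R$ is ``nontrivial'' enough that the marginal weights $w(e)$ carry a constant fraction of $f(\OPT)$, and (ii) $R$ is ``not too rich,'' so that $f(R)$ does not cancel the gain when we evaluate $f(R\cup T)$---a tension that is the heart of nonmonotone submodular maximization and for which I would rely on a random-discard step (independently dropping elements of $R$ with constant probability) before constructing the weights.
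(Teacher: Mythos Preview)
Your proposal has a genuine gap at the step you call ``applying submodularity in the reverse direction.'' Submodularity gives
\[
\sum_{e\in T}\max\{0,\,f(R\cup\{e\})-f(R)\}\;\ge\;\sum_{e\in T}\bigl(f(R\cup\{e\})-f(R)\bigr)\;\ge\;f(R\cup T)-f(R),
\]
i.e.\ $w(T)\ge f(R\cup T)-f(R)$, not the inequality $f(R\cup T)-f(R)\ge w(T)$ that you claim. So knowing $w(T)$ is large gives you \emph{no} lower bound on $f(R\cup T)$, and---since you cannot output $R\cup T$ anyway (it need not be independent) and there is no general relation between $f(T)$ and $\sum_{e\in T}f(e\mid R)$---you have no mechanism to convert the linear value $w(T)$ back into a submodular guarantee on the set you actually return. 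The ``fallback from $R$'' does not rescue this: the issue is not that $f(R)$ swamps the marginal, but that the marginal inequality points the wrong way.

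The paper resolves exactly this difficulty by a different construction. The reference set $M$ is built by \textsf{Greedy} (not by a first call to \Linear), and the weight of an element $u$ is its marginal over the \emph{partial} greedy solution $M_u$ at the moment $u$ would be inserted, not over a fixed set. This makes the weights telescope: $w(M)+f(\varnothing)=f(M)$, and more importantly $f_w(M)=f(M)$ for the convolution $f_w(S)=\min_{A\subseteq S}\{f(A)+w(S\setminus A)\}$. Since $f\ge f_w$ always, it suffices to lower-bound $f_w$ of the output, and the analysis tracks $w$ and $f_w$ through a randomly thinned set $N$ and then through \Linear. \Linear\ is invoked only once; the second factor of $\alpha$ does not come from a second black-box call but from the choice of the thinning parameter $p=(3\alpha)^{-1}$, which balances the loss in $\mathbb{E}[w(\OPT_w(N))]$ against the gap $w(N)-f_w(N)$. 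Your intuition that $R$ should be built by a preliminary run of \Linear\ on singleton values, and that this is where the extra $\alpha$ enters, does not match what is actually needed.
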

Theorem~\ref{thm:main_nonmonotone} has several interesting implications.
In particular, if there is an $O(1)$-competitive algorithm
for the linear case, then there is an $O(1)$-competitive
algorithm for the submodular case.
Hence, the matroid secretary conjecture is equivalent to
the same statement for the submodular version.
This provides strong hope that constant-competitive
algorithms exist for \SSP.

Furthermore, Theorem~\ref{thm:main_nonmonotone} 
implies many new results for \SSP, both for the general version
without any restriction on the matroid as well as for
many special classes of matroids, by leveraging algorithms
for \SP.
In particular, the known $O(\log\log(\rank))$-competitive
algorithms for \SP\ \cite{lachish_2014_competitive,%
feldman_2015_simple} imply
$O((\log\log(\rank))^2)$-competitive algorithms for
\SSP, which already considerably improves on the
previously best $O(\log(\rank))$-competitive
algorithm for \SSP~\cite{GRST10}.
We later strengthen this result to match the
asymptotically best algorithm for \SP.

The only matroid classes
for which $O(1)$-competitive
algorithms for \SSP\ have been explicitly given, without
assuming monotonicity
of the submodular weight function, are
uniform matroids~\cite{bateni_2013_submodular,GRST10}, unitary partition matroids~\cite{GRST10}
and matroids for which a reduction 
to unitary partition matroids
is known.
Such reductions are known for graphic
matroids~\cite{babaioff_2009_secretary,GRST10},
cographic
matroids~\cite{S13b}, and for
max-flow min-cut matroids~\cite{DK14}.
These reductions have originally been used
to obtain $O(1)$-competitive algorithms for {\SP} over
these matroids, but they
lead also to algorithms for {\SSP} over the
same matroid classes
when combined with an algorithm for {\SSP} over
unitary partition matroids.
For other classes of matroids,
Theorem~\ref{thm:main_nonmonotone} implies 
the first $O(1)$-competitive algorithm for \SSP,
by leveraging known $O(1)$-competitive algorithms for \SP,
such as the ones known for
transversal matroids~\cite{KP09} and
laminar matroids~\cite{IW11}.
Furthermore, we also improve the competitive ratios
for most matroid classes for which
$O(1)$-competitive algorithms have already been known.

The analysis of the algorithm that we use
to prove Theorem~\ref{thm:main_nonmonotone} can easily
be improved in many cases, if the algorithm 
{\Linear} obeys some natural properties.
Theorem~\ref{thm:union_nonmonotone} below gives a first
strengthening of Theorem~\ref{thm:main_nonmonotone}
and allows us to derive an $O(\log\log(\rank))$-algorithm
for \SSP, thus, matching the currently best algorithm for \SP\
up to a constant factor, and improving on the previously
best $O(\log (\rank))$ competitive algorithm for \SSP.

We highlight that Theorem~\ref{thm:main_nonmonotone}
can be obtained from Theorem~\ref{thm:union_nonmonotone}
by setting $k=1$.

\begin{theorem} \label{thm:union_nonmonotone}
Given an algorithm {\Linear} for {\SP} and value $\alpha \geq 1$,
there exists an algorithm for {\SSP} whose competitive
ratio is at most $24 k \alpha(3\alpha+1)=O(k\cdot \alpha^2)$
for every matroid $\cM$ on which the behavior
of {\Linear} can be characterized as follows.
\begin{compactitem}
	\item For every {\SP} instance over the matroid $\cM$, there exists a set of $k$ (correlated) random sets $\{P_i\}_{i = 1}^k$ such that each set $P_i$ is always independent in $\cM$ and $\bE[w(\bigcup_{i = 1}^k P_i)] \geq w(\OPT_w) / \alpha$, where $w$ is the weight function of the {\SP} instance and $\OPT_w$ is the maximum weight independent set given $w$.
	\item {\Linear} outputs a uniformly random set from $\{P_i\}_{i = 1}^k$.
\end{compactitem}
Observe that such an algorithm {\Linear} is
$(k\cdot \alpha)$-competitive for $\cM$.
\end{theorem}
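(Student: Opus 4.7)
The plan is to reduce {\SSP} to {\SP} through a two-phase sampling reduction that replaces the nonlinear objective $f$ by a random linearization. Each arriving element is, independently and with probability $1/2$, placed in a \emph{sample} set $M$ (observed but never selected); the remaining elements form the \emph{selection} side. Once the sampling side has been fully processed, we invoke {\Linear} on the selection side with the linear weights
\[
    w(e) \;:=\; \max\bigl\{0,\; f(M\cup\{e\}) - f(M)\bigr\},
\]
and we output the random set $P_J$ produced by {\Linear} (uniform over its $k$ hidden independent sets).

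The analysis chains four inequalities. \emph{First,} conditionally on $M$ and on the random split, the hypothesis on {\Linear} gives $\bE[w(\bigcup_i P_i)] \geq w(\OPT^w)/\alpha$, where $\OPT^w$ is the heaviest independent subset of the selection side under $w$. \emph{Second,} using $O \cap (\text{selection side})$---with $O$ the offline submodular optimum---as a feasible competitor for $\OPT^w$ and averaging over the random split together with a Feige--Mirrokni--Vondr\'ak-style inequality for non-negative submodular functions yields $\bE[w(\OPT^w)] = \Omega(f(O))$. \emph{Third,} since $J$ is uniform and $w\geq 0$, we have $\bE[w(P_J) \mid \{P_i\}] \geq w(\bigcup_i P_i)/k$ by subadditivity of $w$ as a sum. \emph{Fourth,} submodularity of $f$ gives $w(P_J) \geq f(M\cup P_J) - f(M)$; combined with a random-subset bound $\bE_M[f(M\cup P_J)] = \Omega(f(P_J))$ and an auxiliary $\alpha$-competitive run of {\Linear} on the singleton weights $f(\{e\})$ to control $\bE[f(M)]$, this yields $\bE[f(P_J)] = \Omega(\bE[w(P_J)]/\alpha)$. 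Stringing the four inequalities produces the claimed $O(k\alpha^2)$ ratio; the factor $(3\alpha+1)$ absorbs the loss from the two uses of the {\Linear} black box combined with the constants from the random-sampling inequalities.

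The main obstacle is the fourth step: because $f$ is not required to be monotone, the weight $w(e)$---a marginal contribution \emph{on top of} the random sample $M$---can differ substantially from the intrinsic value $f(P_J)$ of the returned set. Passing from $w(P_J)$ back to $f(P_J)$ crucially uses (a) the non-negativity of $f$, (b) that $M$ is an unbiased random half of the arrivals so that a Feige--Mirrokni--Vondr\'ak-type inequality applies to $\bE_M[f(M\cup P_J)]$, and (c) an additional $\alpha$-competitive invocation of {\Linear} to bound $\bE[f(M)]$---which is the source of the extra $\alpha$ factor in the $(3\alpha+1)$ term.
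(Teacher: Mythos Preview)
Your fourth step is where the argument breaks down, and the break is in the direction of the inequality. From submodularity you correctly get
\[
    w(P_J) \;=\; \sum_{e\in P_J} f(e\mid M)^+ \;\ge\; f(M\cup P_J) - f(M),
\]
which is an \emph{upper} bound on $f(M\cup P_J)$. Combining it with any inequality of the form $\bE[f(M\cup P_J)] \ge c\,\bE[f(P_J)]$ yields $\bE[f(P_J)] \le c^{-1}\bigl(\bE[w(P_J)] + \bE[f(M)]\bigr)$, i.e.\ again an upper bound on $\bE[f(P_J)]$. You need a \emph{lower} bound on $\bE[f(P_J)]$, and nothing in your chain delivers one. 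Moreover, the Feige--Mirrokni--Vondr\'ak step $\bE_M[f(M\cup P_J)] = \Omega(f(P_J))$ is not available: $P_J$ is the output of {\Linear} run with weights that depend on $M$, so $M$ and $P_J$ are heavily correlated and you cannot condition on $P_J$ and then treat $M$ as an independent random half. Finally, the ``auxiliary $\alpha$-competitive run of {\Linear} on the singleton weights $f(\{e\})$ to control $\bE[f(M)]$'' is not a meaningful step: $M$ is your random sample, not the output of any algorithm, and its $f$-value has no relation to what {\Linear} would return on singleton weights.

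The paper's route is genuinely different and is engineered precisely to avoid this trap. Rather than defining $w(e)$ as the marginal with respect to the \emph{full} sample, it runs greedy and sets $w(u)=f(u\mid M_u)$ where $M_u$ is the \emph{partial} greedy solution at the moment $u$ is processed. This gives the identity $f(M)=w(M)+f(\varnothing)=f_w(M)$ for the convolution $f_w(S)=\min_{A\subseteq S}\{f(A)+w(S\setminus A)\}$, and since $f\ge f_w$ always, lower-bounding $f_w$ on the output set automatically lower-bounds $f$. The analysis then tracks $w$ and $f_w$ through the sets $M$, $N$, and $Q\cap N$; the key inequality that does the work of your missing step is the (easy) monotonicity of $S\mapsto w(S)-f_w(S)$, which lets one pass from a lower bound on $\bE[w(Q\cap N)]$ to one on $\bE[f_w(Q\cap N)]$ at the cost of a controlled term $\bE[w(N)-f_w(N)]$. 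The extra $\alpha$ factor in $24\alpha(3\alpha+1)$ arises from this bookkeeping, not from a second black-box call to {\Linear}. For Theorem~\ref{thm:union_nonmonotone} itself, the paper simply observes that the entire analysis uses only the inequality $\bE[w(Q)]\ge \alpha^{-1}\bE[w(\OPT_w(N))]$, applies it to $\bigcup_i Q_i$, and then uses subadditivity $\sum_i f(Q_i\cap N)\ge f(\bigcup_i Q_i\cap N)$ together with the uniform choice of $J$ to pick up the factor $k$.
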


\begin{corollary}\label{corr:loglogSMSP}
There exists an $O(\log \log(\rank))$-competitive algorithm
for {\SSP}.
\end{corollary}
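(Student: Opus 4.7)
The plan is to apply Theorem~\ref{thm:union_nonmonotone} to a known $O(\log\log(\rank))$-competitive algorithm for \SP\ that fits its structural template. To produce an \SSP\ algorithm with competitive ratio $24 k \alpha (3\alpha+1) = O(\log\log(\rank))$, it suffices to exhibit an \SP\ algorithm \Linear\ which, on every matroid $\cM$ of rank $r$, generates $k = O(\log\log(r))$ correlated random sets $P_1, \ldots, P_k$ that are each independent in $\cM$, satisfy $\bE\bigl[w\bigl(\bigcup_{i=1}^k P_i\bigr)\bigr] \geq w(\OPT_w)/\alpha$ for some absolute constant $\alpha$, and whose output is a uniformly random element of $\{P_i\}_{i=1}^k$.

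I would take \Linear\ to be the algorithm of \cite{feldman_2015_simple} (alternatively, the one of \cite{lachish_2014_competitive}). Its high-level template is exactly of the required form: after an initial observation phase that calibrates weight thresholds, the remaining elements are partitioned into $k = O(\log\log(r))$ weight buckets, and for each bucket $i$ a sub-routine produces an independent set $P_i$ via partition-matroid-style arguments restricted to that bucket. The final output is obtained by choosing a bucket uniformly at random. Since distinct buckets use disjoint element subsets, we have $w\bigl(\bigcup_i P_i\bigr) = \sum_i w(P_i)$ deterministically, so the very analysis that certifies an $O(\log\log(\rank))$ competitive ratio for \Linear\ is equivalent, by linearity of expectation, to $\bE\bigl[w\bigl(\bigcup_i P_i\bigr)\bigr] \geq w(\OPT_w)/\alpha$ with $\alpha = O(1)$.

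Feeding this family $\{P_i\}_{i=1}^k$ into Theorem~\ref{thm:union_nonmonotone} yields the desired $O(\log\log(\rank))$-competitive algorithm for \SSP. The only nontrivial step is the structural verification for the chosen \SP\ algorithm; if its original description selects a bucket non-uniformly or interleaves the random choice with element processing, an easy cosmetic rewriting---equalizing bucket probabilities and deferring the random selection to the very end---suffices to place it in the form required by the theorem, without altering either the independence of each $P_i$ or the union-weight bound up to constant factors, and hence without affecting the final $O(\log\log(\rank))$ bound.
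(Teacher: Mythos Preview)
Your proposal is correct and follows essentially the same approach as the paper: both apply Theorem~\ref{thm:union_nonmonotone} to the algorithm of~\cite{feldman_2015_simple}, noting that its analysis already yields the required bucket structure with $\alpha = O(1)$ and $k = O(\log\log(\rank))$. The paper's proof additionally remarks that the \SSP\ algorithm produced by Theorem~\ref{thm:union_nonmonotone} does not depend on $k$, so the rank need not be known in advance---a point you do not mention but which is ancillary to the correctness argument.
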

\begin{proof}
Feldman et al.~\cite{feldman_2015_simple} describe
an $O(\log \log(\rank))$-competitive algorithm for {\SP}.
Their analysis of the algorithm shows that in fact it obeys
the requirements of Theorem~\ref{thm:union_nonmonotone} for
{\Linear} with $\alpha = O(1)$ and $k = O(\log \log (\rank))$.
Notice that the algorithm guaranteed by Theorem~\ref{thm:union_nonmonotone} does not depend on $k$. Hence, there is no need to know the $\rank$ ahead of time to use this algorithm.
\end{proof}

\medskip

In what follows we present further useful strengthenings of
Theorem~\ref{thm:main_nonmonotone} that are readily obtained
through our derivation of Theorem~\ref{thm:main_nonmonotone}.
These results allow for obtaining stronger competitive
ratios if {\Linear} fulfills some very typical properties,
or if the submodular valuation function is monotone.
When dealing with monotone submodular functions, we talk
about the \emph{monotone submodular secretary problem} (\MSSP).
We first state the improved reductions and later
present their implications in terms of competitive
ratios in the form of a table.

\begin{theorem} \label{thm:bound_probability_nonmonotone}
Given an algorithm {\Linear} for {\SP} and values $\alpha\geq 1$ and $q \in (0, 1]$, there exists an algorithm for {\SSP} whose competitive ratio is at least $24\alpha (3q\alpha +1) = O(q\cdot \alpha^2)$ for every matroid $\cM$ on which {\Linear} is guaranteed to:
\begin{compactitem}
	\item be at least $\alpha$-competitive.
	\item select every element with probability at most $q$.
\end{compactitem}
\end{theorem}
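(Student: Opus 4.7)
My plan is to use the exact same algorithm and analysis scaffolding that proves Theorem~\ref{thm:main_nonmonotone}, and then to track where the assumption $\Pr[e \in S] \leq q$ tightens a single union-bound step in the interference analysis. I expect the proof of Theorem~\ref{thm:main_nonmonotone} to have the following structure: split the stream into two halves via element-level random coin flips, use the first half (together with an auxiliary random subsample) to build a reference set $R$ and a linear proxy $w(e) \approx f(e \mid R)$, and then feed the second half to {\Linear} with weights $w$, returning the output $S$ of {\Linear} (or $S$ combined with $R$ appropriately). Because the two halves are symmetric, both $w(\OPT_w) \geq \Omega(f(\OPT))$ and the per-element behavior of {\Linear} transfer from the linear analysis to the submodular one. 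Under this scaffolding nothing about the algorithm itself needs to change to prove Theorem~\ref{thm:bound_probability_nonmonotone}; only the analysis needs to be refined.

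The decisive step is the lower bound on $\mathbb{E}[f(\mathrm{ALG})]$ in terms of $\mathbb{E}[w(S)]$. Writing $S = \{s_1, \dots, s_\ell\}$ in the order of arrival and telescoping $f$ along $R \cup S$, submodularity yields
\[
f(R \cup S) \;\geq\; f(R) + \sum_{i=1}^{\ell} f(s_i \mid R) - \text{(interference)},
\]
where the interference captures the submodular shortfall between $f(s_i \mid R)$ and $f(s_i \mid R \cup \{s_1,\dots,s_{i-1}\})$. By submodularity each such shortfall is at most $\sum_{j \neq i} \big(f(s_i \mid R) - f(s_i \mid R \cup \{s_j\})\big)$, so after taking expectations over the randomness of {\Linear} the interference becomes a double sum whose $(e,e')$-term is proportional to $\Pr[e,e' \in S] \cdot \big(f(e \mid R) - f(e \mid R \cup \{e'\})\big)$. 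In the proof of Theorem~\ref{thm:main_nonmonotone} this pair-probability is bounded by $\Pr[e' \in S] \leq 1$, which after rearranging contributes one of the $\alpha$ factors in the coefficient $3\alpha$ (since summing $f(e \mid R)$ over $e \in \OPT$ is later compared to $w(\OPT_w)$, which costs another $\alpha$ through the competitiveness of {\Linear}). Under the new hypothesis, replacing $\Pr[e' \in S] \leq 1$ with $\Pr[e' \in S] \leq q$ in exactly this spot converts $3\alpha$ into $3q\alpha$, while the additive $f(R)$ term and the $w(\OPT_w)/\alpha$ term contributing the standalone $+1$ are untouched by $q$.

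The main obstacle I anticipate is confirming that the interference term in the reference proof is indeed written (or can be rewritten) as a bilinear combination in which the per-element selection probability of {\Linear} appears as a plain multiplicative factor, rather than being absorbed into a coarser estimate such as $|S| \leq \rank(\cM)$. If the original argument already has this bilinear form---which I believe to be the case because the stated symmetry between $\alpha$ and $q\alpha$ is otherwise a coincidence---then the modification is entirely local to one inequality in the chain. Once this substitution is made, all remaining factors (the constant $24$ arising from the sampling probabilities used to construct $R$ and to simulate {\Linear} on a random half, and the factor $\alpha$ multiplying the bracket) propagate through the same algebra as in Theorem~\ref{thm:main_nonmonotone}, yielding the claimed competitive ratio $24\alpha(3q\alpha + 1)$.
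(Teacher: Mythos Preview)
Your proposal rests on a guessed structure for the proof of Theorem~\ref{thm:main_nonmonotone} that does not match the paper, and the specific inequality you rely on is false.

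The paper never telescopes $f(R\cup S)$ and bounds a bilinear ``interference'' term. Instead it works with the convolution $f_w(S)=\min_{A\subseteq S}\{f(A)+w(S\setminus A)\}$ and establishes a master bound (Proposition~\ref{prop:general_ratio_nonmonotone}): whenever
\[
\bE\bigl[w(Q\cap N)-f_w(Q\cap N)\bigr]\;\le\; q\cdot \bE\bigl[w(N)-f_w(N)\bigr]
\]
for some $q\ge 1/\alpha$, the output satisfies $\bE[f(Q\cap N)]\ge \frac{p(1-2pq\alpha)}{8\alpha(1+p)}\,f(\OPT)$. Theorem~\ref{thm:main_nonmonotone} takes $q=1$, which follows simply because $S\mapsto w(S)-f_w(S)$ is monotone (Observation~\ref{obs:diff_monotone}). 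The refinement for Theorem~\ref{thm:bound_probability_nonmonotone} is a \emph{supermodularity} argument, not a per-pair probability bound: the function $g(S)=w(S)+f_w(\varnothing)-f_w(S)$ is normalized, monotone, and supermodular (linear minus submodular), and Lemma~\ref{lem:sample_lower_bound} shows that for any such $g$ and any random subset in which each element appears with probability at most $q$, the expected value is at most $q$ times the value on the full set. Applying this conditionally on $N$ and $w$ gives the displayed inequality (Corollary~\ref{cor:diff_bound}); plugging into the proposition and choosing $p=(3\alpha q)^{-1}$ yields $24\alpha(3q\alpha+1)$.

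Two concrete gaps in your plan. First, the pairwise bound you invoke,
\[
f(s_i\mid R)-f(s_i\mid R\cup\{s_1,\dots,s_{i-1}\})\;\le\;\sum_{j\ne i}\bigl(f(s_i\mid R)-f(s_i\mid R\cup\{s_j\})\bigr),
\]
does \emph{not} follow from submodularity. A counterexample: $R=\varnothing$, $f(S)=\min(|S|,2)$, $s_i=c$, predecessors $\{a,b\}$; the left side is $1-0=1$ while each pairwise term is $1-1=0$. So even if the paper had used an interference decomposition, your substitution of $\Pr[e'\in S]\le q$ into this step would not be justified. Second, contrary to your claim that ``nothing about the algorithm itself needs to change,'' the parameter $p$ of Algorithm~\ref{alg:online} is reset from $(3\alpha)^{-1}$ to $(3\alpha q)^{-1}$; without this change the bound from Proposition~\ref{prop:general_ratio_nonmonotone} does not produce the stated ratio.
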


The results proved above can be somewhat improved for monotone objective functions.
\ifbool{shortVersion}{}{Appendix~\ref{app:monotone} explains
how the proofs of Theorems~\ref{thm:main_nonmonotone}, \ref{thm:union_nonmonotone} and~\ref{thm:bound_probability_nonmonotone} need to be changed in order to get the following theorem.
}

\begin{theorem} \label{thm:monotone}
When {\SSP} is replaced with {\MSSP} in Theorems~\ref{thm:main_nonmonotone}, \ref{thm:union_nonmonotone} and~\ref{thm:bound_probability_nonmonotone} their guarantees can be improved to $8\alpha(\alpha+1)$, $8 k \alpha(\alpha+1)$,
and $8\alpha (\alpha q +1)$, respectively.
\end{theorem}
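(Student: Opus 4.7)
The plan is to re-inspect the proofs of Theorems~\ref{thm:main_nonmonotone}, \ref{thm:union_nonmonotone}, and~\ref{thm:bound_probability_nonmonotone}, isolate the two places where a factor of $3$ is paid specifically because $f$ is not assumed monotone, and argue that each collapses to a factor of $1$ once monotonicity is available. I expect all three reductions to share a common template: an initial observation prefix is used to build a random ``base'' set $R$; a surrogate linear weight $w_R(e) := \max\{f(R\cup\{e\})-f(R),\,0\}$ is defined on the remaining elements; {\Linear} is invoked with weight $w_R$ on those elements to obtain an independent set $A$; and the algorithm returns a randomized combination of $R$ and $A$. A first factor of $3$ enters the {\SSP} analysis through an independent subsampling of $R\cup A$ needed to control the loss from nonmonotonicity of $f$ via a Feige--Mirrokni--Vondr\'ak-style sampling inequality, while a second factor of $3$ enters when the expected surrogate weight of the offline optimum is split into disjoint contributions involving $f(R)$, $f(\OPT)$ and a residual term.

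Under monotonicity, both losses are expected to vanish cleanly. For the outer factor, since $f(R\cup A)\ge f(A)$ always, the algorithm can output $R\cup A$ directly with no subsampling step, shrinking the global constant from $24$ to $8$. For the inner factor, the lower bound on $\E[\sum_{e\in\OPT} w_R(e)]$ simplifies, because submodularity combined with monotonicity gives
\[
\sum_{e\in\OPT}\!\bigl(f(R\cup\{e\})-f(R)\bigr)\;\ge\;f(R\cup\OPT)-f(R)\;\ge\;f(\OPT)-f(R),
\]
so no sampling-lemma detour is needed. Combined with the usual bound on $\E[f(R)]$ coming from the competitive guarantee of {\Linear} on the observation phase, this replaces the $(3\alpha+1)$ factor by $(\alpha+1)$ in Theorems~\ref{thm:main_nonmonotone}--\ref{thm:union_nonmonotone}, and similarly $(3q\alpha+1)$ by $(q\alpha+1)$ in Theorem~\ref{thm:bound_probability_nonmonotone}, with the parameter $q$ continuing to multiply the per-element contribution of {\Linear} to the surrogate weight of $\OPT$.

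The main obstacle I anticipate is verifying that nothing else in the reduction silently relied on the subsampling step. In particular, one must confirm that the analysis of {\Linear}'s performance on the weights $w_R$, and the conditional independence of $R$ from the relative order of the later elements, both still go through when $R\cup A$ is output intact. For Theorem~\ref{thm:union_nonmonotone} there is the additional technicality that {\Linear} returns a uniformly random member of $k$ correlated sets $\{P_i\}_{i=1}^k$; I would need to check that the two simplifications commute with averaging over $i$, so that the coverage factor $k$ coming from $\bigcup_{i=1}^k P_i$ remains decoupled from the $\alpha^2\to\alpha(\alpha+1)$ improvement and the promised bound $8k\alpha(\alpha+1)$ really does drop out of the same argument.
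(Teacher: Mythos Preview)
Your plan rests on a template that does not match the paper's construction, and as a result your identification of ``the two factors of $3$'' is off target. The surrogate weight is not $w_R(e)=\max\{f(R\cup\{e\})-f(R),0\}$ with respect to the final greedy set; it is $w(u)=f(u\mid M_u)$ with respect to the \emph{partial} greedy solution at the moment $u$ would be inserted. That distinction is what makes Lemma~\ref{lem:M_equalities} ($w(M)+f(\varnothing)=f(M)=f_w(M)$) hold and is the backbone of the whole analysis via the convolution $f_w$. The algorithm never outputs $R\cup A$; in both the monotone and nonmonotone versions it returns $Q\cap N$, a subset of the selection-phase elements only. So there is no ``subsampling of $R\cup A$'' to remove, and the inequality $\sum_{e\in\OPT}\bigl(f(R\cup\{e\})-f(R)\bigr)\geq f(\OPT)-f(R)$, while true, is not how the bound is derived.

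More importantly, the paper does not merely re-analyze Algorithm~\ref{alg:online}: it introduces a \emph{different} algorithm (Algorithms~\ref{alg:online_monotone}/\ref{alg:simulated_monotone}) in which the learning set contains each element with probability $p$ (rather than $1/2$), and every non-learning element passing the greedy test goes into $N$ with probability $1$ (rather than $p$). The gains then come from three concrete places: (i) for monotone $f$, {\Greedy} is a $1/2$-approximation, so $\bE[f(M)]\geq (p/2)\,f(\OPT)$ replaces the $1/8$ bound of Corollary~\ref{cor:f_M_value}; (ii) monotonicity of $f_w$ lets one drop the $(1-p)$ loss in Lemma~\ref{lem:f_w_m_n_ratio}, giving $\bE[f_w(N)]\geq p\,f(\varnothing)+(1-p)\,\bE[f_w(M)]$; and (iii) with the new probabilities the analogues of Lemmata~\ref{lem:m_n_ratio} and~\ref{lem:opt_n_bound} give $\bE[w(N)]=\tfrac{1-p}{p}\,\bE[w(M)]$ and $\bE[w(\OPT_w(N))]\geq (1-p)\,\bE[w(M)]$. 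Feeding these into the counterpart of Proposition~\ref{prop:general_ratio_nonmonotone} and optimizing at $p=\tfrac{2\alpha+1}{2(\alpha+1)}$ yields $8\alpha(\alpha+1)$; the $k$- and $q$-variants follow by the same substitutions as before. None of this is captured by your two proposed simplifications, so the plan as written would not go through.
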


For monotone functions there is one additional natural property of {\Linear} that can be used to get a stronger result. Intuitively this property is that {\Linear} selects every element of the optimal solution with probability at least $\alpha^{-1}$, and thus, is $\alpha$-competitive. Many algorithms have this property when items have disjoint weights, and thus, there is a single optimal solution. Such algorithms are often extended to general inputs by introducing a random tie breaking rule%
.
The following theorem is designed to deal with algorithms obtained this way%
\ifbool{shortVersion}{.}{; its proof can be found in Appendix~\ref{app:monotone}.}

\begin{theorem} \label{thm:opt_probability_monotone}
Given an arbitrary algorithm {\Linear} for {\SP} and a value $\alpha \geq 1$, there exists an algorithm for {\MSSP} whose competitive ratio is at most
$16 \alpha = O(\alpha)$ for every matroid $\cM$ on which {\Linear} has the following property: For every {\SP} instance over the matroid $\cM = (\cN, \cI)$, there exists a random set $S \subseteq E$ obeying:
\begin{compactitem}
	\item $S$ is always an optimal solution of the {\SP} instance.
	\item For every element $u \in E$, $\Pr[\text{$u$ is selected by Linear}] \geq \Pr[u \in S] / \alpha$.
\end{compactitem}
\end{theorem}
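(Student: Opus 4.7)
The plan is a single black-box invocation of {\Linear} on a linearization of $f$ derived from a sampling phase, with an analysis that exploits the strengthened per-element hypothesis of Theorem~\ref{thm:opt_probability_monotone} to obtain a linear (rather than quadratic) dependence on $\alpha$. The algorithm I propose observes the first $\lfloor n/2 \rfloor$ arriving elements as a sample set $T$ without selecting any of them. For each subsequent element $u \in C := E \setminus T$, it computes the marginal weight $w(u) := f(T \cup \{u\}) - f(T)$ and feeds $(u, w(u))$ to {\Linear}; the algorithm outputs whatever set $Y$ {\Linear} returns. The conditional arrival order of the second half is uniformly random, so the hypotheses of {\Linear} hold on the induced {\SP} instance $(\cM|_C, w)$.

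For the analysis, write $\OPT_C := \OPT \cap C$. A telescoping use of submodularity together with monotonicity gives $w(\OPT_C) \ge f(T \cup \OPT_C) - f(T) \ge f(\OPT) - f(T)$. Since $\OPT_C$ is feasible for the induced {\SP} instance, the random set $S_w \subseteq C$ guaranteed by the hypothesis of Theorem~\ref{thm:opt_probability_monotone} satisfies $w(S_w) \ge w(\OPT_C)$, and the per-element guarantee yields $\Pr[u \in Y \mid T] \ge \Pr[u \in S_w \mid T]/\alpha$ for every $u \in C$. To convert this into a bound on $\E[f(Y)]$, I would consider the auxiliary monotone submodular function $g(S) := f(T \cup S) - f(T)$ (note $g(\emptyset) = 0$) and its multilinear extension $G$. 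Concavity of $G$ along positive rays from the origin --- a consequence of $g$ being monotone submodular with $g(\emptyset) = 0$ --- combined with a correlation-gap inequality bounding $\E[g(Y)\mid T]$ below by a constant factor of $G$ evaluated at the marginals of $Y$, yields $\E[g(Y) \mid T] \ge \Omega(1/\alpha) \cdot (f(\OPT) - f(T))$. A case split according to whether $f(T) \le f(\OPT)/2$ or not --- the other case handled by symmetrically randomizing which half of the stream is used for sampling --- then assembles the argument into the claimed $16\alpha$ competitive ratio.

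The main obstacle is the ``without a second factor of $\alpha$'' step of the analysis. A generic reduction via Theorem~\ref{thm:main_nonmonotone} would first use {\Linear}'s $\alpha$-competitiveness to bound $\E[w(Y)]$ from below and then convert $w$ back to $f$ via a further sampling argument, costing another factor of $\alpha$ and yielding only $O(\alpha^2)$. The per-element hypothesis of Theorem~\ref{thm:opt_probability_monotone} circumvents this by providing \emph{pointwise} domination of the inclusion probabilities of $Y$ over those of $S_w$, which permits a direct comparison of $Y$ and $S_w$ inside the multilinear extension of $g$; concavity of $G$ along the ray from $\mathbf{0}$ to $\mathbf{1}_{S_w}$ then contracts $g$-values by $1/\alpha$ rather than $1/\alpha^2$. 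The technical subtlety is in handling the randomness of $S_w$ against the one-sidedness of the correlation-gap inequality, and in ensuring that the optimality of $S_w$ under the linear weights $w$ transfers to a comparable lower bound on $g(S_w)$ --- this is where the feasibility chain $w(S_w) \ge w(\OPT_C) \ge f(\OPT) - f(T)$ is leveraged, together with the fact that $g \le w$ pointwise is harmless once the comparison is carried out inside the multilinear extension rather than at the level of raw values.
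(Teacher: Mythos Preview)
Your approach differs substantially from the paper's and, as written, has genuine gaps that prevent it from yielding the claimed $16\alpha$ bound.

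\textbf{The case split is insufficient.} Your final assembly bounds $\E[g(Y)\mid T]$ by $\Omega(1/\alpha)\cdot(f(\OPT)-f(T))$ and then splits on whether $f(T)\le f(\OPT)/2$, handling the complementary case by swapping the roles of the two halves. But for a monotone submodular $f$ both $f(T)$ and $f(E\setminus T)$ can be essentially equal to $f(\OPT)$ simultaneously. Take $f(S)=\mathbf{1}[S\neq\varnothing]$ (plus an $\epsilon$-perturbation to break ties) on a uniform rank-$1$ matroid with $n\ge 4$: with high probability both halves are nonempty, so $f(T)=f(E\setminus T)=1=f(\OPT)$ and your bound degenerates to $\E[f(Y)]\ge\E[g(Y)]\ge\Omega(\epsilon/\alpha)$, which gives no constant-competitive guarantee as $\epsilon\to 0$. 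No symmetrization between the halves rescues this.

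\textbf{The ``correlation-gap'' step is in the wrong direction, and the transfer from $w$ to $g$ is unjustified.} You need a lower bound of the form $\E[g(Y)\mid T]\ge c\cdot G(\text{marginals of }Y)$ for a random set $Y$ with arbitrary correlations; this is false in general (e.g.\ $g(S)=\min(|S|,1)$, $Y=[n]$ with probability $1/n$ and $\varnothing$ otherwise). The standard correlation-gap inequality upper-bounds the best correlated distribution by $G$, not the reverse. Even granting a bound at the level of marginals, you then need $g(S_w)$ to be comparable to $f(\OPT)-f(T)$, and you only have $w(S_w)\ge f(\OPT)-f(T)$; since $g\le w$, this is the wrong inequality to close the chain, and ``carrying the comparison inside the multilinear extension'' does not change that $S_w$ can consist of elements that are individually valuable ($w$-heavy) but redundant under $g$.

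\textbf{What the paper does instead.} The paper does not linearize via marginals over a fixed sample. It runs the Greedy-based Algorithm~\ref{alg:simulated_monotone} with parameter $p$, assigning each element the weight $w(u)=f(u\mid M_u)$ given by the \emph{greedy} marginal, and analyzes the output through the convolution $f_w$. The per-element hypothesis is exploited indirectly: one passes to an auxiliary algorithm \textsf{Linear}$'$ that discards elements from \textsf{Linear}'s output so that no element is selected with probability exceeding $1/\alpha$ while preserving the hypothesis (and hence $\alpha$-competitiveness). This lets one invoke Corollary~\ref{cor:diff_bound} with $q=1/\alpha$ inside Proposition~\ref{prop:general_ratio_monotone}, giving $\E[f(Q'\cap N)]\ge\frac{(1-p)(2p-1)}{2\alpha}\,f(\OPT)$; choosing $p=3/4$ yields $1/(16\alpha)$, and monotonicity transfers the bound from $Q'\cap N$ back to the actual output $Q\cap N$. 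The greedy-marginal weights, together with the $f_w$ machinery, are precisely what control the gap $w(\cdot)-f_w(\cdot)$ that your linearization cannot.
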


The following table summarizes the competitive ratios 
we obtain,
for {\SSP} and {\MSSP} over particular matroid classes, 
by leveraging the above-presented reductions.\footnote{For readability, the competitive ratios in the table have been rounded to the closest integer.}
A straightforward application of these reductions leads to improvements or even
first results for all matroid classes listed in the table.
However, using some additional observations we can sometimes
get further improvements.
\ifbool{shortVersion}{}{%
Further details on how the stated results are implied
by our reductions can be found in Appendix~\ref{app:specific_matroids}.
}%
Our improvement for unitary partition matroids also
implies improvements by the same factor for all matroid
classes for which a reduction to
unitary partition matroids is known.

\begin{center}
\begin{threeparttable}
\begin{tabular}{|c|>{\hspace{2.5em}\centering}m{3.5cm}>{\centering}m{1.2cm}|c|}
\hline
Matroid Type & \multicolumn{2}{c|}{Known Competitive Ratio} & Our Competitive Ratio \\ \hline
Unitary partition matroids & $1297$ \tnote{$(1)$}
& \centering \cite{GRST10}& $261$ \\
Transversal matroids 			 & $-$ &  & $2496$ \\
$k$-sparse linear matroids \tnote{$(2)$} & $-$ & & $24 k e (3k+1)$ \\
$k$-sparse linear matroids (\MSSP) & $-$ & & $8 k e (k+1)$ \\
Laminar matroids           & $-$ &  & $585$ \\
Laminar matroids (\MSSP)   & $211$ & \cite{MTW13} &
  $144$ \\
\hline
\end{tabular}
\begin{tablenotes}
	\item[$(1)$] Gupta et al.~\cite{GRST10} do not explicitly calculate the competitive ratio of their algorithm, however, its competitive ratio is no better than $48000/37 \approx 1297$.

\item[$(2)$] A $k$-sparse linear matroid is a linear matroid
with a matrix representation using at most $k$ non-zeros
per column.

\end{tablenotes}
\end{threeparttable}
\end{center}

We highlight that apart from the unitary partition matroid
case, all other results in the above table
for particular matroid classes 
assume prior knowledge of the matroid, in addition to the
size of its ground set.
This is due to the fact that the corresponding \SP\ algorithms
that we put into our framework to get results for \SSP\ make
this assumption.

\medskip

\noindent \textbf{Remark:} The proofs of all the above theorems
use {\Linear} in a black-box manner. Hence, these theorems
apply also to many models allowing the algorithm more information,
such as the model in which the algorithm has full knowledge about
the matroid from the beginning (but not about the objective function).
Also, we emphasize that all the above results are general reductions
from \SP\ to \SSP, not assuming any particular structure
about the underlying matroid. This is in stark contrast
to almost all results on \SSP\ so far.

\subsection{Further Related Work}

Progress has been made on the matroid secretary
conjecture for variants of {\SP} which modify the
assumptions on the order in which elements arrive
and the way weights are assigned to elements.
One simpler variant of \SP\ is obtained by assuming
\emph{random weight assignment}. 
Here, an adversary can only choose $n=|E|$
not necessarily distinct weights, and the weights
are assigned to the elements uniformly at random.
In this model, a $5.7187$-competitive algorithm
can be obtained for any matroid~\cite{S13b}.
Additionally, a $16(1-1/e)$-competitive procedure
can still be obtained
in the random weight assignment model even if
elements are assumed to arrive in
an adversarial order~\cite{GV11,S13b}.
Hence, this variant is, in a sense, the opposite of
the classical \SP, where weights are adversarial
and the arrival order is random.
Furthermore, a $4$-competitive algorithm can be obtained
in the so-called \emph{free order model}. Here,
the weight assignment is adversarial; however,
the algorithm can choose the order in which
elements
arrive~\cite{AKW14,JSZ13}.
Among the above-discussed variants, this is the only
variant with adversarial weight assignments for which
an $O(1)$-competitive algorithm is known.
For more information on recent advances on \SP\ and
its variants we refer to the
survey~\cite{dinitz_2013_recent}.

We also highlight that 
\SSP\ is an online version of submodular
function maximization (\SFM)
over a matroid constraint.
Interestingly,
even in the offline setting,
$O(1)$-approximations for {\SFM}
over a matroid constraint have only been discovered
very recently, starting with
a $(4+\epsilon)$-approximation
presented in~\cite{LMNS10}.
Considerable progress has been made 
in the meantime~\cite{GV11,FNS11}.
The currently strongest approximation algorithm
has an approximation ratio of about
$e \approx 2.718$~\cite{FNS11}.
We refer the interested reader
to~\cite{chekuri_2014_submodular,FNS11} for more
information on constrained {\SFM}.

\subsection{Organization of the paper}
We start by formally introducing our problem and some
basic notation and results in Section~\ref{sec:preliminaries}.
Section~\ref{sec:algorithm} presents our main algorithm that
we use to prove our results. Finally, Section~\ref{sec:analysis}
provides details on the analysis of our algorithm.

\ifbool{shortVersion}{
Due to space constraints, many proofs have been omitted
from this extended abstract. All omitted proofs can 
be found in the long version of this paper.
}{}

\section{Preliminaries} \label{sec:preliminaries}

In this section we formally define our problem and state some notation and known results that are used later in the paper.

\subsection{Problems and Standard Notation}

An instance of the Submodular Matroid Secretary Problem (\SSP) consists of a ground set $E$, a non-negative submodular objective $f\colon E \to \bR^+$ and a matroid constraint $\cM = (E, \cI)$. An algorithm for this problem faces the elements of $E$ in a uniformly random order, and must accept or reject each element immediately upon arrival. The algorithm has access to $n = |E|$ and two oracles:
\begin{compactitem}
	\item A value oracle that, given a subset $S \subseteq E$ of elements that \emph{already arrived}, returns $f(S)$.
	\item An independence oracle that, given a subset $S \subseteq E$ of elements that \emph{already arrived}, determines whether $S \in \cI$.
\end{compactitem}
The objective of the algorithm is to accept an independent set of elements maximizing $f$. 

The Matroid Secretary Problem (\SP) can be viewed as a restriction of \SSP\ to linear objective functions. More formally, an instance of {\SP} is an instance of {\SSP} in which the function $f(S)$ is defined by $f(S) = \sum_{u \in S} w(u)$ for some set of non-negative weights $\{w(u) \mid u \in E\}$. Similarly, the Monotone Submodular Matroid Secretary Problem (\MSSP) is a restriction of \SSP\ to non-negative monotone submodular objective functions.

The following notation comes handy in our proofs. Given a set $S$ and an element $u$, we use $S + u$ and $S - u$ to denote the sets $S \cup \{u\}$ and $S \setminus \{u\}$, respectively. Additionally, given a weight function $w \colon E \to \bR^+$, we use $w(S)$ as a shorthand for $\sum_{u \in S} w(u)$. Finally, given a set function $f \colon 2^E \to \bR$, we denote by $f(u \mid S)$ the marginal contribution of adding $u$ to $S$. More formally, $f(u \mid S) := f(S + u) - f(S)$.

\subsection{The Function \texorpdfstring{$f_w$}{f\_w}}

Given a set function $f\colon 2^E \to \bR$ and a weight vector $w \in \bR^E$, let $f_w\colon 2^E \to \bR$ be the function defined as:
\[
	f_w(S)
	=
	\min_{A \subseteq S} \{f(A) + w(S \setminus A)\}
  \qquad \forall S\subseteq E \enspace.
\]
This construction of $f_w$ out of $f$ and $w$ is well known
in the field of submodular function optimization, and is
sometimes called \emph{convolution}
(see, for example,~\cite{lovasz_1983_submodular}).
We state some basic properties of $f_w$.
Proofs of these results can, for example, be
found in~\cite[Chapter~10]{narayanan_1997_submodular}.
\begin{property}\label{prop:w_always_smaller}
For every set $S \subseteq E$, $f(S) \geq f_w(S)$, and for $S = \varnothing$ the inequality holds with equality.
\end{property}
\begin{property}\label{prop:properties_preserved}
If $f$ is a non-negative submodular function and $w$ is non-negative, then $f_w$ is also non-negative and submodular. Moreover, if $f$ is also monotone, then so is $f_w$.
\end{property}

\ifbool{shortVersion}{}{
\subsection{Known Lemmata}

We need the following known lemmata. The lemmata have been rephrased a bit to make the difference between them clearer.

\begin{lemma}[Lemma~2.2 of~\cite{FMV11}] \label{lem:sampling}
Let $g\colon 2^E \to \bR^+$ be a submodular function. Denote by $A(p)$ a random subset of $A$ where each element appears with probability $p$ (not necessarily independently). Then,
\[
	\bE[g(A(p))] \geq (1 - p) g(\varnothing) + p \cdot g(A)
	\enspace.
\]
\end{lemma}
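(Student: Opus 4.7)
The plan is to lower-bound $g$ pointwise on every realization of $A(p)$ by a linear surrogate whose expectation matches the desired target, so that joint dependencies among the indicators $\mathbf{1}[a_i\in A(p)]$ disappear via linearity of expectation. First I would normalize by considering $g'(S) := g(S) - g(\varnothing)$; submodularity is translation-invariant, so $g'$ remains submodular with $g'(\varnothing) = 0$, and the inequality to prove reduces to $\bE[g'(A(p))] \geq p\cdot g'(A)$.

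Next I would construct a vertex of the base polyhedron of $g'$ by the standard greedy procedure. Fix an arbitrary ordering $a_1,\ldots,a_n$ of $A$ and set $y_i := g'(\{a_1,\ldots,a_i\}) - g'(\{a_1,\ldots,a_{i-1}\})$. Telescoping gives $\sum_{i=1}^n y_i = g'(A)$, and submodularity yields the key chain inequality $\sum_{i:\,a_i\in S} y_i \leq g'(S)$ for every $S\subseteq A$: for each $i$ with $a_i\in S$, submodularity gives $y_i \leq g'(S\cap\{a_1,\ldots,a_i\}) - g'(S\cap\{a_1,\ldots,a_{i-1}\})$, and summing over those $i$ telescopes on the right to $g'(S) - g'(\varnothing) = g'(S)$.

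Applying this bound pointwise with $S = A(p)$ gives $\sum_{i:\,a_i\in A(p)} y_i \leq g'(A(p))$. Taking expectations and invoking linearity together with the marginal assumption $\Pr[a_i\in A(p)]=p$,
\[
    \bE[g'(A(p))] \;\geq\; \bE\Bigl[\sum_{i:\,a_i\in A(p)} y_i\Bigr] \;=\; \sum_{i=1}^n y_i\cdot \Pr[a_i\in A(p)] \;=\; p\cdot g'(A),
\]
which is the desired bound; undoing the shift by $g(\varnothing)$ recovers the statement of the lemma. The reason the ``not necessarily independent'' clause costs nothing is precisely that after passing to the linear surrogate $\sum_i y_i\,\mathbf{1}[a_i\in A(p)]$ only the marginals of the indicators enter the computation, so the joint distribution is irrelevant. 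No step is particularly delicate---the only real content is the classical greedy construction in the second paragraph, and everything else follows by linearity of expectation.
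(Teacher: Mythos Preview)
Your proof is correct. Note, however, that the paper does not actually prove this lemma: it is quoted as a known result (Lemma~2.2 of Feige--Mirrokni--Vondr\'ak~\cite{FMV11}) in the ``Known Lemmata'' subsection, with no proof supplied. So there is no paper proof to compare against.

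That said, your argument is essentially the standard one. The chain-marginal vector $(y_i)_i$ you construct is exactly the greedy vertex of the base polyhedron, and the pointwise inequality $\sum_{i:\,a_i\in S} y_i \le g'(S)$ is the familiar fact that this vertex lies in the polymatroid of $g'$; the original reference proves the lemma by the same telescoping-plus-diminishing-returns computation, just without the polyhedral vocabulary. Your observation that only the marginals matter after linearizing is precisely why the ``not necessarily independently'' hypothesis is harmless, and this is the same reason the cited proof works.
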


\begin{lemma}[Lemma~2.2 of~\cite{BFNS14}] \label{lem:max_prob_bound}
Let $g\colon 2^E \to \bR^+$ be a non-negative submodular function. Denote by $A(p)$ a random subset of $A$ where each element appears with probability at most $p$ (not necessarily independently). Then,
\[
	\bE[g(A(p))] \geq (1 - p)g(\varnothing)
	\enspace.
\]
\end{lemma}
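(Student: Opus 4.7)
The plan is to leverage Lemma~\ref{lem:sampling} through a coupling argument. Let $q_e := \Pr[e \in A(p)]$, so $q_e \leq p$ for every $e \in A$. First, I would construct an auxiliary random set $A^+ \supseteq A(p)$ with $\Pr[e \in A^+] = p$ for every $e$: conditional on the realization of $A(p)$, include each element $e \notin A(p)$ in $A^+$ independently with probability $(p - q_e)/(1 - q_e)$. A direct calculation verifies $\Pr[e \in A^+] = q_e + (1 - q_e) \cdot (p - q_e)/(1 - q_e) = p$, so Lemma~\ref{lem:sampling} applies to $A^+$ and yields $\bE[g(A^+)] \geq (1-p) g(\varnothing) + p g(A) \geq (1-p) g(\varnothing)$.

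The main obstacle is transferring this bound from $\bE[g(A^+)]$ to $\bE[g(A(p))]$. For monotone $g$ the inclusion $A(p) \subseteq A^+$ gives $g(A(p)) \leq g(A^+)$, which is the wrong direction; for non-monotone $g$ there is no direct comparison between the two values. To handle this, I would invoke the submodular inequality applied to the disjoint sets $A(p)$ and $A^+ \setminus A(p)$ (whose union is $A^+$):
\[
g(A^+) + g(\varnothing) \;\leq\; g(A(p)) + g(A^+ \setminus A(p)).
\]
Taking expectations and rearranging reduces the task to producing a suitable upper bound on $\bE[g(A^+ \setminus A(p))]$, using the fact that $A^+ \setminus A(p)$ is composed of independent Bernoulli draws (conditional on $A(p)$) with each element included with marginal probability $p - q_e \leq p$.

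The hard part is finding this upper bound without becoming circular. I anticipate that closing the argument requires a more delicate combinatorial or LP-duality argument: for instance, directly exhibiting a dual witness $(y, z)$ with $y = g(\varnothing)$, $z \geq 0$, $\sum_e z_e \leq g(\varnothing)$, and $\sum_{e \in B} z_e \geq g(\varnothing) - g(B)$ for all $B \subseteq A$, the existence of which can be shown via submodularity and non-negativity of $g$ on suitably chosen subsets. Reconciling the non-monotonicity of $g$ with the correlation structure of $A(p)$ in this step is the crux of the proof, and matches the level of care taken in the argument of Buchbinder et al.~\cite{BFNS14}.
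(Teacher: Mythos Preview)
The paper does not prove this lemma; it is quoted as Lemma~2.2 of~\cite{BFNS14} in the ``Known Lemmata'' subsection with no argument supplied. So there is nothing in the paper to compare your attempt against.

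Evaluated on its own, your proposal has a genuine gap. The coupling to $A^+$ and the submodularity inequality $g(A^+)+g(\varnothing)\leq g(A(p))+g(A^+\setminus A(p))$ are both correct, but they do not advance the argument: after rearranging you need $\bE[g(A^+\setminus A(p))]\leq g(\varnothing)+p\,g(A)$ (or something of that strength), and there is no reason to expect such an \emph{upper} bound. The set $A^+\setminus A(p)$ is itself a random subset of $A$ with each marginal at most $p$, so you have traded the original lower-bound question for an upper-bound question on a set of the very same type---an orthogonal problem that is no easier for non-monotone $g$.

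Your closing LP-duality sketch is much closer to a real proof, but note that it makes the whole coupling detour irrelevant: if you had $z\geq 0$ with $\sum_e z_e\leq g(\varnothing)$ and $z(B)\geq g(\varnothing)-g(B)$ for all $B\subseteq A$, then directly
\[
\bE[g(A(p))]\;\geq\; g(\varnothing)-\bE[z(A(p))]\;=\;g(\varnothing)-\sum_e q_e z_e\;\geq\; g(\varnothing)-p\sum_e z_e\;\geq\;(1-p)\,g(\varnothing),
\]
with no mention of $A^+$ at all. The problem is that you do not construct such a $z$, and its existence is exactly the content of the lemma: by LP duality (dualize $\min_\mu \bE_\mu[g]$ over distributions with marginals at most $p$, and note the optimal dual has $y=g(\varnothing)$), the existence of this certificate for all $p$ is \emph{equivalent} to the inequality you are trying to prove. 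Asserting that it ``can be shown via submodularity and non-negativity'' without giving the construction is therefore circular. The actual work in~\cite{BFNS14} is precisely to exhibit this certificate (equivalently, a suitable telescoping bound); your proposal identifies the right target but does not reach it.
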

}

\section{Algorithm} \label{sec:algorithm}

The algorithm used to prove our results for {\SSP} is given as Algorithm~\ref{alg:online}. Observe that the algorithm has a single probability parameter $p \in (0, 1)$. Additionally, the algorithm uses an arbitrary procedure {\Linear} for {\SP} whose existence is assumed by Theorem~\ref{thm:main_nonmonotone} and our other results for {\SSP}. Finally, the algorithm also uses as a subroutine the standard greedy algorithm for maximizing a submodular function subject to matroid constraints (denoted by {\Greedy})%
\ifbool{shortVersion}{.}{, 
which can be found, for completeness, in Appendix~\ref{app:equivalence}.}%
\footnote{For non-monotone functions there are two versions of {\Greedy}: one that stops picking elements once they all have negative marginal contributions, and one that continues as long as possible. In this paper we use the first version.} While reading the description of Algorithm~\ref{alg:online}, the only important thing one has to know about {\Greedy} is that it creates its solution by starting with the empty set and adding elements to it one by one.

A key challenge in trying to leverage {\Greedy} in an algorithm for {\SSP}, is that {\Greedy} is not a constant-factor approximation algorithm for submodular function maximization over a matroid (or even in the unconstrained setting). In our analysis we show that Algorithm~\ref{alg:online} manages to circumvent this issue, and the set $M$, produced within Algorithm~\ref{alg:online} using {\Greedy}, does provide a constant approximation for the optimal solution.

\SetKwIF{With}{OtherwiseWith}{Otherwise}{with}{do}{otherwise with}{otherwise}{}
\begin{algorithm}[ht]
\caption{\textsf{Online}$(p)$} \label{alg:online}
\tcp{Learning Phase}
Choose $X$ from the binomial distribution $B(n, 1/2)$.\\
Observe (and reject) the first $X$ elements of the input. Let $L$ be the set of these elements.\\

\BlankLine

\tcp{Selection Phase}
Let $M$ be the output of {\Greedy} on the set $L$.\\
Let $N \gets \varnothing$.\\
\For{each arriving element $u \in E \setminus L$}
{
	Let $w(u) \gets 0$.\\
	\If{$u$ is accepted by {\Greedy} when applied to $M + u$}
	{
		\With{probability $p$}
		{
			Add $u$ to $N$.\\
			Let $M_u \subseteq M$ be the solution of {\Greedy} immediately before it adds $u$ to it.\\
			Update $w(u) \gets f(u \mid M_u)$.\\
		}
	}
	Pass $u$ to {\Linear} with weight $w(u)$.
}
\Return{$Q \cap N$, where $Q$ is the output of {\Linear}}.
\end{algorithm}

Observe that Algorithm~\ref{alg:online} can be implemented online because whenever an element $u$ is fed to {\Linear}, the algorithm already knows whether $u \in N$, and thus, can determine the membership of $u$ in $Q \cap N$ immediately after {\Linear} determines the membership of $u$ in $Q$.
Additionally, note that Algorithm~\ref{alg:online} applies {\Linear} to 
the restriction\footnote{The \emph{restriction} of a matroid $\cM=(E,\cI)$
to a subset $E'\subseteq E$ of its ground set is the
matroid $\cM' = (E', \cI \cap 2^{E'})$.}
of its input matroid $\cM$ to the set $E \setminus L$. We assume in the analysis of Algorithm~\ref{alg:online} that {\Linear} is $\alpha$-competitive for that restriction of $\cM$ whenever it is $\alpha$-competitive for $\cM$. Many algorithms for {\SP} obey this property without any modifications, but for some we need the following proposition which proves that this assumption can be justified in general. 

\begin{definition}
Partial-{\SP} is a variant of {\SP} where an instance consists also of a set $L \subseteq E$ \emph{known} to the algorithm. The elements of $E \setminus L$ arrive at a uniformly random order as usual. The elements of $L$ never arrive, and thus, cannot be added to the solution. However, the oracles can answer queries about them as if they arrived before the first element of $E \setminus L$.
\end{definition}

\begin{proposition} \label{prp:partial_problem}
Given an algorithm {\Linear} for {\SP}, there exists an algorithm for Partial-{\SP} whose competitive ratio for every matroid $\cM$ is as good as the competitive ratio of {\Linear} for this matroid.
\end{proposition}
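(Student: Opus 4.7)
The plan is to use {\Linear} as a black box on a virtual {\SP} instance over $\cM$ in which the elements of $L$ are assigned weight $0$ and are interleaved uniformly at random into the actual arrival stream of $E\setminus L$. Concretely, the reduction samples upfront a uniformly random set $T\subseteq\{1,\ldots,n\}$ of size $|L|$ and a uniformly random bijection $\sigma\colon T\to L$; then, processing positions $1,2,\ldots,n$ in order, whenever the current position $i$ lies in $T$ it presents $\sigma(i)$ to {\Linear} with weight $0$, and whenever $i\notin T$ it waits for the next arrival from $E\setminus L$ and presents it to {\Linear} with its actual weight. The final output is $Q\cap(E\setminus L)$, where $Q$ is the set returned by {\Linear}. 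Since $Q\in\cI$ and $\cI$ is downward closed, $Q\cap(E\setminus L)\in\cI$.

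Two things then need to be checked. First, {\Linear} must see a legitimate {\SP} instance on $\cM$, i.e., a uniformly random arrival order of $E$. This follows from a standard interleaving argument: the order of $E\setminus L$ is uniformly random by the definition of Partial-{\SP}, and independently the set $T$ of positions for $L$ together with the assignment of $L$-elements to those positions is chosen uniformly at random, so the induced ordering on $E$ is uniform. Any independence query asked by {\Linear} involves only elements already ``arrived'' in the virtual stream, and can be answered using the Partial-{\SP} oracle, which allows queries about arbitrary subsets of $L$ together with elements of $E\setminus L$ that have already arrived in the real stream.

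Second, I need to argue that $\bE[w(Q\cap(E\setminus L))]\ge \OPT/\alpha$, where $\OPT$ is the offline optimum of the Partial-{\SP} instance. Let $\OPT'$ denote the offline optimum of the virtual {\SP} instance (with the zeroed-out weights on $L$). An optimal Partial-{\SP} solution is independent in $\cM$ and contained in $E\setminus L$, so it witnesses $\OPT'\ge\OPT$. Conversely, any independent set $I\subseteq E$ in the virtual instance has weight $w(I\cap(E\setminus L))$, and $I\cap(E\setminus L)\subseteq E\setminus L$ is itself independent, so $\OPT'\le\OPT$. Hence $\OPT'=\OPT$, and the $\alpha$-competitiveness of {\Linear} on $\cM$ gives $\bE[w(Q)]\ge\OPT'/\alpha=\OPT/\alpha$. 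Since $L$-elements carry weight $0$, $w(Q)=w(Q\cap(E\setminus L))$, which yields the claimed competitive ratio.

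The only point requiring any real care is the interleaving step, and the key design choice is to zero out the $L$-weights rather than to, say, restrict $\cM$ to $E\setminus L$ before running {\Linear}; the latter would change the matroid that {\Linear} is called on and so potentially degrade its guarantee, whereas the former preserves $\cM$ exactly and forces $\OPT'=\OPT$ for free. No other obstacle is anticipated.
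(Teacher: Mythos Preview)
Your proposal is correct and takes essentially the same approach as the paper: zero out the weights of $L$, interleave $L$ uniformly at random into the real stream so that {\Linear} sees a bona fide {\SP} instance on $\cM$, and return $Q\setminus L$. The only cosmetic difference is that the paper implements the interleaving sequentially---at each step it flips a biased coin with probability $r/(|L'|+r)$ to decide whether to feed the next real arrival or a uniformly random remaining $L$-element---whereas you sample the full set $T$ of $L$-positions upfront; these two implementations induce the same distribution and the remaining analysis (uniform order, $\OPT'=\OPT$, $w(Q)=w(Q\setminus L)$) is identical.
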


\ifbool{shortVersion}{}{The proof of Proposition~\ref{prp:partial_problem} can be found in Appendix~\ref{app:missingProofs}.}
Notice that {\Linear} indeed faces an instance of Partial-{\SP} since the set $L$ is fully known before {\Linear} is invoked for the first time, and the weights assigned to elements depend solely on $L$.
The following simple observation is well known. A proof of it can be found, \eg, as Lemma~A.1. of~\cite{feldman_2015_simple}.

\begin{observation}
  \label{obs:LpropP}
The set $L$ constructed by Algorithm~\ref{alg:online} contains every element of $E$ with  probability $1/2$, independently.
\end{observation}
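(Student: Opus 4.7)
The plan is to show that $L$ has the same distribution as a reference set $L^\star$ obtained by independently including each element of $E$ with probability $1/2$. Since $L^\star$ trivially contains every element with probability $1/2$ independently, this equality of distributions immediately yields the observation.

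To establish the distributional equality, I would first note that $|L^\star|$ is a sum of $n$ independent $\mathrm{Bernoulli}(1/2)$ variables, so $|L^\star| \sim B(n, 1/2)$, which matches the marginal distribution of $X = |L|$. It therefore suffices to prove that, for every $k \in \{0, 1, \ldots, n\}$, the conditional distributions of $L$ and $L^\star$ given size $k$ agree.

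Next I would argue that both conditional distributions are uniform over the $\binom{n}{k}$ subsets of $E$ of size $k$. For $L^\star$ this is immediate from the exchangeability of i.i.d.\ fair coins: every size-$k$ subset receives the same probability $2^{-n}$, and conditioning on size $k$ normalizes this to $1/\binom{n}{k}$. For $L$, the key point is that Algorithm~\ref{alg:online} draws $X$ independently of the uniformly random arrival order; hence, conditioned on $X=k$, the set $L$ consists of the first $k$ elements of a uniformly random permutation of $E$, which is a uniformly random $k$-subset of $E$.

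There is no real obstacle here: the argument is a short coupling combined with the elementary fact that the prefix of a uniform random permutation is a uniform random subset of the appropriate size. The only thing to be explicit about is the independence of $X$ from the arrival order, which is implicit in the algorithm's specification but deserves mention to make the coupling rigorous.
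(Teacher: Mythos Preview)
Your argument is correct and is the standard way to prove this fact. Note, however, that the paper does not give its own proof of Observation~\ref{obs:LpropP}; it simply remarks that the statement is well known and points to Lemma~A.1 of~\cite{feldman_2015_simple} for a proof. Your coupling argument---matching the size distribution via $|L^\star|\sim B(n,1/2)$ and then observing that, conditioned on the size, both $L$ and $L^\star$ are uniform over subsets of that size---is precisely the elementary argument one would expect to find in such a reference, so there is nothing to compare.
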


As is, it is difficult to prove some claims about Algorithm~\ref{alg:online}. For that purpose, we present Algorithm~\ref{alg:simulated}, which is an offline algorithm sharing the same joint distribution (as Algorithm~\ref{alg:online}) of the set $M$ and the output set (a justification of this claim can be found in
\ifbool{shortVersion}{the long version of the paper).}{%
Appendix~\ref{app:equivalence}).}
Clearly, the competitive ratio of Algorithm~\ref{alg:online} is equal to the approximation ratio of Algorithm~\ref{alg:simulated}.

\begin{algorithm}[ht]
\caption{\textsf{Simulated}$(p)$} \label{alg:simulated}
\DontPrintSemicolon
\tcp{Initialization}
Let $M, N, N_0 \gets \varnothing$.\\
Let $E' \gets E$.\\

\BlankLine

\tcp{Main Loop}
\While{$E' \neq \varnothing$}
{
	Let $u$ be the element of $E'$ maximizing $f(u \mid M)$, and remove $u$ from $E'$.\\
	Let $w(u) \gets f(u \mid M)$.\\
	\If{$M + u$ is independent in $\cM$ and $w(u) \geq 0$}
	{
		\lWith{probability $1/2$}
		{
			Add $u$ to $M$.
		}
		\lOtherwiseWith{probability $p$}
		{
			Add $u$ to $N$.
		}
		\Otherwise
		{
			Update $w(u) \gets 0$.\\
			Add $u$ to $N_0$.
		}
	}
	\Else
	{
		Let $w(u) \gets 0$.\\
		Add $u$ to $N_0$ with probability $1/2$.
	}
}
Run {\Linear} with $N \cup N_0$ as the input (in a uniformly random order) and the weights defined by $w$.\\
\Return{$Q \cap N$, where $Q$ is the output of {\Linear}}.
\end{algorithm}

Notice that Algorithm~\ref{alg:simulated} also makes {\Linear} face an instance of Partial-{\SP} (with $L = E \setminus (N \cup N_0)$). Hence, by Proposition~\ref{prp:partial_problem}, we can assume the competitive ratio of {\Linear} for $\cM$ extends to the instance it faces in Algorithm~\ref{alg:simulated}.

Algorithms~\ref{alg:online} and~\ref{alg:simulated} can be viewed as generalizations of algorithms presented by~\cite{MTW13} for {\MSSP} over laminar matroids.
In particular, the idea of defining a surrogate weight function $w$ based on a subset of the elements sampled at the beginning
of the algorithm was already used in~\cite{MTW13}.
Although our analysis of Algorithms~\ref{alg:online} and~\ref{alg:simulated} is quite different and much more general than the analysis of~\cite{MTW13}, it does borrow some ideas from~\cite{MTW13}.
The concept of an offline algorithm simulating a more difficult to analyze online algorithm has been previously used, even specifically for offline simulations of an online greedy algorithm~\cite{DP12,KP09,MTW13}.
A key novel contribution of our analysis, compared
to~\cite{MTW13}, is that we manage to relate
the expected $w$-weight of a maximum $w$-weight independent set 
in $N$ to $\E[w(M)]$ (see Lemma~\ref{lem:opt_n_bound}).
Furthermore, we overcome several technical hurdles by
first comparing values of constructed sets with respect
to the convoluted submodular function $f_w$ instead of 
the original function $f$.
Finally, our use (and analysis) of a modified greedy
algorithm allows us to deal with
non-monotone submodular functions.

\section{Analysis of Algorithms~\ref{alg:online} and~\ref{alg:simulated}} \label{sec:analysis}

Throughout this section
\ifbool{shortVersion}{}{
(except in Section~\ref{ssc:union_analysis})
}
we fix an arbitrary matroid $\cM = (E, \cI)$ for which {\Linear} is $\alpha$-competitive, and analyze the approximation ratio of Algorithm~\ref{alg:simulated} for this matroid. Since Algorithms~\ref{alg:online} and~\ref{alg:simulated} share their output distribution, the approximation ratio we prove for Algorithm~\ref{alg:simulated} implies an identical competitive ratio for Algorithm~\ref{alg:online}. The analysis of the approximation ratio consists of three main stages. In each stage we study one of the sets $M$, $N$ and $N \cap Q$. Specifically, we show bounds on the expected values assigned to each one of these sets by $w$ and $f_w$. Notice that once we have a bound on $\bE[f_w(N \cap Q)]$, we also get a bound on the expected value of the solution produced by Algorithm~\ref{alg:simulated} since $f(N \cap Q) \geq f_w(N \cap Q)$ by Property~\ref{prop:w_always_smaller}.\footnote{Observe that the weights $w$ chosen by Algorithm~\ref{alg:simulated} are always non-negative, thus, we can use all the properties of $f_w$ stated in Section~\ref{sec:preliminaries}.}

\ifbool{shortVersion}{}{
Following is some notation that is useful in many of the proofs below. For every element $u \in E$, let $E_u$ be the set of elements processed by Algorithm~\ref{alg:simulated} before $u$. Then, we define:
\[
	N_u = N \cap E_u \qquad \mbox{and} \qquad \enspace M_u = M \cap E_u \enspace.
\]
Similarly, we also define $E'_u = E_u + u$ and:
\[
	N'_u = N \cap E'_u \qquad \mbox{and} \qquad \enspace M'_u = M \cap E'_u\enspace.
\]
}

\subsection{First Stage}

We now begin with the first stage of the analysis, \ie, bounding $\bE[w(M)]$ and $\bE[f_w(M)]$. The following lemma shows that both values are in fact strongly related to $\bE[f(M)]$.

\begin{lemma} \label{lem:M_equalities}
$w(M) + f(\varnothing) = f(M) = f_w(M)$.
\end{lemma}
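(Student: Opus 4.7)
The plan is to prove the two equalities separately, both using the structure of the greedy construction of $M$ in Algorithm~\ref{alg:simulated}. Let me enumerate $M = \{u_1, u_2, \ldots, u_k\}$ in the order in which the elements were added to $M$ (which coincides with the order they were processed by the main loop, since elements can only be added to $M$ at the moment they are processed). A key fact from inspecting the pseudocode is that whenever an element $u_i$ is added to $M$, the value $w(u_i) = f(u_i \mid M)$ is computed with respect to the current set $M = \{u_1, \ldots, u_{i-1}\}$, and $w(u_i)$ is never modified afterwards. Hence $w(u_i) = f(u_i \mid \{u_1, \ldots, u_{i-1}\})$ for every $i$.

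For the first equality $w(M)+f(\varnothing)=f(M)$, the plan is a standard telescoping argument:
\[
w(M) \;=\; \sum_{i=1}^k w(u_i) \;=\; \sum_{i=1}^{k}\bigl[f(\{u_1,\dots,u_i\})-f(\{u_1,\dots,u_{i-1}\})\bigr] \;=\; f(M)-f(\varnothing).
\]

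For the second equality $f(M)=f_w(M)$, the easy direction $f_w(M)\leq f(M)$ follows by plugging $A=M$ into the definition of $f_w$, giving $f(M)+w(\varnothing)=f(M)$. For the reverse inequality I need to show that $f(A)+w(M\setminus A)\geq f(M)$ for every $A\subseteq M$. The plan is to enumerate $M\setminus A=\{u_{i_1},u_{i_2},\dots,u_{i_\ell}\}$ with $i_1<i_2<\dots<i_\ell$ and expand
\[
f(M)-f(A) \;=\; \sum_{j=1}^{\ell} f\bigl(u_{i_j} \,\bigm|\, A\cup\{u_{i_1},\dots,u_{i_{j-1}}\}\bigr).
\]
The crucial observation is that the conditioning set $A\cup\{u_{i_1},\dots,u_{i_{j-1}}\}$ contains every element of $M$ whose index is strictly less than $i_j$: such an element is either in $A$, or it lies in $M\setminus A$ with index $<i_j$, and so appears among $u_{i_1},\dots,u_{i_{j-1}}$. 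Consequently this conditioning set is a superset of $\{u_1,\dots,u_{i_j-1}\}$, and submodularity yields
\[
f\bigl(u_{i_j} \,\bigm|\, A\cup\{u_{i_1},\dots,u_{i_{j-1}}\}\bigr) \;\leq\; f\bigl(u_{i_j}\,\bigm|\,\{u_1,\dots,u_{i_j-1}\}\bigr) \;=\; w(u_{i_j}).
\]
Summing over $j$ gives $f(M)-f(A)\leq w(M\setminus A)$, as desired.

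The step that requires the most care is this last one: keeping track of two different orderings (the greedy addition order on $M$ versus an arbitrary subset $A$) and verifying that the greedy-order expansion of $f(M)-f(A)$ has its conditioning sets large enough to unlock submodularity. I do not anticipate any further obstacle; in particular the sign issues are absent because the algorithm only sets $w(u)>0$ for elements that actually enter $M$ (the greedy check $w(u)\geq 0$ before adding), so the $w$ values used in the telescoping are all non-negative and consistent with the convolution definition.
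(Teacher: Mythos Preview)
Your proof is correct and follows essentially the same approach as the paper: telescoping along the greedy order for the first equality, and for the second, showing $f(A)+w(M\setminus A)\geq f(M)$ for every $A\subseteq M$ via submodularity with respect to the greedy prefixes. The only cosmetic difference is that the paper applies submodularity to the elements of $A$ (bounding $f(A)$ below via $f(u\mid A\cap M_u)\geq f(u\mid M_u)$), whereas you apply it to the elements of $M\setminus A$ (bounding $f(M)-f(A)$ above); these are mirror images of the same argument.
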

\ifbool{shortVersion}{}{
\begin{proof}
Observe that, by construction, $w(u) = f(u \mid M_u)$ for every $u \in M$. Hence, the equality $w(M) + f(\varnothing) = f(M)$ holds since:
\[
	w(M)
	=
	\sum_{u \in M} f(u \mid M_u)
	=
	f(M) - f(\varnothing)
	\enspace.
\]

Let us now prove the equality $f_w(M) = f(M)$. By Property~\ref{prop:w_always_smaller}, $f_w(M) \leq f(M)$. Thus, we only need to prove the reverse inequality. For every set $A \subseteq M$:
\begin{align*}
	f(A) + w(M \setminus A)
	={} &
	f(\varnothing) + \sum_{u \in A} f(u \mid A \cap M_u) + \sum_{u \in M \setminus A} f(u \mid M_u)\\
	\geq{} &
	f(\varnothing) + \sum_{u \in A} f(u \mid M_u) + \sum_{u \in M \setminus A} f(u \mid M_u)
	=
	f(M)
	\enspace,
\end{align*}
where the inequality follows from submodularity. Hence, by the definition of $f_w(M)$:
\[
	f_w(M)
	=
	\min_{A \subseteq M} \{f(A) + w(M \setminus A)\}
	\geq
	\min_{A \subseteq M} f(M)
	=
	f(M)
	\enspace.
	\qedhere
\]
\end{proof}
}

The bounds given by the above lemma are in terms of $f(M)$. To make these bounds useful, we need to bound also $\bE[f(M)]$. This is not trivial since {\Greedy} is not a constant-factor
approximation algorithm for submodular function maximization over
a matroid.
Different approaches are known to adapt or
extend {\Greedy} such that it provides
an $O(1)$-approximation in the offline setting.
However, these are not well-suited for the way we
simulate greedy online.
The next \ifbool{shortVersion}{lemma shows}{two lemmata show} a very simple way
to transform {\Greedy}
into an $O(1)$-approximation algorithm for submodular function
maximization; and most importantly, this adjustment
of {\Greedy} is trivial to simulate online.
We are not aware of any previously
known variation of {\Greedy}
that provides an $O(1)$-approximation for submodular function
maximization over a matroid constraint,
and that can easily be simulated in the online setting. 

\ifbool{shortVersion}{}{
The next lemma is similar to Lemma~3 of~\cite{GRST10}, but does not assume that $f$ is normalized (\ie, $f(\varnothing) = 0$).
The proof of the lemma is deferred to Appendix~\ref{app:missingProofs} since it goes along the same lines as the proof of~\cite{GRST10}.

\begin{lemma} \label{lem:greedy_determinisitic_property_nonnormalized}
For a matroid and a non-negative submodular function $f$, if $S$ is the independent set returned by {\Greedy}, then for any independent set $C$, $f(S) \geq f(C \cup S) / 2$.
\end{lemma}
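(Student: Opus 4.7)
The plan is to establish the stronger inequality $f(S \cup C) \le 2 f(S) - f(\varnothing)$; since $f(\varnothing) \ge 0$ by non-negativity, this immediately yields the claimed bound $f(S) \ge f(S \cup C)/2$. The starting point is submodularity, which gives
\[
  f(S \cup C) - f(S) \;\le\; \sum_{c \in C \setminus S} f(c \mid S),
\]
so the task reduces to showing that this sum is at most $f(S) - f(\varnothing)$. Writing $S = \{s_1,\ldots,s_k\}$ in the order in which {\Greedy} picks the elements, and setting $S_i = \{s_1,\ldots,s_i\}$, the target value $f(S) - f(\varnothing)$ equals the telescoping sum $\sum_{i=1}^{k} f(s_i \mid S_{i-1})$.

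Next, I would restrict attention to $C' = \{c \in C \setminus S : f(c \mid S) \ge 0\}$, since elements with negative marginal only lower the left-hand side and can be dropped. For every $c \in C'$, the stopping rule of {\Greedy} (in the variant used in this paper, which halts only once every feasible extension has negative marginal) forces $S + c \notin \cI$: otherwise $c$ would itself be a feasible extension with non-negative marginal, contradicting termination. Hence $C' \subseteq \mathrm{span}(S)$ in the matroid $\cM$, and since $C'$ is independent, $|C'| \le |S|$.

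I then plan to invoke the classical matroid-exchange argument used in the Fisher--Nemhauser--Wolsey analysis of {\Greedy}, which produces an injection $\phi\colon C' \to S$ with the prefix-independence property that $S_{i-1} + c \in \cI$ whenever $\phi(c) = s_i$. Given such a $\phi$, the greedy selection rule (together with submodularity) yields, for $\phi(c) = s_i$,
\[
  f(c \mid S) \;\le\; f(c \mid S_{i-1}) \;\le\; f(s_i \mid S_{i-1}).
\]
Summing over $c \in C'$ and using that $\phi$ is an injection, the right-hand side consists of distinct terms of $\sum_{i=1}^{k} f(s_i \mid S_{i-1}) = f(S) - f(\varnothing)$, giving $\sum_{c \in C \setminus S} f(c \mid S) \le f(S) - f(\varnothing)$ and hence the desired conclusion.

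The main obstacle is securing the injection $\phi$ that respects the greedy prefix structure; this is a standard matroid-theoretic statement (essentially the exchange lemma underlying the Fisher--Nemhauser--Wolsey analysis), and it is precisely the same ingredient used in the proof of~\cite{GRST10}. The only genuinely new point relative to that proof is the explicit bookkeeping of $f(\varnothing)$ that arises when $f$ is not assumed to be normalized, and this term is absorbed harmlessly at the very end via the non-negativity of $f$.
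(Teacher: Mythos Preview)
Your argument is correct and reaches the same inequality $f(S\cup C)\le 2f(S)-f(\varnothing)$ as the paper, but the decomposition is organised differently. The paper builds a chain $C=C_0,C_1,\ldots,C_k$ by inserting the greedy picks $u_i$ one at a time and deleting at most one element $C'_i\subseteq C$ to keep independence; it then compares each greedy gain $f(u_i\mid S_{i-1})$ to $f(C'_i\cup C_k)-f(C_k)$, and handles the residual set $C_k\supseteq S$ via the termination property $f(S)\ge f(C_k)$. You instead bound $f(S\cup C)-f(S)$ by $\sum_{c\in C\setminus S}f(c\mid S)$, discard the negative summands, and match the surviving elements into $S$ via the prefix-compatible injection. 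Both are standard Fisher--Nemhauser--Wolsey exchange arguments, and both isolate $f(\varnothing)$ in the same way; your version is arguably a bit more direct since it avoids the auxiliary chain and the separate observation $f(S)\ge f(C_k)$.

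One small point worth making explicit in your write-up: the step $f(c\mid S_{i-1})\le f(s_i\mid S_{i-1})$ requires $c$ to still be in the candidate pool when $s_i$ is selected. In the version of {\Greedy} used here, elements are removed from $E'$ even when rejected, so this needs the short argument that any $c$ with $S_{i-1}+c\in\cI$ and $f(c\mid S)\ge 0$ cannot have been processed and discarded at an earlier stage $S_j$ (since $S_j+c\in\cI$ would then force $f(c\mid S_j)<0$, contradicting $f(c\mid S)\ge 0$ by submodularity).
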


The following lemma gives the promised variant of {\Greedy}.
}

\begin{lemma} \label{lem:greedy_property}
Let $S$ be a random set containing every element of $E$ with probability $1/2$, independently, then $\bE[f(\Greedy(S))] \geq f(\OPT) / 8$.
\end{lemma}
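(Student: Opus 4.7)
The plan is to combine the greedy guarantee of Lemma~\ref{lem:greedy_determinisitic_property_nonnormalized} with two successive applications of the random-subset lemmata stated in Section~\ref{sec:preliminaries}; each of the three steps loses a factor of $1/2$, producing the target $f(\OPT)/8$.

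First I fix a realization of $S$ and write $G := \Greedy(S)$. Since $\OPT\cap S$ is a subset of the independent set $\OPT$ and is contained in $S$, it is independent in the matroid $\cM$ restricted to $S$, so Lemma~\ref{lem:greedy_determinisitic_property_nonnormalized} applied with $C = \OPT \cap S$ yields $f(G) \geq \tfrac{1}{2}\, f(G \cup (\OPT \cap S))$. Taking expectation over $S$, it therefore suffices to show that $\bE[f(G \cup (\OPT \cap S))] \geq f(\OPT)/4$.

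To establish this, I condition on $Y := \OPT \cap S$. Because $S$ has independent coordinates, the coordinates of $S$ outside of $\OPT$ are unaffected by the conditioning, so every $u \in E \setminus \OPT$ still belongs to $S$ with conditional probability $1/2$. Since $G \setminus \OPT$ is a (deterministic) subset of $S \setminus \OPT$, the conditional probability that any $u \in E \setminus \OPT$ lies in $G \setminus \OPT$ is at most $1/2$. Defining $h_Y(B) := f(B \cup Y)$, which is non-negative and submodular (directly from the corresponding properties of $f$), I apply Lemma~\ref{lem:max_prob_bound} with $p = 1/2$ to $h_Y$ and to the random subset $G \setminus \OPT$ of $E \setminus \OPT$ to obtain
\[
\bE\!\left[f(G \cup Y) \mid Y\right] \;=\; \bE\!\left[h_Y(G \setminus \OPT) \mid Y\right] \;\geq\; \tfrac{1}{2}\, h_Y(\varnothing) \;=\; \tfrac{1}{2}\, f(Y),
\]
where I used $G \cup Y = (G \setminus \OPT) \cup Y$, which holds because $G \cap \OPT \subseteq S \cap \OPT = Y$. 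Averaging over $Y$ gives $\bE[f(G \cup Y)] \geq \tfrac{1}{2}\,\bE[f(Y)]$.

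Finally, $Y = \OPT \cap S$ is a random subset of $\OPT$ in which each element appears with probability $1/2$, so Lemma~\ref{lem:sampling} applied with $A = \OPT$ gives $\bE[f(Y)] \geq \tfrac{1}{2}\,f(\varnothing) + \tfrac{1}{2}\,f(\OPT) \geq f(\OPT)/2$. Chaining the three inequalities produces $\bE[f(G)] \geq \tfrac{1}{2}\,\bE[f(G \cup Y)] \geq \tfrac{1}{4}\,\bE[f(Y)] \geq f(\OPT)/8$, which is the claim. The main delicate point is the conditional application of Lemma~\ref{lem:max_prob_bound}: one must check that conditioning on $Y$ preserves the ``appears with probability at most $1/2$'' bound on elements of $E \setminus \OPT$, which is exactly where the product structure of the distribution of $S$ is used.
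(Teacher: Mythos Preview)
Your proof is correct and follows essentially the same approach as the paper's: apply Lemma~\ref{lem:greedy_determinisitic_property_nonnormalized} with $C=\OPT\cap S$, then condition on $\OPT\cap S$ and use Lemma~\ref{lem:max_prob_bound} for the shifted function $f(\,\cdot\,\cup(\OPT\cap S))$, and finally apply Lemma~\ref{lem:sampling} to $\OPT\cap S$. The only cosmetic difference is that the paper removes $A=\OPT\cap S$ from $T$ before invoking Lemma~\ref{lem:max_prob_bound}, whereas you remove all of $\OPT$; both are valid since the relevant conditional marginals are at most $1/2$ in either case.
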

\ifbool{shortVersion}{}{
\begin{proof}
Let $T$ be the output of {\Greedy}. By Lemma~\ref{lem:greedy_determinisitic_property_nonnormalized}, the following inequality always holds:
\[
	f(T)
	\geq
	\frac{f(T \cup (\OPT \cap S))}{2}
	\enspace.
\]
Hence,
\begin{align*}
	\bE[f(T)]
	={} &
	\sum_{A \subseteq \OPT} \Pr[\OPT \cap S = A] \cdot \bE[f(T) \mid \OPT \cap S = A]\\
	\geq{} &
	\frac{1}{2} \cdot \sum_{A \subseteq \OPT} \Pr[\OPT \cap S = A] \cdot \bE[f(T \cup A) \mid \OPT \cap S = A]
	\enspace.
\end{align*}

Let $g_A(T) = f(T \cup A)$. It is easy to check that $g_A$ is a non-negative submodular function for every choice of set $A$. Additionally, $T \subseteq S$, and thus, contains every element of $E \setminus A$ with probability at most $1/2$ even conditioned on $\OPT \cap S = A$. Hence, by Lemma~\ref{lem:max_prob_bound}:
\[
	\bE[f(T \cup A) \mid \OPT \cap S = A]
	=
	\bE[g_A(T \setminus A) \mid \OPT \cap S = A]
	\geq
	\frac{g_A(\varnothing)}{2}
	=
	\frac{f(A)}{2}
	\enspace.
\]
Combining the two last inequalities gives:
\[
	\bE[f(T)]
	\geq
	\frac{1}{2} \cdot \sum_{A \subseteq \OPT} \left(\Pr[\OPT \cap S = A] \cdot \frac{f(A)}{2}\right)
	=
	\frac{\E[f(\OPT \cap S)]}{4}
	\geq
	\frac{f(\OPT)}{8}
	\enspace,
\]
where the last inequality follows by Lemma~\ref{lem:sampling}.
\end{proof}
}

\begin{corollary} \label{cor:f_M_value}
$\bE[f(M)] \geq f(\OPT)/8$.
\end{corollary}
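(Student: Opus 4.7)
The plan is to reduce the corollary directly to Lemma~\ref{lem:greedy_property} by exploiting the distributional identification between Algorithm~\ref{alg:online} and Algorithm~\ref{alg:simulated} that was asserted just before Section~\ref{sec:analysis}. Specifically, in Algorithm~\ref{alg:online}, the set $M$ is defined as $\Greedy(L)$, where $L$ is the learning-phase sample. By Observation (the statement right after Proposition~\ref{prp:partial_problem}), $L$ contains each element of $E$ independently with probability $1/2$. Thus the random set $S$ used in the hypothesis of Lemma~\ref{lem:greedy_property} is exactly distributed as $L$, and the lemma yields $\E[f(\Greedy(L))] \ge f(\OPT)/8$, i.e., $\E[f(M)] \ge f(\OPT)/8$ for Algorithm~\ref{alg:online}.

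Because Algorithm~\ref{alg:simulated} is constructed precisely so that its set $M$ has the same distribution as the set $M$ of Algorithm~\ref{alg:online} (this is the equivalence justified in the referenced appendix), the same bound $\E[f(M)] \ge f(\OPT)/8$ transfers verbatim to Algorithm~\ref{alg:simulated}, which is the object of study in this section. No further computation is needed: the corollary is a one-line consequence of composing Lemma~\ref{lem:greedy_property}, the observation about the distribution of $L$, and the distributional equivalence between the two algorithms.

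The only thing worth emphasizing is why nothing more subtle is required here. Although $M$ in Algorithm~\ref{alg:simulated} is not literally obtained by first sampling a set $S$ with probability $1/2$ and then running greedy on $S$ (the order in which coin flips and marginal comparisons happen is intertwined with the rest of the procedure), the appendix-level equivalence says that the marginal distribution of $M$ is nevertheless the same. So there is no genuine obstacle at this step; the real work was front-loaded into establishing Lemma~\ref{lem:greedy_property} (which gives a constant-factor bound for greedy on a random half of the ground set) and the equivalence of the online and offline descriptions of $M$. The corollary simply harvests these ingredients.
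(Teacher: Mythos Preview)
Your proposal is correct and matches the paper's proof essentially line for line: the paper also argues via Algorithm~\ref{alg:online}, invokes Observation~\ref{obs:LpropP} to get that $L$ is an independent $1/2$-sample, applies Lemma~\ref{lem:greedy_property} to $M=\Greedy(L)$, and transfers the bound to Algorithm~\ref{alg:simulated} by the distributional equivalence of the two $M$'s. Your additional paragraph explaining why the intertwined coin flips in Algorithm~\ref{alg:simulated} are not an obstacle is accurate commentary but not needed for the argument.
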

\begin{proof}
We prove the corollary for the set $M$ of Algorithm~\ref{alg:online}, which is fine since it has the same distribution as the set $M$ of Algorithm~\ref{alg:simulated}. Observe that the set $L$ of Algorithm~\ref{alg:online} contains every element with probability $1/2$, independently, and $M$ is the result of applying {\Greedy} to this set. Thus, the corollary holds by Lemma~\ref{lem:greedy_property}.
\end{proof}

\subsection{Second Stage} \label{ssc:second_stage}

In the second stage we use the bounds proved in the first stage to get bounds also on $\bE[w(N)]$ and $\bE[f_w(N)]$.

\begin{lemma} \label{lem:m_n_ratio}
$\bE[w(N)] = p \cdot \bE[w(M)]$.
\end{lemma}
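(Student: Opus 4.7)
The plan is to couple the random choices at each iteration of the main loop of Algorithm~\ref{alg:simulated} and use the tower property. First I would fix an iteration of the while loop, and condition on the full history $H$ of the algorithm up to but not including that iteration. Under this conditioning, the current set $M$, the element $u$ that is about to be processed (\ie, the one in $E'$ maximizing $f(\cdot\mid M)$), the value $w(u)=f(u\mid M)$, and the indicator ``$M+u\in\cI$ and $w(u)\ge 0$'' (call this event ``$u$ is greedy-acceptable'') are all deterministic. The only remaining randomness is the single coin flipped in this iteration that decides whether $u$ joins $M$, $N$, or $N_0$.

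Next I would parse the \texttt{With}/\texttt{OtherwiseWith}/\texttt{Otherwise} construct: conditional on $H$ and on $u$ being greedy-acceptable, $u$ is added to $M$ with probability $1/2$, and, on the complementary event of probability $1/2$, added to $N$ with conditional probability $p$. Hence the unconditional-on-the-coin probabilities are $\Pr[u\in M\mid H]=\frac{1}{2}\cdot\mathbbm{1}[u\text{ greedy-acceptable}]$ and $\Pr[u\in N\mid H]=\frac{p}{2}\cdot\mathbbm{1}[u\text{ greedy-acceptable}]$. Since $w(u)$ is frozen by $H$ (and equals the same value regardless of which of the three bins $u$ lands in, except that it is zeroed out for $N_0$, which contributes nothing either way), the conditional expected contribution of this iteration to $w(N)$ is
\[
\bE\!\left[w(u)\,\mathbbm{1}[u\in N]\,\Big|\,H\right]=\tfrac{p}{2}\,w(u)\,\mathbbm{1}[u\text{ greedy-acceptable}],
\]
which is exactly $p$ times the analogous conditional contribution to $w(M)$, namely $\tfrac{1}{2}\,w(u)\,\mathbbm{1}[u\text{ greedy-acceptable}]$.

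Finally I would take expectations over $H$, sum over all iterations of the main loop, and use $w(M)=\sum_{u\in M}w(u)$ and $w(N)=\sum_{u\in N}w(u)$ together with linearity of expectation to conclude $\bE[w(N)]=p\cdot\bE[w(M)]$. I do not expect a serious obstacle: the argument is essentially per-iteration bookkeeping. The one subtle point worth spelling out is that the weight $w(u)$ assigned to $u$ depends only on $M_u$ (the state of $M$ just before $u$ is processed) and thus is identical on the events ``$u\in M$'' and ``$u\in N$''; this is precisely what allows the factor $p$ to be pulled out cleanly and is why the else-branch (where $u$ is not greedy-acceptable and $w(u)$ is set to $0$) causes no trouble at all.
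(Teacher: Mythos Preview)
Your proposal is correct and follows essentially the same approach as the paper: fix the history before each iteration, observe that conditionally the expected increment to $w(N)$ is $\tfrac{p}{2}\,w(u)$ while that to $w(M)$ is $\tfrac12\,w(u)$ on the greedy-acceptable event (and both are zero otherwise), then sum and apply linearity of expectation. The paper's write-up is a bit terser, but the argument is identical.
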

\ifbool{shortVersion}{}{
\begin{proof}
Consider an arbitrary element $u \in E$ processed by Algorithm~\ref{alg:simulated}, and let us fix all history up to the point before $u$ is processed. If $M + u \not \in \cI$ or $f(u \mid M) \leq 0$, then there is a zero expected increase in both $w(M)$ and $w(N)$ during the processing of $u$. Otherwise, the expected increase in $w(M)$ is $w(u)/2$, while the expected increase in $w(N)$ is $(p/2) \cdot w(u)$. Hence, if we denote by $\Delta_M$ the expected increase in $w(M)$ and by $\Delta_N$ the expected increase in $w(N)$, then:
\[
	\Delta_N
	=
	\frac{(p/2) \cdot w(u)}{w(u)/2} \cdot \Delta_M
	=
	p \cdot \Delta_M
	\enspace.
\]
The lemma now follows from the linearity of expectation.
\end{proof}
}

\ifbool{shortVersion}{}{
Getting a bound on $f_w(N)$ is somewhat more involved.
}

\begin{lemma} \label{lem:f_w_m_n_ratio}
$\bE[f_w(N)] \geq \frac{f(\varnothing)}{1 + p} + \frac{p(1 - p)}{1 + p} \cdot \bE[f_w(M)]$.
\end{lemma}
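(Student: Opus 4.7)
My plan is to analyze Algorithm~\ref{alg:simulated} by conditioning on the entire trajectory that produces $M$: the sequence of elements processed, together with the state of $M$ at each step. This trajectory determines the final set $M$, the set $R_M$ of elements that were relevant when processed (i.e., $u \notin M$ with $M_u+u\in\cI$ and $f(u\mid M_u)\geq 0$) but were not added to $M$, and the ``surrogate'' marginal weights $\tilde w(u)=f(u\mid M_u)$. The crucial observation is that, conditional on this trajectory, the only randomness left in Algorithm~\ref{alg:simulated} is a single independent coin flip for each $u\in R_M$ sending $u$ to $N$ with probability $p$ and to $N_0$ with probability $1-p$. Hence $N$ is distributed as a random subset of $R_M$ in which each element is included independently with probability $p$, and since $w=\tilde w$ on $N$, we have $f_w(N)=f_{\tilde w}(N)$.

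Because $f_{\tilde w}$ is non-negative and submodular by Property~\ref{prop:properties_preserved}, Lemma~\ref{lem:sampling} applies and yields the conditional lower bound
\[
  \bE[f_w(N)\mid\text{trajectory}]\;\geq\;(1-p)\,f(\varnothing)+p\cdot f_{\tilde w}(R_M).
\]
To obtain a useful companion bound, I would apply Lemma~\ref{lem:sampling} a second time, now to the non-negative submodular function $h(S)=f_{\tilde w}(M\cup S)$ with the same $p$-sample $N$ of $R_M$, giving $\bE[f_w(M\cup N)\mid\text{trajectory}]\geq(1-p)\,f_{\tilde w}(M)+p\cdot f_{\tilde w}(M\cup R_M)$. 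Combined with the submodularity inequality $f_w(N)+f_w(M)\geq f_w(M\cup N)+f(\varnothing)$ (which holds because $M\cap N=\varnothing$ and $f_w(\varnothing)=f(\varnothing)$), and Lemma~\ref{lem:M_equalities} to rewrite $f_{\tilde w}(M)=f_w(M)=f(M)$, this yields a second conditional bound $\bE[f_w(N)\mid\text{trajectory}]\geq f(\varnothing)+p\bigl(f_{\tilde w}(M\cup R_M)-f(M)\bigr)$.

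The main obstacle is combining these two bounds so as to eliminate the intractable quantities $f_{\tilde w}(R_M)$ and $f_{\tilde w}(M\cup R_M)$, replacing them by the controllable $f(M)$ and $f(\varnothing)$. I plan to take an appropriate convex combination of the two conditional bounds, exploiting both the submodularity relation $f_{\tilde w}(M)+f_{\tilde w}(R_M)\geq f_{\tilde w}(M\cup R_M)+f(\varnothing)$ and the underlying symmetry of Algorithm~\ref{alg:simulated} between $M$ and the ``complement of $M$ within the relevant elements'' (every relevant element is equally likely to go into $M$ or not). The coefficients $\tfrac{1}{1+p}$ and $\tfrac{p(1-p)}{1+p}$ in the statement of the lemma should emerge from this combination as the unique choice that makes the algebra cancel; taking an unconditional expectation over trajectories then delivers the claimed inequality. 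The delicate part of the argument will be carrying out this algebraic bookkeeping so that the resulting bound is simultaneously tight enough at $p\to 0$ (where it degenerates to $\bE[f_w(N)]\geq f(\varnothing)$) and nontrivial for larger $p$.
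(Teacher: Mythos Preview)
Your setup is sound: conditioning on the full $M$-trajectory, $N$ is indeed an independent $p$-sample of $R_M$, and $f_w(N)=f_{\tilde w}(N)$, $f_w(M)=f_{\tilde w}(M)=f(M)$, $f_w(M\cup N)=f_{\tilde w}(M\cup N)$ all hold. Bounds (A) and (B) are correctly derived. The gap is that (A) and (B) are not two independent pieces of information: applying the submodularity inequality $f_{\tilde w}(R_M)\geq f_{\tilde w}(M\cup R_M)+f(\varnothing)-f(M)$ to (A) produces exactly (B). So any convex combination of (A) and (B), together with that submodularity relation, collapses to (B) alone, namely
\[
\bE[f_w(N)\mid\text{trajectory}]\;\geq\;f(\varnothing)+p\bigl(f_{\tilde w}(M\cup R_M)-f(M)\bigr).
\]
At this point the only available lower bound on $f_{\tilde w}(M\cup R_M)$ is non-negativity, yielding $\bE[f_w(N)]\geq f(\varnothing)-p\,\bE[f(M)]$. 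This is strictly weaker than the claim whenever $f(\varnothing)<2\,\bE[f(M)]$, e.g.\ for normalized $f$. The ``symmetry'' you invoke does not help: whether an element is \emph{relevant} depends on the evolving $M$, so $M$ and $R_M$ are not exchangeable, and there is no relation of the form $\bE[f_{\tilde w}(R_M)]\geq c\,\bE[f(M)]$ available.

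What is missing is precisely the content of the paper's potential-function step. The paper tracks $\Phi(M,N)=f_w(N)-\tfrac{p}{1+p}f_w(M\cup N)$ incrementally: at each step, conditioned on the history, a relevant element enters $N$ with probability $p/2$ and enters $M\cup N$ with probability $(1+p)/2$, and submodularity of $f_w$ gives $f_w(u\mid N_u)\geq f_w(u\mid M_u\cup N_u)$, so the expected increment of $\Phi$ is non-negative. This yields $\bE[f_w(N)]\geq\tfrac{1}{1+p}f(\varnothing)+\tfrac{p}{1+p}\bE[f_w(M\cup N)]$, a relation that cannot be obtained by conditioning on the final $M$ and applying a sampling lemma, because it exploits the coupling of $N$ to $M$ \emph{during} the process rather than after $M$ is frozen. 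Combining this with your step-2 bound $\bE[f_w(M\cup N)]\geq(1-p)\,\bE[f_w(M)]$ (which the paper also uses, via Lemma~\ref{lem:max_prob_bound}) then gives the lemma.
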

\ifbool{shortVersion}{}{
\begin{proof}
Consider the auxiliary function $\Phi(M, N) = f_w(N) - \frac{p}{1 + p} \cdot f_w(M \cup N)$. Observe that $\Phi$ is in fact a random function since it depends on the random vector $w$. Our first objective in this proof is to show that
\begin{equation} \label{eq:phi_bound}
	\bE[\Phi(M, N)] \geq (1 + p)^{-1} \cdot f(\varnothing)
	\enspace.
\end{equation}
For that purpose, we define $\Delta_u$ as the change in $\Phi(M, N)$ when $u$ is processed by Algorithm~\ref{alg:simulated}. More formally, $\Delta_u = \Phi(M'_u, N'_u) - \Phi(M_u, N_u)$. Additionally, let $R_u$ be an event encoding all the random decisions of Algorithm~\ref{alg:simulated} up to the moment before it processes $u$, and let $\cR_u$ be the set of all such events. Then, since, by Property~\ref{prop:w_always_smaller}, $\Phi(\varnothing, \varnothing) = (1 + p)^{-1} \cdot f_w(\varnothing) = (1 + p)^{-1} \cdot f(\varnothing)$:
\[
	\bE[\Phi(M, N)] - (1 + p)^{-1} \cdot f(\varnothing)
	=
	\sum_{u \in E} \bE[\Delta_u]
	=
	\sum_{u \in E} \sum_{R_u \in \cR_u} \left(\Pr[R_u] \cdot \bE[\Delta_u \mid R_u]\right)
	\enspace.
\]
Thus, to prove Inequality~\eqref{eq:phi_bound} it is enough to show $\bE[\Delta_u \mid R_u] \geq 0$ for an arbitrary element $u \in E$ and event $R_u \in \cR_u$. Notice that conditioned on $R_u$, the sets $M_u$ and $N_u$ and the part of the vector $w$ corresponding to $E'_u$ are all deterministic. If $M_u + u \not \in \cI$ or $f(u \mid M_u) < 0$, then we are done since $M_u = M'_u$ and $N_u = N'_u$. Thus, we only need to consider the case $M_u + u \in \cI$ and $f(u \mid M_u) \geq 0$. In this case $u$ is added to $M$ with probability $1/2$ and to $N$ with probability $p/2$. Thus,
\begin{align*}
	\bE[f_w(N'_u) -{} & f_w(N_u) \mid R_u]
	=
	\Pr[u \in N \mid R_u] \cdot \bE[f_w(u \mid N_u) \mid R_u]\\
	\geq{} &
	\Pr[u \in N \mid R_u] \cdot \bE[f_w(u \mid M_u \cup N_u) \mid R_u] \\
	={} &
	\frac{p/2}{1/2 + p/2} \cdot \Pr[u \in M \cup N \mid R_u] \cdot \bE[f_w(u \mid M_u \cup N_u) \mid R_u]\\
	={} &
	\frac{p}{1 + p} \cdot \bE[f_w(M'_u \cup N'_u) - f_w(M_u \cup N_u) \mid R_u]
	\enspace,
\end{align*}
where the inequality holds since $f_w$ is submodular by Property~\ref{prop:properties_preserved}. Rearranging the last inequality yields $\bE[\Delta_u \mid R_u] \geq 0$, which completes the proof of Inequality~\eqref{eq:phi_bound}.

Next, observe that for an element to enter $N$, three things have to happen: first it must hold that $M_u + u \in \cI$ and $f(u \mid M_u) \geq 0$, then the algorithm must randomly decide not to add the element to $M$ and finally the algorithm must randomly decide (with probability $p$) to add the element to $N$. The last decision does not affect the future development of $M$ and $w$, and thus, even conditioned on $M$ and $w$, every element belongs to $N$ with probability at most $p$. To use the last observation, let $g_{w,M}(S) = f_w(M \cup S)$. One can observe that $g_{w,M}$ is non-negative and submodular. Thus, by Lemma~\ref{lem:max_prob_bound}:
\[
	\bE[f_w(M \cup N) \mid M, w]
	=
	\bE[g_{w,M}(N) \mid M, w]
	\geq
	(1 - p) \cdot g_{w,M}(\varnothing)
	=
	(1 - p) \cdot f_w(M)
	\enspace.
\]
By the law of total expectation, the above inequality implies:
\[
	\bE[f_w(M \cup N)]
	\geq
	(1 - p) \cdot \bE[f_w(M)]
	\enspace,
\]
which implies the lemma when combined with Inequality~\eqref{eq:phi_bound}.
\end{proof}
}

\subsection{Third Stage}

In the third stage we use the bounds proved in the first and second stages to get bounds on $\bE[w(Q \cap N)]$ and $\bE[f_w(Q \cap N)]$. For that purpose, let us define $\OPT_w(N)$ as the maximum weight independent set in $N$ with respect to the weight function $w$. The fact that {\Linear} is $\alpha$-competitive for $\cM$ implies the following observation.

\begin{observation} \label{obs:linear_approximation_ratio}
$\bE[w(Q \cap N)] = \bE[w(Q)] \geq \frac{1}{\alpha} \cdot \bE[w(\OPT_w(N \cup N_0))] = \frac{1}{\alpha} \cdot \bE[w(\OPT_w(N))]$.
\end{observation}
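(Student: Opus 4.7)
The plan is to decompose the observation into three independent pieces, all of which reduce to straightforward bookkeeping about the weights $w$ that Algorithm~\ref{alg:simulated} constructs. The key structural fact is that every element placed into $N_0$ has its weight set to $0$ by the algorithm: in the ``Else'' branch, $w(u)$ is explicitly set to $0$, and in the branch where $M+u \in \cI$ and $f(u\mid M)\geq 0$ but the random choice lands on the ``otherwise'' case, $w(u)$ is again set to $0$ before $u$ is placed in $N_0$. Thus $w(u)=0$ deterministically for every $u\in N_0$.

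From this the first equality $\bE[w(Q \cap N)] = \bE[w(Q)]$ is immediate pointwise: since $Q \subseteq N \cup N_0$ and $w$ vanishes on $N_0$, we have $w(Q)=w(Q\cap N)+w(Q\cap N_0)=w(Q\cap N)$. The final equality $\bE[w(\OPT_w(N \cup N_0))] = \bE[w(\OPT_w(N))]$ is argued similarly: any maximum-weight independent set in $N\cup N_0$ attains the same total $w$-weight as its intersection with $N$ (removing the zero-weight $N_0$ elements preserves independence and the objective), and conversely $\OPT_w(N)$ is a feasible candidate in the larger restriction, so the two maxima are equal as random variables.

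The middle inequality is where the $\alpha$-competitiveness assumption enters. Condition on all randomness of Algorithm~\ref{alg:simulated} up to the moment just before {\Linear} is invoked; this fixes the ground set $N \cup N_0$, the weight vector $w$, and the set $L = E \setminus (N\cup N_0)$ that {\Linear} has already ``seen''. Under this conditioning, {\Linear} faces an instance of Partial-{\SP} on the matroid $\cM$ with known set $L$, with the elements of $N \cup N_0$ arriving in uniformly random order (as assumed in the algorithm description). By Proposition~\ref{prp:partial_problem} together with the hypothesis that {\Linear} is $\alpha$-competitive on $\cM$, the conditional expected weight of $Q$ is at least $w(\OPT_w(N \cup N_0))/\alpha$. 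Taking the outer expectation over the conditioning yields $\bE[w(Q)] \geq \bE[w(\OPT_w(N\cup N_0))]/\alpha$.

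I do not expect a real obstacle here: the observation is essentially a consequence of the construction and of the already-established Proposition~\ref{prp:partial_problem}. The only point that requires mild care is that the $\alpha$-competitiveness guarantee is applied to the correct instance—namely the Partial-{\SP} instance on $\cM$ with known set $L=E\setminus(N\cup N_0)$ and weights $w$—and that the competitiveness bound is invoked in its conditional form before integrating, so that the randomness of $N$, $N_0$, and $w$ is treated consistently with the randomness over the arrival order that {\Linear}'s guarantee assumes.
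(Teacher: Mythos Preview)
Your proposal is correct and matches the paper's approach: the paper states this as an observation with no explicit proof, simply noting beforehand that it follows from {\Linear} being $\alpha$-competitive for $\cM$. Your argument is a faithful and complete elaboration of the three implicit steps (zero weights on $N_0$ for the two equalities, and the conditional application of Proposition~\ref{prp:partial_problem} for the inequality).
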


Thus, to get a lower bound on $\bE[w(Q \cap N)]$ it suffices to get a lower bound on $\bE[w(\OPT_w(N))]$.

\begin{lemma} \label{lem:opt_n_bound}
$\bE[w(\OPT_w(N))] \geq \frac{p}{1 + p} \cdot \bE[w(M)]$.
\end{lemma}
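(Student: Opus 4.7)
My plan rests on a structural observation about Algorithm~\ref{alg:simulated}: elements are processed in non-increasing order of the weight $w(u_i)=f(u_i\mid M^i)$. Indeed, $f$ is submodular and $M$ grows monotonically, so marginals at successive maximum-marginal selections are non-increasing; in particular, among passing elements (for which $w$ equals the chosen marginal), weights are non-increasing in the processing order. Since the matroid greedy on a set of weighted elements, run in decreasing-weight order, computes a maximum-weight independent subset, it follows that applying the greedy to $N$ in Algorithm~\ref{alg:simulated}'s processing order yields $\OPT_w(N)$. Using the standard fact that $\operatorname{span}(\OPT_w(S))=\operatorname{span}(S)$ for any set $S$ with non-negative weights, we get, with $B_i=\{u_j:j<i\}$, the clean characterization
\[
u_i\in\OPT_w(N)\iff u_i\in N\text{ and }u_i\notin\operatorname{span}(N\cap B_i).
\]

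Let $A_i$ be the indicator that $u_i$ passes the greedy test on $M^i$, and let $\mathcal H_{i-1}$ denote the history through step $i-1$. The quantities $A_i$, $w(u_i)$, $N\cap B_i$, and the event $\{u_i\in\operatorname{span}(N\cap B_i)\}$ are all $\mathcal H_{i-1}$-measurable, whereas $\xi_i,\eta_i$ are independent of $\mathcal H_{i-1}$ with means $1/2$ and $p$. Decoupling,
\[
\bE[w(M)]=\tfrac12\sum_i\bE[w(u_i)A_i],\qquad \bE[w(\OPT_w(N))]=\tfrac{p}{2}\sum_i\bE[w(u_i)A_i\mathbf{1}[u_i\notin\operatorname{span}(N\cap B_i)]],
\]
so the lemma reduces to proving
\[
\sum_i\bE\bigl[w(u_i)A_i\mathbf{1}[u_i\in\operatorname{span}(N\cap B_i)]\bigr]\ \leq\ p\cdot\sum_i\bE\bigl[w(u_i)A_i\mathbf{1}[u_i\notin\operatorname{span}(N\cap B_i)]\bigr].
\]

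To establish this inequality I would use a matroid-exchange charging argument combined with the probabilistic structure of $N$. Condition on the full trajectory of $M$ (equivalently, on all $\xi_j$); this determines the pool $R$ of passing elements not placed in $M$, and each $u\in R$ ends up in $N$ independently with probability $p$. When $u_i$ is spanned by $N\cap B_i$ there is a circuit $C\subseteq (N\cap B_i)+u_i$ through $u_i$, and since processing is in decreasing-$w$ order, each $v\in C\setminus\{u_i\}$ satisfies $w(v)\ge w(u_i)$. One charges $w(u_i)$ of blocked weight to these spanning witnesses $v\in N$; the probabilistic damping that each witness is in $N$ only with probability $p$ is what is supposed to absorb the inflation from many-to-one charging in expectation and deliver the factor $p$ on the right.

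The principal obstacle is this charging step: in a low-rank matroid a single element of $N$ can block arbitrarily many subsequent passing elements, so one cannot hope for a one-to-one match. The resolution exploits the fact that blocking requires an entire spanning subset of $R$ to be simultaneously in $N$, each element independently with probability $p$; this yields a multiplicative suppression that matches the factor $p$. Making this aggregation precise—while accounting for the correlations introduced by the $\xi$-randomness across steps (which changes the trajectory of $M$, and hence $R$ and the processing order itself)—is the technical crux of the argument.
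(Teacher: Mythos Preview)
Your setup and reduction are correct and essentially coincide with the paper's first step: the set $N'=\OPT_w(N)$ is exactly the greedy-on-$N$ set in processing order, and the identities
\[
\bE[w(M)]=\tfrac12\sum_i\bE[w(u_i)A_i],\qquad
\bE[w(\OPT_w(N))]=\tfrac{p}{2}\sum_i\bE\bigl[w(u_i)A_i\,\mathbf{1}[u_i\notin\operatorname{span}(N\cap B_i)]\bigr]
\]
together with the target inequality $U\le pT$ are all fine.

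The gap is in the charging step, and it is not merely a matter of ``making the aggregation precise.'' Your plan is to condition on the full trajectory of $M$ (equivalently on all $\xi_j$), so that membership in $N$ is i.i.d.\ Bernoulli$(p)$ over the pool $R$, and then to argue $U\le pT$ using circuit witnesses. This conditional inequality is false. Take a rank-$1$ matroid with three parallel elements of equal weight $1$ and condition on the (probability-$1/8$) trajectory where none enters $M$; then
\[
T=1+(1-p)+(1-p)^2,\qquad U=p+(2p-p^2),
\]
and $U\le pT$ would force $2\le p$. So the inequality you aim to prove cannot be established trajectory-by-trajectory after conditioning on $M$; the $\xi$-randomness is not a correlational nuisance to be ``accounted for'' but is essential to the bound. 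Relatedly, your heuristic that ``blocking requires an entire spanning subset of $R$ to be simultaneously in $N$'' does not give the needed suppression: in rank $1$, blocking of the $j$-th element requires only \emph{some} earlier element to be in $N$, so the blocking probability is $1-(1-p)^{j-1}\approx (j-1)p$, not $p^{j-1}$.

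The paper's argument exploits precisely the randomness you conditioned away. It maintains an auxiliary independent set $H$ that contains $M$ and some elements of $N'$. Whenever a passing element $u$ has $N'+u\notin\cI$, a deterministic charge of $(p/2)w(u)$ is placed on the element $\phi(u)\in H\cap N'$ that would be evicted from $H$ if $u$ entered $M$; crucially, with probability $1/2$ the element $u$ \emph{does} enter $M$ and $\phi(u)$ leaves $H\cap N'$, terminating its exposure to future charges. Thus each $v\in N'$ is charged at most $(p/2)w(v)$ per event, and the number of events is stochastically dominated by a Geometric$(1/2)$ variable, giving $\bE[c(v)]\le p\,w(v)$. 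Summing yields $p\,\bE[w(M)]\le (1+p)\,\bE[w(N')]$. The geometric-stopping idea driven by the $M$-coin is the missing ingredient in your proposal; without it, the many-to-one blocking in low-rank situations cannot be controlled.
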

\begin{proof}
For this proof we need Algorithm~\ref{alg:simulated} to maintain two additional sets $N'$ and $H$. The set $N'$ is the set of elements of $N$ that are not spanned by previous $N$-elements when added to $N$. More formally, the set $N'$ is originally empty. Whenever an element $u$ is added to $N$, it is also added to $N'$ if it is not spanned by previous elements of $N$.
Clearly, the set $N'\subseteq N$ at the end of the procedure is an independent set, and we can,
thus, use $\E[w(N')]$ as a lower bound on $\E[w(\OPT_w(N))]$.\footnote{In fact, $N'$ is a maximum weight independent set in $N$ since the process creating it is equivalent to running greedy on $N$ with the objective function $w$.} Hence, to prove the lemma it is enough to
show $\E[w(N')]\geq \frac{p}{1+p}\E[w(M)]$.

The set $H$ is maintained by the following rules:
\begin{compactitem}
	\item Originally $H$ is empty.
	\item Whenever an element $u$ is added to $N'$, it is also added to $H$ if $H + u \in \cI$.
	\item Whenever an element $u$ is added to $M$, it is also added to $H$. If that addition makes $H$ non-independent, then an arbitrary element $\phi(u) \in H \cap N'$ such that $H - \phi(u) \in \cI$ (such an element exists since $H \setminus N' = M$ is independent) is removed from $H$.
\end{compactitem}

Consider now an arbitrary element $u \in E$ processed by Algorithm~\ref{alg:simulated}, and let us fix all history up to the point before $u$ is processed. Notice that at this point $w(u)$ is no longer a random variable. We are interested in the expected increase of $w(M)$ and $w(N')$ when $u$ is processed. If $M + u \not \in \cI$ or $f(u \mid M) < 0$, then Algorithm~\ref{alg:simulated} does not add $u$ to either $M$ or $N$, and thus, there is a zero increase in both $w(M)$ and $w(N')$. Otherwise, if $N' + u \in \cI$, then the expected increase in $w(M)$ is $w(u)/2$, while the expected increase in $w(N')$ is $(p/2) \cdot w(u)$. Finally, we need to consider the case that $M + u \in \cI$ and $f(u \mid M) \geq 0$ but $N' + u \not \in \cI$. In this case the expected increase in $w(M)$ is still $w(u)/2$, but the expected increase in $w(N')$ is $0$. To fix that, we charge $(p/2) \cdot w(u)$ to the element of $H$ that becomes $\phi(u)$ if $u$ is added to $M$ (regardless of whether $u$ is really added to $M$ or not).

Let $c(u)$ be the amount charged to an element $u$. By the above discussion we clearly have:
\begin{equation} \label{eq:connection_charges}
	\frac{\bE[w(M)]}{\bE[w(N')] + \sum_{u \in E} \bE[c(u)]}
	=
	\frac{1/2}{p/2}
	=
	p^{-1}
	\enspace.
\end{equation}

To complete the proof we upper bound the charge to every element $u \in E$. Let us fix all history up to the point after $u$ is processed. If $u \not \in H \cap N'$ at this point, then, by definition, $c(u) = 0$. Otherwise, from this point on, till $u$ leaves $H \cap N'$, every arriving element $u'$ can have one of two behaviors:
\begin{compactitem}
	\item $u'$ does not cause a charge to be added to $u$.
	\item $u'$ causes $u$ to be charged by $(p/2) \cdot w(u') \leq (p/2) \cdot w(u)$ (the inequality holds because Algorithm~\ref{alg:simulated} considers elements in a non-increasing weights order). In this case, with probability $1/2$, $u'$ is added to $M$ and $u$ is removed from $H \cap N'$ (and therefore, will not be charged again in the future).
\end{compactitem}
From the above two options we learn that when $u$ gets into $H \cap N'$, $c(u)$ is upper bounded by $(p/2) \cdot w(u) \cdot X$, where $X$ is distributed according to the geometric distribution $G(1/2)$. Thus, in this case:
\[
	\bE[c(u)]
	\leq
	(p/2) \cdot w(u) \cdot \bE[X]
	=
	p \cdot w(u)
	\enspace.
\]
Combining both cases, we get that the following inequality always holds:
\[
	\bE[c(u)]
	\leq
	p \cdot \bE[w(N' \cap \{u\})]
	\enspace.
\]

Plugging the last inequality into~\eqref{eq:connection_charges} gives:
\[
	\frac{\bE[w(M)]}{\bE[w(N')] + p \cdot \bE[w(N')]}
	\leq
	\frac{1}{p}
	\Rightarrow
	\frac{p}{1 + p} \cdot \bE[w(M)]
	\leq
	\bE[w(N')]
	\enspace.
	\qedhere
\]
\end{proof}

\begin{corollary} \label{cor:out_weight}
$\bE[w(Q \cap N)] \geq \frac{p}{\alpha(1 + p)} \cdot \bE[w(M)]$.
\end{corollary}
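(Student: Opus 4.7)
The plan is to simply chain the two preceding results, since Corollary~\ref{cor:out_weight} is an immediate consequence of Observation~\ref{obs:linear_approximation_ratio} and Lemma~\ref{lem:opt_n_bound}. First, I would invoke Observation~\ref{obs:linear_approximation_ratio}, which provides the bound $\bE[w(Q \cap N)] \geq \alpha^{-1} \cdot \bE[w(\OPT_w(N))]$ by exploiting the $\alpha$-competitiveness of {\Linear} on the Partial-\SP\ instance it faces (where the fact that {\Linear} never picks elements outside $N \cup N_0$, combined with $w$ vanishing on $N_0$, is what lets us replace $\OPT_w(N \cup N_0)$ with $\OPT_w(N)$).

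Next, I would apply Lemma~\ref{lem:opt_n_bound}, giving $\bE[w(\OPT_w(N))] \geq \frac{p}{1+p} \cdot \bE[w(M)]$. Combining the two yields the claimed bound
\[
\bE[w(Q \cap N)] \geq \frac{1}{\alpha} \cdot \bE[w(\OPT_w(N))] \geq \frac{p}{\alpha(1+p)} \cdot \bE[w(M)].
\]

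There is no substantial obstacle here; all the work has already been done in the preceding observation and lemma. The only subtlety worth remarking on is justifying the step $\bE[w(\OPT_w(N \cup N_0))] = \bE[w(\OPT_w(N))]$ in Observation~\ref{obs:linear_approximation_ratio}, which holds because elements of $N_0$ are assigned $w$-weight zero by Algorithm~\ref{alg:simulated}, so they contribute nothing to any maximum-$w$-weight independent set. Everything else is a one-line concatenation of inequalities.
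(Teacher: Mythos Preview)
Your proposal is correct and matches the paper's own proof essentially verbatim: the paper simply says the corollary ``follows immediately from Observation~\ref{obs:linear_approximation_ratio} and Lemma~\ref{lem:opt_n_bound},'' which is exactly the chaining you describe. The extra remark you make about why $\OPT_w(N\cup N_0)$ can be replaced by $\OPT_w(N)$ is also accurate and is precisely the reasoning implicit in Observation~\ref{obs:linear_approximation_ratio}.
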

\begin{proof}
Follows immediately from Observation~\ref{obs:linear_approximation_ratio} and Lemma~\ref{lem:opt_n_bound}.
\end{proof}

The bound on $\bE[f_w(Q \cap N)]$ is obtained by showing that
$\bE[w(Q \cap N) - f_w(Q \cap N)] \leq q \cdot \bE[w(N) - f_w(N)]$ for some value $q \geq \frac{1}{\alpha}$.
We later show that this inequality always holds for $q=1$, which already allows us to
prove Theorem~\ref{thm:main_nonmonotone}. However, it turns out that by exploiting some basic
properties of many algorithms for (linear) \SP, the same inequality can be shown for
smaller values of $q$, which leads to stronger competitive ratios.

\begin{proposition} \label{prop:general_ratio_nonmonotone}
If $\bE[w(Q \cap N) - f_w(Q \cap N)] \leq q \cdot \bE[w(N) - f_w(N)]$ for some value $q \geq \frac{1}{\alpha}$,
then $\bE[f(Q \cap N)] \geq \bE[f_w(Q \cap N)] \geq \frac{p(1 - 2pq\alpha)}{8\alpha (1 + p)} \cdot f(\OPT)$.
\end{proposition}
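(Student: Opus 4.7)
The plan is to start from the rearrangement of the hypothesis, plug in each of the bounds already established in the preceding stages, and reduce everything to a bound in terms of $\E[f(M)]$, to which Corollary~\ref{cor:f_M_value} can be applied. The first inequality in the claim, $\E[f(Q\cap N)]\geq \E[f_w(Q\cap N)]$, is immediate from Property~\ref{prop:w_always_smaller}, so I only need to prove the second one.

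First I would rewrite the hypothesis as
\[
\E[f_w(Q\cap N)] \;\geq\; \E[w(Q\cap N)] \;-\; q\cdot\E[w(N)] \;+\; q\cdot\E[f_w(N)].
\]
Into this inequality I would substitute the three bounds from earlier: Corollary~\ref{cor:out_weight} for $\E[w(Q\cap N)]$, Lemma~\ref{lem:m_n_ratio} for $\E[w(N)]$, and Lemma~\ref{lem:f_w_m_n_ratio} for $\E[f_w(N)]$. To get a single quantity on the right-hand side, I would use the identity $\E[w(M)]=\E[f_w(M)]-f(\varnothing)$, which follows from Lemma~\ref{lem:M_equalities}, to rewrite everything as a linear combination of $\E[f_w(M)]$ and $f(\varnothing)$.

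The routine (but slightly fiddly) algebra then yields
\[
\E[f_w(Q\cap N)] \;\geq\; \frac{p(1-2pq\alpha)}{\alpha(1+p)}\cdot \E[f_w(M)] \;+\; \frac{-p+q\alpha(1+p+p^2)}{\alpha(1+p)}\cdot f(\varnothing).
\]
The hypothesis $q\geq 1/\alpha$ gives $q\alpha(1+p+p^2)\geq 1+p+p^2>p$, so the coefficient of $f(\varnothing)$ is non-negative and that term can be dropped (since $f(\varnothing)\geq 0$). Finally, applying Lemma~\ref{lem:M_equalities} once more to identify $\E[f_w(M)]=\E[f(M)]$, and then Corollary~\ref{cor:f_M_value} to obtain $\E[f(M)]\geq f(\OPT)/8$, gives the desired lower bound. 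If $1-2pq\alpha\leq 0$, the stated bound is trivially true since $f_w\geq 0$ by Property~\ref{prop:properties_preserved}, so no separate argument is needed for that case.

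I do not expect any conceptual obstacle; the only thing to be careful about is verifying that the $f(\varnothing)$ coefficient is non-negative precisely under the assumed range $q\geq 1/\alpha$, which is exactly what justifies dropping it and closing the proof.
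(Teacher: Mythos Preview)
Your proposal is correct and follows essentially the same approach as the paper: rearrange the hypothesis, substitute Corollary~\ref{cor:out_weight}, Lemma~\ref{lem:m_n_ratio}, and Lemma~\ref{lem:f_w_m_n_ratio}, convert everything to $\E[f(M)]$ via Lemma~\ref{lem:M_equalities}, and finish with Corollary~\ref{cor:f_M_value}. The only cosmetic difference is that the paper handles the $f(\varnothing)$ contributions by introducing inequalities chosen so that these terms cancel, whereas you carry the exact coefficient of $f(\varnothing)$ to the end and then use $q\alpha\geq 1$ to verify it is non-negative; both bookkeepings are equivalent.
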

\ifbool{shortVersion}{}{
\begin{proof}
Observe that:
\begin{align} \label{eqn:first_bound}
	\bE[f_w(Q \cap N)]
	\geq{} &
	\bE[w(Q \cap N)] - q \cdot \bE[w(N) - f_w(N)]\\ \nonumber
	\geq{} &
	\frac{p}{\alpha(1 + p)} \cdot \bE[w(M)] - q \cdot \bE[w(N)] + q \cdot \bE[f_w(N)]\\ \nonumber
	\geq{} &
	\frac{p}{\alpha(1 + p)} \cdot \bE[w(M)] - qp \cdot \bE[w(M)] + \frac{q \cdot f(\varnothing)}{1 + p} + \frac{qp(1 - p)}{1 + p} \cdot \bE[f_w(M)]
	\enspace,
\end{align}
where the first inequality holds by the assumption of the proposition, the second by Corollary~\ref{cor:out_weight} and the last by Lemmata~\ref{lem:m_n_ratio} and~\ref{lem:f_w_m_n_ratio}.

Lemma~\ref{lem:M_equalities} implies the following inequalities:
\begin{gather*}
	\frac{p}{\alpha(1 + p)} \cdot \bE[w(M)]
	=
	\frac{p}{\alpha(1 + p)} \cdot (\bE[f(M)] - f(\varnothing))
	\geq
	\frac{p}{\alpha(1 + p)} \cdot \bE[f(M)] - \frac{q \cdot f(\varnothing)}{1 + p}
	\enspace,\\
	- \bE[w(M)] = -(\bE[f(M)] - f(\varnothing)) \geq -\bE[f(M)]
	\qquad
	\text{and}
	\qquad
	\bE[f_w(M)]
	=
	\bE[f(M)]
	\enspace.
\end{gather*}
Plugging these inequalities into Inequality~\eqref{eqn:first_bound} yields:
\begin{align*}
	\bE[f_w(Q \cap N)]
	\geq{} &
	\frac{p}{\alpha(1 + p)} \cdot \bE[f(M)] - qp \cdot \bE[f(M)] + \frac{qp(1 - p)}{1 + p} \cdot \bE[f(M)]\\
	={} &
	\frac{p(1 - \alpha q(1 + p) + \alpha q(1 - p))}{\alpha(1 + p)} \cdot \bE[f(M)]
	=
	\frac{p(1 - 2qp\alpha)}{\alpha (1 + p)} \cdot \bE[f(M)]
	\enspace.
\end{align*}
The proposition now follows by combining the last inequality with Corollary~\ref{cor:f_M_value}.
\end{proof}
}

The following observation proves that one can use $q = 1$ for every algorithm {\Linear}.

\begin{observation} \label{obs:diff_monotone}
$w(S) - f_w(S)$ is a monotone function of $S$, hence, $w(Q \cap N) - f_w(Q \cap N) \leq w(N) - f_w(N)$.
\end{observation}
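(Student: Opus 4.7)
The plan is to rewrite the quantity $w(S) - f_w(S)$ in a form that makes its monotonicity in $S$ transparent, and then apply this to the nested pair $Q \cap N \subseteq N$.

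Starting from the definition
\[
  f_w(S) = \min_{A \subseteq S}\{f(A) + w(S \setminus A)\},
\]
I would subtract from $w(S)$ and swap the sign of the optimization to obtain
\[
  w(S) - f_w(S)
  = \max_{A \subseteq S}\{w(S) - f(A) - w(S \setminus A)\}
  = \max_{A \subseteq S}\{w(A) - f(A)\},
\]
using the additivity $w(S) - w(S\setminus A) = w(A)$ for the (nonnegative) weight function produced by Algorithm~\ref{alg:simulated}. In this form the expression is visibly a monotone set function: for $S \subseteq T$, every subset $A \subseteq S$ is also a subset of $T$, so the maximum on the right can only grow.

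Applying this to $Q \cap N \subseteq N$ yields the claimed inequality
\[
  w(Q \cap N) - f_w(Q \cap N) \leq w(N) - f_w(N).
\]

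There is no real obstacle here; the only minor point to check is that the identity $w(S) - w(S \setminus A) = w(A)$ does apply, which it does because $w$ is a nonnegative linear weight function (recall $w(S)$ is the shorthand $\sum_{u \in S} w(u)$ defined in Section~\ref{sec:preliminaries}, and Algorithm~\ref{alg:simulated} only ever sets $w(u) \geq 0$). Once the reformulation as a max over subsets is in hand, monotonicity is immediate and the observation follows.
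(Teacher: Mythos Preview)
Your argument is correct, and your reformulation
\[
  w(S) - f_w(S) \;=\; \max_{A \subseteq S}\{w(A) - f(A)\}
\]
makes monotonicity immediate. One small remark: the identity $w(S) - w(S\setminus A) = w(A)$ relies only on the additivity of $w$, not on its nonnegativity, so you can drop that qualifier.

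The paper's proof takes a slightly different route: it works element by element, fixing $u\in E\setminus S$, letting $B\subseteq S$ be the minimizer in the definition of $f_w(S)$, and then using $B$ as a feasible candidate in the minimum defining $f_w(S+u)$ to conclude $f_w(S+u)\leq f_w(S)+w(u)$. The underlying mechanism is the same---the optimizer for the smaller set remains feasible for the larger one---but your closed-form rewrite packages this more cleanly: once the quantity is expressed as a maximum over subsets of $S$, monotonicity is a one-line observation rather than an incremental argument. Either way the proof is short; your version has the advantage of exposing the structure $w(S)-f_w(S)=\max_{A\subseteq S}\{w(A)-f(A)\}$ explicitly, which could be reused elsewhere.
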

\ifbool{shortVersion}{}{
\begin{proof}
Let $S$ be an arbitrary subset of $E$ and let $u \in E \setminus S$. We need to show that $w(S + u) - f_w(S + u) \geq w(S) - f_w(S)$, or equivalently, $w(u) \geq f_w(S + u) - f_w(S)$. By definition, there exists a set $B \subseteq S$ such that:
\[
	f_w(S) = f(B) + w(S \setminus B)
	\enspace.
\]
Hence,
\begin{align*}
	f_w(S + u)
	={} &
	\min_{A \subseteq S + u} \{f(A) + w((S + u)\setminus A)\}
	\leq
	f(B) + w((S + u) \setminus B)\\
	={} &
	f(B) + w(S \setminus B) + w(u)
	=
	f_w(S) + w(u)
	\enspace.
\end{align*}
The observation now follows by rearranging the above inequality.
\end{proof}
}

We can now prove Theorem~\ref{thm:main_nonmonotone}.

\begin{proof}[Proof of Theorem~\ref{thm:main_nonmonotone}]
By Observation~\ref{obs:diff_monotone}, we can plug $q = 1$ into Proposition~\ref{prop:general_ratio_nonmonotone}.
Choosing $p = (3\alpha)^{-1} < 1$, the proposition implies:
\[
	\E[f(Q \cap N)]
	\geq
	\frac{(3\alpha)^{-1}(1 - 2/3)}{8\alpha(1 + (3\alpha)^{-1})} \cdot f(\OPT)
	=
	\frac{1}{24\alpha(3\alpha + 1)} \cdot f(\OPT)
	\enspace.
\]
Hence, Algorithm~\ref{alg:simulated} with $p = (3\alpha)^{-1}$ is $24\alpha(3\alpha + 1)$-competitive for the matroid $\cM$.
\end{proof}

Note that $p=(3\alpha)^{-1}$ used in the above
proof is not the minimizer
of the expression $p(1-2p\alpha)/(8\alpha(1+p))$
obtained from Proposition~\ref{prop:general_ratio_nonmonotone}
by setting $q=1$. We use $p=(3\alpha)^{-1}$ for
simplicity since it leads to a clean expression that
is very close to the one obtained by the minimizing $p$.

\ifbool{shortVersion}{For proofs on the strengthenings of
Theorem~\ref{thm:main_nonmonotone} which hold in some
cases as stated in the introduction, we refer to the
long version of the paper.
}{
In some cases it is possible to use a smaller value of $q$ in Proposition~\ref{prop:general_ratio_nonmonotone}. The following two claims prove one such case. A set function $g\colon 2^E \to \bR$ is called \emph{normalized} if $g(\varnothing)= 0$ and \emph{supermodular} if $g(A) + g(B) \leq g(A \cup B) + g(A \cap B)$.

\begin{lemma} \label{lem:sample_lower_bound}
  Let $g\colon 2^E \rightarrow \bR^+$ be a normalized, monotone and \emph{supermodular} function. Denote by $A(q)$ a random subset of $A$ where each element appears with probability at most $q$ (not necessarily independently). Then, $\mathbb{E}[g(A(q))] \leq q \cdot g(A)$.
\end{lemma}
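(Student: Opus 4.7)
The plan is to reduce the statement to a deterministic pointwise inequality and then take expectations. Fix an arbitrary ordering $e_1,\dots,e_n$ of the elements of $A$, and for every element $e\in A$ let $A_{<e}$ denote the set of elements of $A$ strictly preceding $e$ in this ordering. For every subset $S\subseteq A$, let $S_{<e}=S\cap A_{<e}$. I would first write the telescoping identity
\[
 g(S)=g(\varnothing)+\sum_{e\in S} g(e\mid S_{<e})
 =\sum_{e\in S} g(e\mid S_{<e}),
\]
using that $g$ is normalized. Because $g$ is supermodular, the map $T\mapsto g(e\mid T)$ is monotone non-decreasing in $T$ for every fixed $e$; since $S_{<e}\subseteq A_{<e}$, this upgrades the identity above to the deterministic inequality
\[
 g(S)\;\le\;\sum_{e\in S} g(e\mid A_{<e}) \qquad \forall S\subseteq A.
\]

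Next I would take expectations with $S=A(q)$. The key point that makes the bound go in the right direction is that each full marginal $g(e\mid A_{<e})$ is \emph{non-negative}: this follows from the monotonicity of $g$ together with non-negativity. Hence
\[
 \E[g(A(q))]\;\le\;\sum_{e\in A}\Pr[e\in A(q)]\cdot g(e\mid A_{<e})
 \;\le\; q\sum_{e\in A} g(e\mid A_{<e})
 \;=\;q\cdot g(A),
\]
where in the last step I apply the same telescoping identity, now to the full set $A$ (where it holds with equality regardless of super/submodularity).

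The only subtleties worth checking carefully are: (i) the direction of the supermodular inequality --- marginals grow as the underlying set grows, which is exactly what lets us upper-bound $g(e\mid S_{<e})$ by $g(e\mid A_{<e})$; and (ii) the non-negativity of the full marginals, which is what converts $\Pr[e\in A(q)]\le q$ into an upper bound on the expectation (otherwise the inequality $\Pr[e\in A(q)]\cdot g(e\mid A_{<e})\le q\cdot g(e\mid A_{<e})$ could reverse). Both facts are immediate from the hypotheses on $g$, so no further work is needed; there is no real obstacle and the proof is essentially a one-line telescoping argument combined with supermodularity. This mirrors the role played by Lemma~\ref{lem:sampling} in the submodular setting, but for upper bounds.
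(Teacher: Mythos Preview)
Your proof is correct and follows essentially the same approach as the paper's: both arguments fix an ordering of $A$, telescope $g$ along that ordering, use supermodularity to replace each marginal $g(e\mid S_{<e})$ by the larger deterministic marginal $g(e\mid A_{<e})$, and then use monotonicity (so that these marginals are non-negative) together with $\Pr[e\in A(q)]\le q$ to finish. The only cosmetic difference is that you first derive the pointwise inequality $g(S)\le\sum_{e\in S}g(e\mid A_{<e})$ and then take expectations, whereas the paper keeps the indicator variables inside the expectation throughout; the underlying logic is identical.
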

\begin{proof}
Let $A=\{u_1,\dots, u_{|A|}\}$ be an arbitrary numbering
of the elements in $A$, and for $i\in \{0,\dots, |A|\}$
we define $A_i=\{u_1,\dots, u_i\}$, where $A_0 = \varnothing$.
Denote by $X_i$ an indicator for the event that
$u_i \in A(q)$, and let $q_i=\Pr[u_i \in A(q)] = \mathbb{E}[X_i] \leq q$.
Then:
\begin{align*}
    \bE[g(A(q))]
    ={} &
    \mathbb{E}\left[g(\varnothing) + \sum_{i = 1}^{|A|} X_i \cdot g(u_i \mid A_{i - 1} \cap A(q)) \right]
    \leq
    \mathbb{E}\left[g(\varnothing) + \sum_{i = 1}^{|A|} X_i \cdot g(u_i \mid A_{i - 1}) \right]\\
    ={} &
    g(\varnothing) + \sum_{i = 1}^{|A|} q_i \cdot g(u_i \mid A_{i - 1})
    \leq{}
     g(\varnothing) + q \cdot \sum_{i = 1}^{|A|} g(u_i \mid A_i)\\
    ={}& (1-q)\cdot g(\varnothing) + q \cdot g(A) = q \cdot g(A) \enspace,
\end{align*}
where there first inequality follows from the supermodularity of $g$,
the second one from monotonicity and the fact that $q_i\leq q$ for $1 \leq i \leq |A|$, and the last
equality follows by the fact that $g$ is normalized.
\end{proof}

\begin{corollary} \label{cor:diff_bound}
If {\Linear} is guaranteed to pick every element of its input with probability at most $q$, then $\bE[w(Q \cap N) - f_w(Q \cap N)] \leq q \cdot \bE[w(N) - f_w(N)]$.
\end{corollary}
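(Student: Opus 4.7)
The plan is to invoke Lemma~\ref{lem:sample_lower_bound} on (a normalization of) the function $h(S):=w(S)-f_w(S)$. First I would verify the structural hypotheses of that lemma: monotonicity of $h$ is exactly Observation~\ref{obs:diff_monotone}, and supermodularity of $h$ is immediate because $w$ is modular while $f_w$ is submodular by Property~\ref{prop:properties_preserved}, so $-f_w$ is supermodular and adding a modular function preserves supermodularity. The only wrinkle is normalization: Property~\ref{prop:w_always_smaller} gives $f_w(\varnothing)=f(\varnothing)$, so $h(\varnothing)=-f(\varnothing)$, which is not $0$ in general. I would therefore apply the lemma to the shifted function
\[
  g(S) \;:=\; h(S) + f(\varnothing) \;=\; w(S) - f_w(S) + f(\varnothing),
\]
which inherits monotonicity and supermodularity from $h$, is normalized, and hence non-negative.

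Next I would condition on the triple $(N, N_0, w)$. Under this conditioning, $N$ is a fixed subset of $E$ and {\Linear} is being run on a deterministic Partial-\SP\ instance with ground set $N\cup N_0$ and weight vector $w$; the remaining randomness is the uniform arrival order fed to {\Linear} together with {\Linear}'s internal coins. By the hypothesis of the corollary, $Q$ then contains every element of $N\cup N_0$ with probability at most $q$, so $Q\cap N$ is a random subset of $N$ in which each element appears with probability at most $q$. Applying Lemma~\ref{lem:sample_lower_bound} to $g$ with $A=N$ and $A(q)=Q\cap N$ gives
\[
  \bE[g(Q \cap N) \mid N, N_0, w] \;\leq\; q\cdot g(N).
\]

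Finally I would take the outer expectation, expand $g(S)=h(S)+f(\varnothing)$, and rearrange. The constant terms combine into a residual $(q-1)f(\varnothing)$, which is non-positive because $q\leq 1$ and $f$ is non-negative, and can therefore be dropped to obtain exactly $\bE[w(Q\cap N) - f_w(Q\cap N)] \leq q\cdot \bE[w(N) - f_w(N)]$. The only real subtlety is this bookkeeping around the additive shift by $f(\varnothing)$, needed because Lemma~\ref{lem:sample_lower_bound} requires a normalized function whereas $h$ is not; once that shift is handled, every other step is a direct reading of the structural properties already recorded in Property~\ref{prop:w_always_smaller}, Property~\ref{prop:properties_preserved}, and Observation~\ref{obs:diff_monotone}.
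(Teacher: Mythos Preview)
Your proposal is correct and matches the paper's proof essentially line for line: the paper defines the same shifted function $g(S)=w(S)+f_w(\varnothing)-f_w(S)$ (which equals your $g$ since $f_w(\varnothing)=f(\varnothing)$), verifies it is normalized, monotone, and supermodular, applies Lemma~\ref{lem:sample_lower_bound} after fixing $N$, and then drops the residual $(q-1)f_w(\varnothing)$ term using non-negativity of $f_w$. Your conditioning on $(N,N_0,w)$ is if anything slightly more careful than the paper's ``for every fixed $N$,'' but the argument is the same.
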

\begin{proof}
Let us define the function $g(S) = w(S) + f_w(\varnothing) - f_w(S)$. Since $w(S)$ is linear, $f_w(\varnothing)$ is a constant and $f_w(S)$ is submodular, $g(S)$ is a supermodular function. Additionally, $g(S)$ is normalized by construction and monotone by Observation~\ref{obs:diff_monotone}. Hence, for every fixed $N$, by Lemma~\ref{lem:sample_lower_bound} and the assumption that no element of $N$ belongs to $Q$ with probability larger than $q$:
\[
	\bE[g(Q \cap N)] \leq q \cdot g(N)
	\enspace.
\]
Since the last inequality holds for every fixed $N$, it holds also, in expectation, without fixing $N$. Thus:
\begin{align*}
	\bE[w(Q \cap N) - f_w(Q \cap N)]
	={} &
	\bE[g(Q \cap N)] - f_w(\varnothing)\\
	\leq{} &
	q \cdot \bE[g(N)] - f_w(\varnothing)
	\leq
	q \cdot \bE[w(N) - f_w(N)]
	\enspace,
\end{align*}
where the last inequality holds since $f_w$ is non-negative by Property~\ref{prop:properties_preserved}.
\end{proof}

We can now prove Theorem~\ref{thm:bound_probability_nonmonotone}.

\begin{proof}[Proof of Theorem~\ref{thm:bound_probability_nonmonotone}]
Observe that $q$ must be at least $1/\alpha$ since any $\alpha$-competitive algorithm for {\SP} must be able to select an element with probability at least $1/\alpha$ when this element is the only element having a non-zero weight. Hence, by Corollary~\ref{cor:diff_bound}, we can plug $q$ into Proposition~\ref{prop:general_ratio_nonmonotone}. Letting $\beta = \alpha q \geq 1$ and choosing $p = (3\beta)^{-1} < 1$, the proposition implies:
\[
	w(Q \cap N)
	\geq
  \frac{\frac{1}{3\beta} (1-2q\alpha\frac{1}{3\beta})}{8\alpha(1+\frac{1}{3\beta})}
	=
	\frac{\frac{1}{3\beta}(1 - 2/3)}{8\frac{\beta}{q} (1+\frac{1}{3\beta}) } \cdot f(\OPT)
	=
	\frac{q}{24\beta (3\beta + 1)} \cdot f(\OPT)
	\enspace.
\]
Hence, Algorithm~\ref{alg:simulated} with $p = (3\beta^{-1} = (3\alpha q)^{-1}$ has a competitive ratio of at most:
\[
	\frac{24 \beta(3\beta + 1)}{q} 
	=
  \frac{24 \alpha q (3\alpha q + 1)}{q}
	=
	O(q\cdot \alpha^2 )
\]
for the matroid $\cM$.
\end{proof}

\subsection{Proof of Theorem~\ref{thm:union_nonmonotone}} \label{ssc:union_analysis}

In this section we assume $\cM$ is a matroid for which Theorem~\ref{thm:union_nonmonotone} is meaniningful. More specifically, the behavior of {\Linear} on $\cM$ can be characterized as follows.
\begin{compactitem}
	\item For every {\SP} instance over the matroid $\cM$, there exists a set of $k$ (correlated) random sets $\{P_i\}_{i = 1}^k$ such that each set $P_i$ is always independent in $\cM$ and $\bE[w(\bigcup_{i = 1}^k P_i)] \geq w(\OPT_w)/\alpha$, where $w$ is the weight function of the {\SP} instance and $\OPT_w$ is the maximum weight independent set given $w$.
	\item {\Linear} outputs a uniformly random set from $\{P_i\}_{i = 1}^k$.
\end{compactitem}

For every $1 \leq i \leq k$, let $Q_i$ be the set $P_i$ corresponding to the execution of {\Linear}
within Algorithm~\ref{alg:simulated} with $p = (3 \alpha)^{-1}$.

\begin{lemma} \label{lem:union_bound}
$\bE[f(\bigcup_{i=1}^k Q_i \cap N)] \geq \frac{1}{24\alpha(3\alpha + 1)} \cdot f(\OPT)$.
\end{lemma}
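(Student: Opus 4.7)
The plan is to follow the same three-stage blueprint used to prove Theorem~\ref{thm:main_nonmonotone}, replacing the single output set $Q$ by the union $\bigcup_{i=1}^k Q_i$ everywhere and swapping out Observation~\ref{obs:linear_approximation_ratio} for the stronger guarantee provided by the assumed structure on {\Linear}. The key point is that the first and second stages of the analysis (Lemma~\ref{lem:M_equalities}, Corollary~\ref{cor:f_M_value}, Lemma~\ref{lem:m_n_ratio}, and Lemma~\ref{lem:f_w_m_n_ratio}) involve only the sets $M$ and $N$ and are therefore completely unaffected by which particular subset of $N$ is eventually reported as the output. Thus only the third stage needs a new ingredient.

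First I would note that by the postulated behavior of {\Linear} on $\cM$ (and its carry-over to the Partial-{\SP} instance that arises inside Algorithm~\ref{alg:simulated}, as justified by Proposition~\ref{prp:partial_problem}), the internal sets $\{Q_i\}_{i=1}^k$ satisfy $\bE[w(\bigcup_{i=1}^k Q_i)] \geq \bE[w(\OPT_w(N \cup N_0))]/\alpha$. Because $w(u) = 0$ for every $u \in N_0$, this simplifies to
\[
  \bE\!\left[w\!\left(\textstyle\bigcup_{i=1}^k Q_i \cap N\right)\right] \geq \frac{1}{\alpha} \cdot \bE[w(\OPT_w(N))].
\]
Combining this with Lemma~\ref{lem:opt_n_bound} gives the analogue of Corollary~\ref{cor:out_weight},
\[
  \bE\!\left[w\!\left(\textstyle\bigcup_{i=1}^k Q_i \cap N\right)\right] \geq \frac{p}{\alpha(1+p)} \cdot \bE[w(M)].
\]

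Next, since $\bigcup_{i=1}^k Q_i \cap N \subseteq N$ and Observation~\ref{obs:diff_monotone} says that $w(S) - f_w(S)$ is monotone in $S$, we obtain
\[
  \bE\!\left[f_w\!\left(\textstyle\bigcup_{i=1}^k Q_i \cap N\right)\right] \geq \bE\!\left[w\!\left(\textstyle\bigcup_{i=1}^k Q_i \cap N\right)\right] - \bE[w(N) - f_w(N)].
\]
Plugging in the previous inequality together with Lemmas~\ref{lem:m_n_ratio} and~\ref{lem:f_w_m_n_ratio}, and then applying Lemma~\ref{lem:M_equalities} to rewrite everything in terms of $\bE[f(M)]$, reproduces the calculation in Proposition~\ref{prop:general_ratio_nonmonotone} (with $q = 1$) verbatim but with $Q \cap N$ replaced by $\bigcup_{i=1}^k Q_i \cap N$. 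This yields
\[
  \bE\!\left[f_w\!\left(\textstyle\bigcup_{i=1}^k Q_i \cap N\right)\right] \geq \frac{p(1 - 2p\alpha)}{8\alpha(1+p)} \cdot f(\OPT).
\]
Setting $p = (3\alpha)^{-1}$ as in the proof of Theorem~\ref{thm:main_nonmonotone}, substituting Corollary~\ref{cor:f_M_value} and using $f(S) \geq f_w(S)$ from Property~\ref{prop:w_always_smaller} produces the desired bound $\frac{1}{24\alpha(3\alpha+1)} \cdot f(\OPT)$.

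The only nontrivial step is the very first one: arguing that the structural assumption on {\Linear} (the existence of the correlated independent sets $P_i$ whose union approximates $\OPT_w$) transfers to the Partial-{\SP} instance that {\Linear} faces inside Algorithm~\ref{alg:simulated}. This should follow from Proposition~\ref{prp:partial_problem} applied in a structure-preserving way; once this is granted, the rest is a mechanical repetition of the Theorem~\ref{thm:main_nonmonotone} argument with $Q$ replaced by $\bigcup_{i=1}^k Q_i$.
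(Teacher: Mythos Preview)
Your proposal is correct and follows essentially the same approach as the paper. The paper's proof is simply more terse: it observes that the only place the output set $Q$ enters the analysis leading to Theorem~\ref{thm:main_nonmonotone} is through the inequality $\bE[w(Q)]\geq \bE[w(\OPT_w(N\cup N_0))]/\alpha$, so any random subset of $N\cup N_0$ satisfying that inequality---in particular $\bigcup_{i=1}^k Q_i$---inherits the same conclusion; you spell out the individual steps (Corollary~\ref{cor:out_weight} analogue, Observation~\ref{obs:diff_monotone} with $q=1$, Proposition~\ref{prop:general_ratio_nonmonotone}), which amounts to the same argument.
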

\begin{proof}
One can verify that the fact that $Q$ is produced by {\Linear} was used in the above analysis of the competitive ratio of Algorithm~\ref{alg:simulated} only to justify the inequality: $\bE[w(Q)] \geq \frac{1}{\alpha} \cdot \bE[w(\OPT_w(N \cup N_0))]$. Thus, the proof of Theorem~\ref{thm:main_nonmonotone} can be viewed as showing that after executing Algorithm~\ref{alg:simulated} with $p = (3\alpha)^{-1}$ every random subset $S$ of $N \cup N_0$ obeying this inequality must obey also:
\[
	\bE[f(S \cap N)] \geq \frac{1}{24\alpha(3\alpha + 1)} \cdot f(\OPT)
	\enspace.
\]
The lemma now follows since $\bigcup_{i=1}^k Q_i$ is a random subset of $N \cup N_0$ obeying the required inequality.
\end{proof}

The next lemma is necessary to transform the bound given by the last lemma into a bound on the values of the separate sets $Q_i$.

\begin{lemma} \label{lem:sub_additive}
Let $g:2^E \rightarrow \bR^+$ be a non-negative submodular function and let $S_1,\dots, S_k \subseteq E$. Then,
\begin{equation*}
\sum_{i=1}^k f(S_i) \geq f\left(\bigcup_{i=1}^k S_i\right) \enspace.
\end{equation*}
\end{lemma}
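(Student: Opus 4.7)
The plan is to prove this by straightforward induction on $k$, using the fact that any non-negative submodular function is subadditive. The base case $k=1$ is trivial. For the inductive step, the key ingredient is the two-set case: for any $A, B \subseteq E$, submodularity gives
\[
g(A) + g(B) \;\geq\; g(A \cup B) + g(A \cap B),
\]
and since $g$ is non-negative we have $g(A \cap B) \geq 0$, so $g(A \cup B) \leq g(A) + g(B)$.

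Given the two-set inequality, I would set $A = \bigcup_{i=1}^{k-1} S_i$ and $B = S_k$, which yields
\[
g\!\left(\bigcup_{i=1}^{k} S_i\right) \;\leq\; g\!\left(\bigcup_{i=1}^{k-1} S_i\right) + g(S_k).
\]
Applying the induction hypothesis to the first term on the right then gives $g(\bigcup_{i=1}^{k} S_i) \leq \sum_{i=1}^{k} g(S_i)$, completing the induction.

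There is no real obstacle here; this is a textbook consequence of submodularity plus non-negativity. The only thing worth noting is that non-negativity is essential, since without it the inequality $g(A \cap B) \geq 0$ fails and pure submodularity does not imply subadditivity (e.g., constant negative shifts of a submodular function remain submodular but can violate subadditivity). Since the hypothesis of the lemma explicitly states $g: 2^E \to \bR^+$, this is not an issue.
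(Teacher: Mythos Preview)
Your proof is correct and takes essentially the same approach as the paper: induction on $k$, using that submodularity plus non-negativity gives subadditivity for two sets. The only cosmetic difference is that the paper merges $S_1$ and $S_2$ first and then invokes the induction hypothesis, whereas you peel off $S_k$ and apply the hypothesis to $\bigcup_{i=1}^{k-1} S_i$; the argument is otherwise identical.
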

\begin{proof}
We prove the result by induction on $k$. It clearly
holds for $k=1$. For $k>1$:
\begin{align*}
	\sum_{i=1}^k f(S_i)
	={} &
  f(S_1) + f(S_2) + \sum_{i=3}^k f(S_i)
  \geq
	f(S_1\cup S_2) + f(S_1\cap S_2) + \sum_{i=3}^k f(S_i)\\
  \geq{} &
	f(S_1\cup S_2) + \sum_{i=3}^k f(S_i)
  \geq
	f\left(\bigcup_{i=1}^k S_i\right)
	\enspace,
\end{align*}
where the first inequality holds by the submodularity of $f$, the second by the non-negativity of $f$ and the last by the induction hypothesis.
\end{proof}

We are now ready to prove Theorem~\ref{thm:union_nonmonotone}.

\begin{proof}[Proof of Theorem~\ref{thm:union_nonmonotone}]
Combining Lemmata~\ref{lem:union_bound} and~\ref{lem:sub_additive}, we get:
\[
	\sum_{i=1}^k \E[f(Q_i \cap N)]
	\geq
	\bE\left[f\left(\bigcup_{i=1}^k Q_i \cap N\right)\right]
	\geq
	\frac{1}{24\alpha(3\alpha + 1)} \cdot f(\OPT)
	\enspace.
\]
Since the output set $Q \cap N$ of Algorithm~\ref{alg:simulated} is a uniformly random subset from $\{Q_i \cap N\}_{i = 1}^k$, we get:
\[
	\bE[f(Q \cap N)]
	=
	\frac{\sum_{i=1}^k \E[f(Q_i \cap N)]}{k}
	\geq
	\frac{1}{24 k \alpha (3\alpha + 1)} \cdot f(\OPT)
	\enspace.
\]
Hence, Algorithm~\ref{alg:simulated} with $p = (3\alpha)^{-1}$ is $24k\alpha(3\alpha+1)$-competitive for the matroid $\cM$.
\end{proof}

}

\section*{Acknowledgement}

We are very grateful to Ola Svensson for many interesting discussions
and comments related to this paper.

\bibliographystyle{plain}

\ifbool{shortVersion}{}{
\appendix

\section{Monotone Functions} \label{app:monotone}

In this section we prove the results for {\MSSP} that apply
to general matroids, \ie, Theorem~\ref{thm:monotone}
and Theorem~\ref{thm:opt_probability_monotone}.
The online algorithm we use to prove these results is given as Algorithm~\ref{alg:online_monotone}, which is a close variant of Algorithm~\ref{alg:online}. Just like Algorithm~\ref{alg:online}, this algorithm has a probability parameter $p$, but the role of this parameter in the algorithm is somewhat different.

\begin{algorithm}[ht]
\caption{\textsf{Monotone-Online}$(p)$} \label{alg:online_monotone}
\tcp{Learning Phase}
Choose $X$ from the binomial distribution $B(n, p)$.\\
Observe (and reject) the first $X$ elements of the input. Let $L$ be the set of these elements.\\

\BlankLine

\tcp{Selection Phase}
Let $M$ be the output of {\Greedy} on the set $L$.\\
Let $N \gets \varnothing$.\\
\For{each arriving element $u \in E \setminus L$}
{
	Let $w(u) \gets 0$.\\
	\If{$u$ is accepted by {\Greedy} when applied to $M + u$}
	{
		Add $u$ to $N$.\\
		Let $M_u \subseteq M$ be the solution of {\Greedy} immediately before it adds $u$ to it.\\
		Update $w(u) \gets f(u \mid M_u)$.\\
	}
	Pass $u$ to {\Linear} with weight $w(u)$.
}
\Return{$Q \cap N$, where $Q$ is the output of {\Linear}}.
\end{algorithm}

Like in Algorithm~\ref{alg:online}, it is easy to show that $L$ contains every element of $E$ with probability $p$, independently. For the analysis of Algorithm~\ref{alg:online_monotone}, we again need an equivalent offline algorithm given as Algorithm~\ref{alg:simulated_monotone}, which is a close variant of Algorithm~\ref{alg:simulated}

\begin{algorithm}[ht]
\caption{\textsf{Monotone-Simulated}$(p)$} \label{alg:simulated_monotone}
\DontPrintSemicolon
\tcp{Initialization}
Let $M, N, N_0 \gets \varnothing$.\\
Let $E' \gets E$.\\

\BlankLine

\tcp{Main Loop}
\While{$E' \neq \varnothing$}
{
	Let $u$ be the element of $E'$ maximizing $f(u \mid M)$, and remove $u$ from $E'$.\\
	Let $w(u) \gets f(u \mid M)$.\\
	\If{$M + u$ is independent in $\cM$}
	{
		\lWith{probability $p$}
		{
			Add $u$ to $M$.
		}
		\lOtherwise
		{
			Add $u$ to $N$.
		}
	}
	\Else
	{
		Let $w(u) \gets 0$.\\
		Add $u$ to $N_0$ with probability $1 - p$.
	}
}
Run {\Linear} with $N \cup N_0$ as the input (in a uniformly random order) and the weights defined by $w$.\\
\Return{$Q \cap N$, where $Q$ is the output of {\Linear}}.
\end{algorithm}

A similar proof to the one given in Appendix~\ref{app:equivalence} can be used to show that Algorithms~\ref{alg:online_monotone} and~\ref{alg:simulated_monotone} share the same joint distribution of the set $M$ and the output set.

\subsection{Analysis of Algorithms~\ref{alg:online_monotone} and~\ref{alg:simulated_monotone}} \label{ssc:analysis_monotone}

Throughout this section we fix an arbitrary matroid $\cM = (E, \cI)$ for which {\Linear} is $\alpha$-competitive and analyze the approximation ratio of Algorithm~\ref{alg:simulated_monotone}. Since Algorithms~\ref{alg:online_monotone} and~\ref{alg:simulated_monotone} share their outputs distribution, the approximation ratio we prove for Algorithm~\ref{alg:simulated_monotone} implies an identical competitive ratio for Algorithm~\ref{alg:online_monotone}. The analysis closely follows the proof given in Section~\ref{sec:analysis}, and we mostly explain in this section how to modify the proof of Section~\ref{sec:analysis} to fit Algorithm~\ref{alg:simulated_monotone}.

Let us start with the first stage in which we bound $\bE[w(M)]$ and $\bE[f_w(M)]$. Lemma~\ref{lem:M_equalities}, which ties both values to $\bE[f(M)]$, still holds, and thus, we only need to bound $\bE[f(M)]$.

\begin{lemma} \label{lem:f_M_value_monotone}
$\bE[f(M)] \geq (p/2) \cdot f(\OPT)$.
\end{lemma}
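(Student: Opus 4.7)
The plan is to combine the classical guarantee for {\Greedy} on monotone submodular objectives with the sampling lemma (Lemma~\ref{lem:sampling}) applied to the random learning set $L$. Recall that $L$ contains every element of $E$ independently with probability $p$, and $M = \Greedy(L)$.

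First I would invoke the well known Fisher--Nemhauser--Wolsey guarantee~\cite{FNW78}: when $f$ is non-negative monotone submodular, the set $M = \Greedy(L)$ satisfies $f(M) \geq \tfrac{1}{2} f(T)$ for every set $T$ that is independent in $\cM$ and contained in $L$. In particular, since $\OPT$ is independent, its subset $\OPT \cap L$ is also independent and contained in $L$, so the pointwise inequality
\[
	f(M) \geq \tfrac{1}{2} \cdot f(\OPT \cap L)
\]
holds for every realization of $L$.

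Next I would take expectations and bound $\bE[f(\OPT \cap L)]$ via Lemma~\ref{lem:sampling}. Since each element of $\OPT$ lies in $L$ independently with probability $p$, applying Lemma~\ref{lem:sampling} to the submodular function $f$ restricted to $\OPT$ yields
\[
	\bE[f(\OPT \cap L)] \geq (1-p) \cdot f(\varnothing) + p \cdot f(\OPT) \geq p \cdot f(\OPT),
\]
where the last inequality uses the non-negativity of $f$. Chaining this with the pointwise bound gives $\bE[f(M)] \geq (p/2)\cdot f(\OPT)$, as required.

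No real obstacle is anticipated: the only subtlety is making sure the greedy guarantee one cites applies to the version of {\Greedy} used by the paper (which stops only when no element has non-negative marginal gain); for monotone $f$ this stopping rule is never triggered before a maximal independent set is reached, so the standard Fisher--Nemhauser--Wolsey analysis applies verbatim.
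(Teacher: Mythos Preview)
Your proposal is correct and essentially identical to the paper's own proof: both invoke the Fisher--Nemhauser--Wolsey $1/2$-approximation of {\Greedy} for monotone submodular maximization over a matroid, and then apply Lemma~\ref{lem:sampling} to $\OPT\cap L$ (with $L$ containing each element independently with probability $p$). The paper routes the greedy bound through $\OPT(L)$ before specializing to $\OPT\cap L$, but this is a cosmetic difference only.
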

\begin{proof}
We prove the corollary for the set $M$ of Algorithm~\ref{alg:online_monotone}, which is fine since it has the same distribution as the set $M$ of Algorithm~\ref{alg:simulated_monotone}. Algorithm~\ref{alg:online_monotone} calculates $M$ by applying {\Greedy} to $L$. {\Greedy} is known to have an approximation ratio of $1/2$ for the problem of maximizing a non-negative monotone submodular function subject to a matroid constraint~\cite{FNW78}. Hence, $f(M) \geq f(\OPT(L))/2$, where $\OPT(L)$ is the independent subset of $L$ maximizing $f$.

On the other hand, $L$ contains every element of $E$ with probability $p$, independently. Hence,
\[
	f(M)
	\geq
	\frac{\bE[f(\OPT(L))]}{2}
	\geq
	\frac{\bE[f(\OPT \cap L)]}{2}
	\geq
	(p / 2) \cdot f(\OPT)
	\enspace,
\]
where the second inequality holds since $\OPT \cap L$ is an independent set of $\cM$ and the last inequality holds by Lemma~\ref{lem:sampling}.
\end{proof}

This completes the first stage of the proof. In the second stage we bound $\bE[w(N)]$ and $\bE[f_w(N)]$. In Section~\ref{sec:analysis} the bound on $\bE[w(N)]$ is given by Lemma~\ref{lem:m_n_ratio}. The proof of this lemma uses the probability that an element $u$ processed by Algorithm~\ref{alg:simulated} is added to $M$ (respectively, $N$) given that $M + u \in \cI$ and $f(u \mid M) \geq 0$. Let us denote this probability by $p_M$ (respectively, $p_N$). Using this terminology the proof actually shows $\bE[w(N)] = (p_N/p_M) \cdot \bE[w(M)]$. It can be verified that the proof holds also for Algorithm~\ref{alg:simulated_monotone}, as long as one uses the values $p_M$ and $p_N$ corresponding to this algorithm (which are $p$ and $1 - p$, respectively). Hence, we get the following lemma.

\begin{lemma} \label{lem:m_n_ratio_monotone}
$\bE[w(N)] = \frac{1 - p}{p} \cdot \bE[w(M)]$.
\end{lemma}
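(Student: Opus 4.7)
The plan is to follow essentially the same argument as the proof of Lemma~\ref{lem:m_n_ratio}, adapting it for the new probabilities used in Algorithm~\ref{alg:simulated_monotone}. Fix an arbitrary element $u \in E$ processed by Algorithm~\ref{alg:simulated_monotone}, and condition on all random choices made up to the moment just before $u$ is processed. This fixes the current sets $M$ and $N$ as well as the value $w(u) = f(u \mid M)$, which is automatically non-negative since $f$ is monotone (so no analog of the $f(u \mid M) \geq 0$ case distinction from the non-monotone analysis is needed).

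I would then split on whether $M + u \in \cI$. If $M + u \not\in \cI$, then $u$ is only possibly added to $N_0$, so neither $w(M)$ nor $w(N)$ changes, contributing $0$ to both conditional expected increments. Otherwise, with probability $p$ the element $u$ is added to $M$ (contributing $w(u)$ to $w(M)$ and nothing to $w(N)$), and with probability $1 - p$ the element $u$ is added to $N$ (contributing $w(u)$ to $w(N)$ and nothing to $w(M)$). Letting $\Delta_M$ and $\Delta_N$ denote the conditional expected increases in $w(M)$ and $w(N)$ respectively, we obtain
\[
    \Delta_N = (1-p) \cdot w(u) = \frac{1-p}{p} \cdot p \cdot w(u) = \frac{1-p}{p} \cdot \Delta_M
\]
in both cases (including the infeasible case, where both sides vanish).

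The lemma then follows by the linearity of expectation, summing the above relation over all elements $u \in E$ and invoking the tower law to remove the conditioning. There is no significant obstacle here: the argument is a direct transposition of the earlier proof, with the ratio $p_N/p_M = 1/p$ replaced by $(1-p)/p$, reflecting the fact that in Algorithm~\ref{alg:simulated_monotone} every feasible element is deterministically placed into either $M$ or $N$ (with probabilities $p$ and $1-p$ respectively), instead of having a third option of being discarded as in Algorithm~\ref{alg:simulated}.
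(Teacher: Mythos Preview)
Your proof is correct and follows essentially the same approach as the paper, which simply observes that the proof of Lemma~\ref{lem:m_n_ratio} actually shows $\bE[w(N)] = (p_N/p_M)\cdot\bE[w(M)]$ and then plugs in the values $p_M = p$, $p_N = 1-p$ corresponding to Algorithm~\ref{alg:simulated_monotone}. One small slip in your closing commentary: for Algorithm~\ref{alg:simulated} the ratio $p_N/p_M$ equals $p$, not $1/p$, but this does not affect the argument itself.
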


Similarly, using the potential function $\Phi(M, N) = f_w(N) - \frac{p_N}{p_M + p_N} \cdot f_w(M \cup N)$, the proof of Lemma~\ref{lem:f_w_m_n_ratio} can be used to show that $\bE[f_w(N)] \geq \frac{p_M}{p_M + p_N} \cdot f(\varnothing) + \frac{p_N}{p_M + p_N} \cdot \bE[f_w(M \cup N)]$ in both Algorithms~\ref{alg:simulated} and~\ref{alg:simulated_monotone}. This gives the following lemma.

\begin{lemma} \label{lem:f_w_m_n_ratio_monotone}
$\bE[f_w(N)] \geq p \cdot f(\varnothing) + (1 - p) \cdot \bE[f_w(M)]$.
\end{lemma}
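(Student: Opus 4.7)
The plan is to follow the hint already given in the excerpt: adapt the potential-function argument from the proof of Lemma~\ref{lem:f_w_m_n_ratio} to Algorithm~\ref{alg:simulated_monotone}, and then exploit the extra property of monotonicity to turn an $f_w(M\cup N)$ term into the desired $f_w(M)$ term.

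Concretely, I would define the auxiliary random function
\[
\Phi(M,N) \;=\; f_w(N) - (1-p)\cdot f_w(M\cup N),
\]
which corresponds to the general template $\Phi = f_w(N) - \tfrac{p_N}{p_M+p_N}\cdot f_w(M\cup N)$ specialized to the parameters $p_M = p$ and $p_N = 1-p$ of Algorithm~\ref{alg:simulated_monotone}. Initially, $\Phi(\varnothing,\varnothing) = f_w(\varnothing) - (1-p)\cdot f_w(\varnothing) = p\cdot f(\varnothing)$ by Property~\ref{prop:w_always_smaller}. I will then telescope the change $\Delta_u = \Phi(M'_u,N'_u)-\Phi(M_u,N_u)$ over the processing steps and prove, element by element, that $\bE[\Delta_u\mid R_u]\geq 0$ for every history event $R_u$, which gives $\bE[\Phi(M,N)]\geq p\cdot f(\varnothing)$ and hence $\bE[f_w(N)]\geq p\cdot f(\varnothing) + (1-p)\cdot \bE[f_w(M\cup N)]$.

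The per-step check splits into two cases mirroring Lemma~\ref{lem:f_w_m_n_ratio}. If $M_u + u \notin \cI$ then $M'_u=M_u$, $N'_u=N_u$, and $\Delta_u=0$. Otherwise, Algorithm~\ref{alg:simulated_monotone} adds $u$ to $M$ with probability $p$ and to $N$ with probability $1-p$. Hence
\[
\bE[f_w(N'_u)-f_w(N_u)\mid R_u] = (1-p)\cdot f_w(u\mid N_u),
\]
while
\[
\bE[f_w(M'_u\cup N'_u)-f_w(M_u\cup N_u)\mid R_u] = f_w(u\mid M_u\cup N_u).
\]
Since $f_w$ is submodular (Property~\ref{prop:properties_preserved}) and $N_u \subseteq M_u\cup N_u$, we have $f_w(u\mid N_u)\geq f_w(u\mid M_u\cup N_u)$, which immediately yields $\bE[\Delta_u\mid R_u]\geq 0$. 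Note that, compared to the non-monotone proof, the subcase $f(u\mid M_u)<0$ cannot occur here because $f$ is monotone, so the case analysis is actually simpler.

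Finally, the bound obtained so far features $\bE[f_w(M\cup N)]$ rather than $\bE[f_w(M)]$. This is where the monotonicity hypothesis of {\MSSP} is used: since $f$ is monotone, so is $f_w$ (again by Property~\ref{prop:properties_preserved}), and therefore $f_w(M\cup N)\geq f_w(M)$ pointwise. Taking expectations gives $\bE[f_w(M\cup N)]\geq \bE[f_w(M)]$, and combining with the potential-function inequality completes the proof. The main (very minor) obstacle is bookkeeping in the case analysis for $\bE[\Delta_u\mid R_u]\geq 0$, since one has to be careful that the adaptation of the proof uses the probabilities $p$ and $1-p$ appropriate for Algorithm~\ref{alg:simulated_monotone} rather than those of Algorithm~\ref{alg:simulated}; everything else is a direct consequence of submodularity and monotonicity of $f_w$.
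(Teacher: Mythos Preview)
Your proposal is correct and follows exactly the paper's approach: the paper likewise invokes the potential-function argument of Lemma~\ref{lem:f_w_m_n_ratio} with the parameters $p_M=p$, $p_N=1-p$ of Algorithm~\ref{alg:simulated_monotone} to obtain $\bE[f_w(N)]\geq p\cdot f(\varnothing)+(1-p)\cdot\bE[f_w(M\cup N)]$, and then appeals to the monotonicity of $f_w$ (Property~\ref{prop:properties_preserved}) to replace $f_w(M\cup N)$ by $f_w(M)$. Your observation that the case $f(u\mid M_u)<0$ is vacuous here is also consistent with the paper, since Algorithm~\ref{alg:simulated_monotone} omits that check.
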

\begin{proof}
By the above discussion:
\begin{align*}
	\bE[f_w(N)]
	\geq{} &
	\frac{p}{p + (1 - p)} \cdot f(\varnothing) + \frac{1 - p}{p + (1 - p)} \cdot \bE[f_w(M \cup N)]\\
	={} &
	p \cdot f(\varnothing) + (1 - p) \cdot \bE[f_w(M \cup N)]
	\enspace.
\end{align*}
The lemma now follows since $f_w$ is monotone by Property~\ref{prop:properties_preserved}.
\end{proof}

The last lemma bounds $\bE[f_w(N)]$, and thus, completes the second stage of the proof. In the third stage we bound $\bE[w(Q \cap N)]$ and $\bE[f_w(Q \cap N)]$. Observation~\ref{obs:linear_approximation_ratio} bounds $\bE[w(Q \cap N)]$ in terms of $\bE[w(\OPT_w(N))]$, and this observation holds also for Algorithm~\ref{alg:simulated_monotone}. Hence, to get a bound on $\bE[w(Q \cap N)]$ we first need a bound on $\bE[w(\OPT_w(N))]$. The following lemma corresponds to Lemma~\ref{lem:opt_n_bound}.

\begin{lemma} \label{lem:opt_n_bound_monotone}
$\bE[w(\OPT_w(N))] \geq (1 - p) \cdot \bE[w(M)]$.
\end{lemma}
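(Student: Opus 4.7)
The plan is to mirror the charging argument used in the proof of Lemma~\ref{lem:opt_n_bound}, but adjust all the probabilities for the selection rule of Algorithm~\ref{alg:simulated_monotone}, where an element $u$ with $M+u\in\cI$ is added to $M$ with probability $p$ and to $N$ with probability $1-p$ (rather than $1/2$ and $p/2$). A convenient simplification compared to the non-monotone case is that, since $f$ is monotone, the weight $w(u)=f(u\mid M)$ is automatically non-negative, so we never need to treat a separate ``$w(u)<0$'' branch.

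I would introduce the same two auxiliary sets as in the proof of Lemma~\ref{lem:opt_n_bound}: the set $N'\subseteq N$ consisting of the elements that, at the moment they are added to $N$, are not spanned by the previously added elements of $N$ (so that $N'$ is independent and $w(N')\leq w(\OPT_w(N))$), and a set $H$ maintained by the rules ``add any element entering $N'$ if it keeps $H$ independent'' and ``always add any element entering $M$, and if $H$ becomes dependent remove an element $\phi(u)\in H\cap N'$ that restores independence.'' The reduction $w(N')\leq w(\OPT_w(N))$ means that it suffices to prove $\bE[w(N')]\geq (1-p)\bE[w(M)]$.

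Next I would perform the expected-increment analysis per processed element $u$ with $M+u\in\cI$, fixing the history up to $u$'s processing. In both possible cases the expected increase in $w(M)$ is $p\cdot w(u)$. If $N'+u\in\cI$ the expected increase in $w(N')$ is $(1-p)\cdot w(u)$; otherwise, there is no contribution to $w(N')$, and we charge $(1-p)\cdot w(u)$ to the element of $H\cap N'$ that would become $\phi(u)$ if $u$ were added to $M$. Writing $c(u)$ for this charge, the bookkeeping gives
\[
\bE[w(M)] \;=\; \frac{p}{1-p}\,\Bigl(\bE[w(N')]+\sum_{u\in E}\bE[c(u)]\Bigr).
\]

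Finally, I would bound $\bE[c(u)]$ by the same geometric-variable argument. Conditioning on $u$ being in $H\cap N'$, every subsequent element $u'$ that could charge $u$ charges at most $(1-p)w(u')\leq (1-p)w(u)$ (weights are non-increasing by the greedy order), and independently with probability $p$ that same $u'$ is added to $M$ and removes $u$ from $H\cap N'$. Hence the number of charges $u$ receives is stochastically dominated by a geometric $G(p)$ with mean $1/p$, giving
\[
\bE[c(u)] \;\leq\; \tfrac{1-p}{p}\,\bE[w(N'\cap\{u\})].
\]
Summing over $u$ and substituting into the previous identity yields $\bE[w(M)]\leq (1-p)^{-1}\bE[w(N')]$, i.e.\ $\bE[w(N')]\geq (1-p)\bE[w(M)]$, which proves the lemma. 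The main subtlety, and the place to be careful, is the geometric-tail calculation: one must verify that conditioned on $u\in H\cap N'$ the ``kick-out probability'' $p$ really applies per potentially charging event, exactly as in the proof of Lemma~\ref{lem:opt_n_bound}, even though the acceptance probabilities into $M$ and $N$ are now asymmetric.
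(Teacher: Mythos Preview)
Your proposal is correct and follows essentially the same approach as the paper: the paper's proof abstracts the argument of Lemma~\ref{lem:opt_n_bound} in terms of the two probabilities $p_M$ (probability of entering $M$) and $p_N$ (probability of entering $N$), obtains the general identities $\frac{\bE[w(M)]}{\bE[w(N')]+\sum_u \bE[c(u)]}=\frac{p_M}{p_N}$ and $\bE[c(u)]\leq \frac{p_N}{p_M}\bE[w(N'\cap\{u\})]$, and then plugs in $p_M=p$, $p_N=1-p$, which is exactly what you reproduce in detail. Your observation that the geometric kick-out parameter becomes $p$ (so the mean number of charges is $1/p$) is the key adaptation, and it is handled correctly.
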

\begin{proof}
The proof of Lemma~\ref{lem:opt_n_bound} shows that there exist a random independent set $N' \subseteq N$ and random variables $\{c(u)\}_{u \in E}$ such that:
\[
	\frac{\bE[w(M)]}{\bE[w(N')] + \sum_{u \in E} \bE[c(u)]}
	=
	\frac{p_M}{p_N}
	\enspace,
	\qquad
	\text{and}
	\qquad
	\bE[c(u)]
	\leq
	\frac{p_N}{p_M} \cdot \bE[w(N' \cap \{u\})]
	\enspace.
\]
It can be verified that this proof holds also for Algorithm~\ref{alg:simulated_monotone}, and thus, plugging in the corresponding values of $p_M$ and $p_N$, we get:
\[
	\frac{\bE[w(M)]}{\bE[w(N')] + \sum_{u \in E} \bE[c(u)]}
	=
	\frac{p}{1 - p}
	\enspace,
	\qquad
	\text{and}
	\qquad
	\bE[c(u)]
	\leq
	\frac{1 - p}{p} \cdot \bE[w(N' \cap \{u\})]
	\enspace.
\]
Combining the two equations gives:
\[
	\frac{\bE[w(M)]}{\bE[w(N')] +\frac{1 - p}{p} \cdot \bE[w(N')]}
	\leq
	\frac{p}{1 - p}
	\Rightarrow
	(1 - p) \cdot \bE[w(M)]
	\leq
	\bE[w(N')]
	\enspace.
\]
The lemma now follows by observing that $N'$ is a possible candidate to be $\OPT_w(N)$, and therefore, $w(\OPT_w(N)) \geq w(N')$.
\end{proof}

\begin{corollary} \label{cor:out_weight_monotone}
$\bE[w(Q \cap N)] \geq \frac{1-p}{\alpha} \cdot \bE[w(M)]$.
\end{corollary}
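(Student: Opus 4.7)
The plan is to prove this corollary in essentially one line by chaining together two results already established: Observation~\ref{obs:linear_approximation_ratio} (which applies verbatim to Algorithm~\ref{alg:simulated_monotone}, since the competitive guarantee of \Linear\ on the Partial-\SP\ instance it faces is independent of how the input was constructed) together with the freshly proved Lemma~\ref{lem:opt_n_bound_monotone}. This is the direct monotone analogue of the proof of Corollary~\ref{cor:out_weight}.

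First I would recall that Observation~\ref{obs:linear_approximation_ratio} gives $\bE[w(Q \cap N)] = \bE[w(Q)] \geq \alpha^{-1}\cdot \bE[w(\OPT_w(N))]$, using that elements in $N_0$ all have $w$-weight zero so that $\OPT_w(N \cup N_0)$ and $\OPT_w(N)$ have the same $w$-weight. Then, applying Lemma~\ref{lem:opt_n_bound_monotone}, which supplies $\bE[w(\OPT_w(N))] \geq (1-p)\cdot \bE[w(M)]$, immediately yields
\[
\bE[w(Q \cap N)] \;\geq\; \frac{1}{\alpha}\cdot \bE[w(\OPT_w(N))] \;\geq\; \frac{1-p}{\alpha}\cdot \bE[w(M)],
\]
as desired.

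There is essentially no obstacle here; the only minor point to verify is that Observation~\ref{obs:linear_approximation_ratio}, originally stated for Algorithm~\ref{alg:simulated}, transfers unchanged to Algorithm~\ref{alg:simulated_monotone}. This is fine because the observation rests only on the $\alpha$-competitiveness of \Linear\ on the Partial-\SP\ instance given by the weights $w$ and the ground set $N \cup N_0$, together with the fact that elements outside $N$ carry zero $w$-weight. Both properties hold by construction of Algorithm~\ref{alg:simulated_monotone}, so the chaining goes through without modification.
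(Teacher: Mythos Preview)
Your proposal is correct and matches the paper's own proof essentially verbatim: the paper simply states that the corollary ``follows immediately from Observation~\ref{obs:linear_approximation_ratio} and Lemma~\ref{lem:opt_n_bound_monotone}.'' Your additional remark justifying why Observation~\ref{obs:linear_approximation_ratio} carries over to Algorithm~\ref{alg:simulated_monotone} is a reasonable elaboration but not a departure in approach.
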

\begin{proof}
Follows immediately from Observation~\ref{obs:linear_approximation_ratio} and Lemma~\ref{lem:opt_n_bound_monotone}.
\end{proof}

The bound on $\bE[f_w(Q \cap N)]$ is obtained from the bound on $\bE[w(Q \cap N)]$ by showing that $\bE[w(Q \cap N) - f_w(Q \cap N)] \leq q \cdot \bE[w(N) - f_w(N)]$ for some value $q \geq 1/\alpha$. However, like in Section~\ref{sec:analysis}, we first prove the approximation ratio induced by every value of $q$ and only then discuss the values that $q$ can take.

\begin{proposition} \label{prop:general_ratio_monotone}
If $\bE[w(Q \cap N) - f_w(Q \cap N)] \leq q \cdot \bE[w(N) - f_w(N)]$ for some value $q \geq 1/\alpha$, then $\bE[f(Q \cap N)] \geq \bE[f_w(Q \cap N)] \geq \frac{q(1 - p)(p/(\alpha q) - 1 + p)}{2} \cdot f(\OPT)$.
\end{proposition}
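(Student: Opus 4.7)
\noindent\textbf{Proof plan for Proposition~\ref{prop:general_ratio_monotone}.}

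The plan is to mimic the structure of the proof of Proposition~\ref{prop:general_ratio_nonmonotone}, but substituting in the monotone-flavored ingredients developed earlier in Section~\ref{ssc:analysis_monotone}. First I would rearrange the hypothesis of the proposition as
\[
  \bE[f_w(Q\cap N)] \;\geq\; \bE[w(Q\cap N)]\;-\;q\bigl(\bE[w(N)]-\bE[f_w(N)]\bigr),
\]
which serves as the master inequality to be turned into a lower bound in terms of $f(\OPT)$.

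Next, I would plug in the three bounds from the second and third stages of the monotone analysis: Corollary~\ref{cor:out_weight_monotone} for $\bE[w(Q\cap N)]$, Lemma~\ref{lem:m_n_ratio_monotone} for $\bE[w(N)]$, and Lemma~\ref{lem:f_w_m_n_ratio_monotone} for $\bE[f_w(N)]$. After grouping terms this gives
\[
  \bE[f_w(Q\cap N)]
  \;\geq\;
  \frac{(1-p)(p-\alpha q)}{\alpha p}\,\bE[w(M)]
  \;+\;qp\,f(\varnothing)
  \;+\;q(1-p)\,\bE[f_w(M)].
\]
The coefficient of $\bE[w(M)]$ is non-positive because $q\geq 1/\alpha$ implies $\alpha q\geq 1\geq p$, so I can use $\bE[w(M)]=\bE[f(M)]-f(\varnothing)\leq\bE[f(M)]$ (which holds by Lemma~\ref{lem:M_equalities} and non-negativity of $f$) to \emph{lower bound} the product; and I can substitute $\bE[f_w(M)]=\bE[f(M)]$ directly from Lemma~\ref{lem:M_equalities}.

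After these substitutions and simplification, the $f(\varnothing)$ contribution from the rewriting of $\bE[w(M)]$ is dominated by the $qpf(\varnothing)$ term (or can simply be dropped using $f(\varnothing)\geq 0$ once the arithmetic is carried out), and the coefficient of $\bE[f(M)]$ collapses to $\frac{(1-p)(p-\alpha q(1-p))}{\alpha p}$. Finally I would invoke Lemma~\ref{lem:f_M_value_monotone}, which yields $\bE[f(M)]\geq (p/2)f(\OPT)$, so
\[
  \bE[f_w(Q\cap N)]
  \;\geq\;
  \frac{(1-p)(p-\alpha q(1-p))}{2\alpha}\,f(\OPT)
  \;=\;
  \frac{q(1-p)\bigl(p/(\alpha q)-1+p\bigr)}{2}\,f(\OPT),
\]
matching the claimed bound; the inequality $\bE[f(Q\cap N)]\geq\bE[f_w(Q\cap N)]$ is Property~\ref{prop:w_always_smaller}.

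The main obstacle is simply sign-tracking: in contrast to the nonmonotone case, where the analogous $w(M)$ coefficient is positive, here it is negative, so one must convert via an \emph{upper} bound $\bE[w(M)]\leq\bE[f(M)]$ rather than a lower bound to preserve the direction of the inequality. A secondary subtlety is that the final coefficient of $\bE[f(M)]$ can go negative (making the bound vacuous) unless $p>\alpha q(1-p)$; this is consistent with the statement, since for such bad $p$ the target expression $p/(\alpha q)-1+p$ is itself non-positive, so the bound is trivially true.
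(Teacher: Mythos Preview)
Your proposal is correct and follows essentially the same route as the paper's proof: rearrange the hypothesis, plug in Corollary~\ref{cor:out_weight_monotone}, Lemma~\ref{lem:m_n_ratio_monotone}, and Lemma~\ref{lem:f_w_m_n_ratio_monotone}, convert $\bE[w(M)]$ and $\bE[f_w(M)]$ into $\bE[f(M)]$ via Lemma~\ref{lem:M_equalities}, and finish with Lemma~\ref{lem:f_M_value_monotone}. The only cosmetic difference is bookkeeping of the $f(\varnothing)$ terms---the paper drops $qp\,f(\varnothing)$ up front and then balances the two $f(\varnothing)$ contributions from expanding $\bE[w(M)]$ so they cancel exactly, whereas you first combine the $\bE[w(M)]$ coefficients, use the sign to replace $\bE[w(M)]$ by $\bE[f(M)]$, and drop $qp\,f(\varnothing)$ at the end; both reach the same expression.
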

\begin{proof}
Observe that:
\begin{align} \label{eqn:first_bound_monotone}
	\bE[f_w(Q \cap N)]
	\geq{} &
	\bE[w(Q \cap N)] - q \cdot \bE[w(N) - f_w(N)]\\ \nonumber
	\geq{} &
	\frac{1-p}{\alpha} \cdot \bE[w(M)] - q \cdot \bE[w(N)] + q \cdot \bE[f_w(N)]\\ \nonumber
	\geq{} &
	\frac{1-p}{\alpha} \cdot \bE[w(M)] - \frac{q(1 - p)}{p} \cdot \bE[w(M)] + q(1 - p) \cdot \bE[f_w(M)]
	\enspace,
\end{align}
where the first inequality holds by the assumption of the proposition, the second by Corollary~\ref{cor:out_weight_monotone} and the last by Lemmata~\ref{lem:m_n_ratio_monotone} and~\ref{lem:f_w_m_n_ratio_monotone} and the non-negativity of $f$.

Lemma~\ref{lem:M_equalities} implies the following inequalities:
\begin{gather*}
	\frac{1 - p}{\alpha} \cdot \bE[w(M)]
	=
	\frac{1 - p}{\alpha} \cdot (\bE[f(M)] - f(\varnothing))
	\geq
	\frac{1-p}{\alpha} \cdot \bE[f(M)] - q(1 - p) \cdot f(\varnothing)
	\enspace,\\
	- \frac{q(1 - p)}{p} \cdot \bE[w(M)] = -\frac{q(1 - p)}{p} \cdot (\bE[f(M)] - f(\varnothing)) \geq -\frac{q(1 - p)}{p} \cdot \bE[f(M)] + q(1 - p) \cdot f(\varnothing)\\
	\text{and}\\
	q(1 - p) \cdot \bE[f_w(M)]
	=
	q(1 - p) \cdot \bE[f(M)]
	\enspace.
\end{gather*}
Plugging these inequalities into Inequality~\eqref{eqn:first_bound_monotone} yields:
\begin{align*}
	\bE[f_w(Q \cap N)]
	\geq{} &
	\frac{1 - p}{\alpha} \cdot \bE[f(M)] - \frac{q(1 - p)}{p} \cdot \bE[f(M)] + q(1 - p) \cdot \bE[f(M)]\\
	={} &
	\frac{(1 - p)(p/\alpha - q + qp)}{p} \cdot \bE[f(M)]
	=
	\frac{q(1 - p)(p/(\alpha q) - 1 + p)}{p} \cdot \bE[f(M)]
	\enspace.
\end{align*}
The proposition now follows by combining the last inequality with Lemma~\ref{lem:f_M_value_monotone}.
\end{proof}

Recall that Observation~\ref{obs:diff_monotone} showed in Section~\ref{sec:analysis} that one can always choose $q = 1$. This proof of this observation holds also for Algorithm~\ref{alg:simulated_monotone}, and thus, we get the following corollary.

\begin{corollary} \label{cor:approximation_ratio_general}
For $p = \frac{2\alpha + 1}{2(\alpha + 1)}$, the approximation ratio of Algorithm~\ref{alg:simulated_monotone}
is at most $8\alpha(\alpha+1)$.
\end{corollary}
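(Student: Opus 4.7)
The plan is to simply instantiate Proposition~\ref{prop:general_ratio_monotone} with $q = 1$ at the prescribed value of $p$. First I would invoke Observation~\ref{obs:diff_monotone}, whose proof is purely about $f_w$ and $w$ (it does not depend on the algorithm that produced the relevant sets) and hence carries over verbatim to Algorithm~\ref{alg:simulated_monotone}. This gives $w(Q \cap N) - f_w(Q \cap N) \leq w(N) - f_w(N)$, so the hypothesis of Proposition~\ref{prop:general_ratio_monotone} holds with $q = 1$; note that the side-condition $q \geq 1/\alpha$ is trivially satisfied since $\alpha \geq 1$.

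Plugging $q = 1$ into the proposition yields
\[
\bE[f(Q \cap N)] \geq \frac{(1 - p)\bigl(p/\alpha - 1 + p\bigr)}{2} \cdot f(\OPT).
\]
It then remains to evaluate this expression at $p = \frac{2\alpha + 1}{2(\alpha + 1)}$. A short calculation gives $1 - p = \frac{1}{2(\alpha + 1)}$ and
\[
\frac{p}{\alpha} + p - 1 = p \cdot \frac{\alpha + 1}{\alpha} - 1 = \frac{2\alpha + 1}{2\alpha} - 1 = \frac{1}{2\alpha},
\]
so the lower bound becomes $\frac{1}{8\alpha(\alpha + 1)} \cdot f(\OPT)$, which is exactly the claimed $8\alpha(\alpha + 1)$ competitive ratio.

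The only non-mechanical part is justifying why this particular $p$ is the right choice. Viewing $(1-p)(p/\alpha + p - 1)$ as a quadratic in $p$, differentiation gives the first-order condition $(1 + 1/\alpha)(1 - 2p) + 1 = 0$, whose unique solution in $(0,1)$ is $p = \frac{2\alpha + 1}{2(\alpha + 1)}$; one also checks $p \in (0,1)$ for every $\alpha \geq 1$. Thus this choice of $p$ actually maximizes the lower bound delivered by Proposition~\ref{prop:general_ratio_monotone} under the generic bound $q = 1$. There is no real obstacle in the argument: once Proposition~\ref{prop:general_ratio_monotone} and the monotone-case analogue of Observation~\ref{obs:diff_monotone} are in hand, the corollary reduces to routine algebra and a one-variable optimization.
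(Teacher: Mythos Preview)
Your proof is correct and follows essentially the same approach as the paper: invoke Observation~\ref{obs:diff_monotone} to justify $q=1$, plug into Proposition~\ref{prop:general_ratio_monotone}, and evaluate at the given $p$. The paper does not bother to verify that this $p$ is the maximizer (the statement only asserts the bound for that particular $p$), so your optimization paragraph is a pleasant extra but not required.
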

\begin{proof}
By Observation~\ref{obs:diff_monotone}, we can plug $q = 1$ into Proposition~\ref{prop:general_ratio_monotone}.
Hence, for $p = \frac{2\alpha + 1}{2(\alpha + 1)}$, the proposition implies:
\begin{align*}
	w(Q \cap N)
	\geq{} &
	\frac{\left(1 - \frac{2\alpha + 1}{2(\alpha+1)}\right)\left(\frac{2\alpha + 1}{2\alpha (\alpha + 1)}
    - 1 + \frac{2\alpha + 1}{2(\alpha + 1)}\right)}{2} \cdot f(\OPT)\\
	={} &
	\frac{\frac{1}{2(\alpha + 1)} \cdot \left(\frac{2 \alpha + 1}{2 \alpha (\alpha + 1)} - \frac{1}{2(\alpha + 1)}\right)}{2} \cdot f(\OPT)
	=
	\frac{1}{8 \alpha (\alpha+1)} \cdot f(\OPT)
	\enspace.
	\qedhere
\end{align*}
\end{proof}

Corollary~\ref{cor:approximation_ratio_general} proves the first part of Theorem~\ref{thm:monotone} corresponding to Theorem~\ref{thm:main_nonmonotone}. The proof can be extended to prove the two other parts of the theorem in the same way this is done in Section~\ref{sec:analysis}.

From this point on we assume {\Linear} has, with respect to the matroid $\cM$, the properties guaranteed by Theorem~\ref{thm:opt_probability_monotone}. In other words, for every {\SP} instance over the matroid $\cM$ there exists a random set $S \subseteq E$ obeying:
\begin{compactitem}
	\item $S$ is always an optimal solution of the {\SP} instance.
	\item For every element $u \in E$, $\Pr[\text{$u$ is selected by Linear}] \geq \frac{1}{\alpha} \cdot \Pr[u \in S]$.
\end{compactitem}

By removing appropriately chosen elements from the output of {\Linear} one can get a new algorithm {\LinearPrime} which still has the above properties, but selects no element with probability larger than $1/\alpha$. Clearly such an algorithm always exists although it might be non-efficient and offline, which is fine since we use {\LinearPrime} only for analysis purposes. The following observation shows that {\LinearPrime} must be $\alpha$-competitive, and thus, all the results proved above hold for it.

\begin{observation}
An algorithm having the properties guaranteed by Theorem~\ref{thm:opt_probability_monotone} with respect to a matroid $\cM$ is $\alpha$-competitive for {\SP} over this matroid.
\end{observation}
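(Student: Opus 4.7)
The plan is to prove this observation via a short linearity-of-expectation argument, since the two properties of {\Linear} given in Theorem~\ref{thm:opt_probability_monotone} are essentially tailor-made to give element-wise control on the selection probabilities relative to those of an optimal solution. Recall that in {\SP} the weights $w(u)$ are non-negative, which will be crucial for turning an element-wise probability inequality into a weight inequality.

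First, I would write the expected weight of the algorithm's output $T$ as
\[
	\bE[w(T)] = \sum_{u \in E} w(u) \cdot \Pr[u \in T],
\]
using linearity of expectation. Similarly, for the random set $S$ guaranteed by the hypothesis,
\[
	\bE[w(S)] = \sum_{u \in E} w(u) \cdot \Pr[u \in S].
\]
Since $S$ is always an optimal solution of the {\SP} instance, we have $w(S) = w(\OPT_w)$ deterministically, so $\bE[w(S)] = w(\OPT_w)$.

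Next, I would invoke the second property of {\Linear}: for every $u \in E$, $\Pr[u \in T] \geq \Pr[u \in S]/\alpha$. Multiplying by $w(u) \geq 0$ and summing over $u \in E$ gives
\[
	\bE[w(T)] = \sum_{u \in E} w(u) \cdot \Pr[u \in T] \geq \frac{1}{\alpha} \sum_{u \in E} w(u) \cdot \Pr[u \in S] = \frac{1}{\alpha} \cdot w(\OPT_w),
\]
which is exactly the $\alpha$-competitiveness guarantee. There is no real obstacle here; the only subtlety is to note that non-negativity of weights is what makes the pointwise probability inequality translate into the desired weight inequality after summation, and that $S$ being almost surely optimal (rather than merely optimal in expectation) is what pins $\bE[w(S)]$ to $w(\OPT_w)$.
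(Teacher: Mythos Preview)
Your proof is correct and essentially identical to the paper's own proof: both express $\bE[w(T)]$ and $\bE[w(S)]$ as sums over elements, apply the pointwise probability inequality together with non-negativity of weights, and use that $S$ is always optimal to conclude $\bE[w(S)] = w(\OPT_w)$.
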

\begin{proof}
Let $T$ be the random output set of the algorithm given an {\SP} instance over $M$ with a weight function $w'$. Then:
\[
	\bE[w'(T)]
	=
	\sum_{u \in E} w'(u) \cdot \Pr[u \in T]
	\geq
	\frac{1}{\alpha} \cdot \sum_{u \in E} w'(u) \cdot \Pr[u \in S]
	=
	\frac{1}{\alpha} \cdot \bE[w'(S)]
	\enspace.
\]
The observation now follows since $S$ is always an optimal solution  for the {\SP} instance, and thus, $w'(S)$ is always equal to the value of such an optimal solution.
\end{proof}

We are now ready to prove Theorem~\ref{thm:opt_probability_monotone}.
\begin{proof}[Proof of Theorem~\ref{thm:opt_probability_monotone}]
Let $Q$ and $Q'$ be the sets produced by {\Linear} and {\LinearPrime} when these algorithms are placed in Algorithm~\ref{alg:simulated_monotone}. By construction, $Q \supseteq Q'$, and thus, by the monotonicity of $f$:
\begin{equation} \label{eq:algorithms_connection}
	\bE[f(Q \cap N)] \geq \bE[f(Q' \cap N)]
	\enspace.
\end{equation}

Consider now Corollary~\ref{cor:diff_bound}. It can be verified that this corollary holds also for Algorithm~\ref{alg:simulated_monotone}, and thus, $\bE[w(Q' \cap N) - f_w(Q' \cap N)] \leq \frac{1}{\alpha} \cdot \bE[w(N) - f_w(N)]$. Plugging this inequality into Proposition~\ref{prop:general_ratio_monotone} gives:
\[
	\bE[f(Q' \cap N)]
	\geq
	\frac{(1 - p)(\alpha p/\alpha - 1 + p)}{2\alpha} \cdot f(\OPT)
	=
	\frac{(1 - p)(2p - 1)}{2\alpha} \cdot f(\OPT)
	\enspace.
\]
Choosing $p = 3/4$ now gives:
\[
	\bE[f(Q' \cap N)]
	\geq
	\frac{(1 - 3/4)(3/2 - 1)}{2\alpha} \cdot f(\OPT)
	=
	\frac{1}{16\alpha} \cdot f(\OPT)
	\enspace.
\]
Combing the last inequality with Inequality~\eqref{eq:algorithms_connection} proves that Algorithm~\ref{alg:simulated_monotone} with {\Linear} and $p = 3/4$ is $16\alpha$-competitive for {\MSSP} over $\cM$.
\end{proof}

\section{Results for Specific Matroids} \label{app:specific_matroids}

In this appendix we explain how to get the improved competitive ratios stated in Section~\ref{sec:results} for specific classes of matroids. For transversal matroids and $k$-sparse linear matroids the stated competitive ratios
were obtained through an application of
Theorem~\ref{thm:bound_probability_nonmonotone}.
More precisely, for transversal matroids we
use Theorem~\ref{thm:bound_probability_nonmonotone}
with the \SP\ algorithm
of Korula and P\'al~\cite{KP09},
which is $8$-competitive and picks no element
with probability larger than $1/2$.

For $k$-sparse linear matroids we use Soto's
$k e$-competitive procedure~\cite{S13b}.
To get a stronger result than what we would get through
an application of Theorem~\ref{thm:main_nonmonotone},
we observe that we can
assume that Soto's algorithm selects no
element with probability larger than $q=1/e$.
Soto's algorithm reduces the problem to
the classical matroid secretary problem, 
by losing a factor of $k$.
The classical $e$-competitive algorithm of Dynkin
can easily be adapted such that no element is chosen
with probability more than $1/e$, maintaining
$e$-competitiveness of the algorithm.
More precisely, Dynkin's procedure selects the
heaviest element with some probability $p(n)\geq 1/e$
that only depends on the size
$n=|E|$ of the ground set and can be calculated upfront.
Moreover, no element is selected with a probability
exceeding $p(n)$.
We can now modify Dynkin's algorithm as follows,
to obtain an algorithm that is still $e$-competitive
and selects no element with probability larger than
$1/e$: whenever Dynkin's algorithm would select an
element $u$, we will toss a coin and only select it
with probability $1/(e\cdot p(n))$.
Hence, using Soto's algorithm with this modification
of Dynkin's procedure, we have a $k e$-competitive
algorithm for \SP\ over $k$-sparse linear matroids
that selects no element with probability larger than
$1/e$.
The competitive ratios claimed in
Section~\ref{sec:results} now follow from
Theorem~\ref{thm:bound_probability_nonmonotone}
(for \SSP) and Theorem~\ref{thm:monotone} (for \MSSP).

Unitary partition matroids have a trivial $e$-competitive algorithm applying the classical secretary algorithm to every elements class separately. However, this algorithm has two weaknesses. First, it requires prior knowledge about the number of elements in each class, which we sometimes want to avoid. Second, it might select an element with a probability significantly larger than $1/e$, which prevents an effective use of Theorem~\ref{thm:bound_probability_nonmonotone}. In Section~\ref{ssc:partition_algorithm} we present a simple algorithm avoiding these issues. Plugging this algorithm into Theorem~\ref{thm:bound_probability_nonmonotone} yields the guaranteed competitive ratio for {\SSP} over unitary partition matroids.

Ma et al.~\cite{MTW13} give an algorithm for {\SP} over laminar matroids selecting every element of $\OPT$ with probability at least $\nicefrac{1}{9.6}$ and no element with probability larger than $0.158$. Plugging this algorithm as {\Linear} into Theorems~\ref{thm:bound_probability_nonmonotone} and~\ref{thm:opt_probability_monotone} results in competitive ratios
of $1279$ and $154$ for {\SSP} and {\MSSP} over laminar matroids, respectively.\footnote{The result for {\SSP} can be improved to $1196$ by optimizing the parameter of the algorithm of~\cite{MTW13}.} This already improves over the result of~\cite{MTW13} for {\MSSP} and provides the first $O(1)$-competitive algorithm for {\SSP}. However, by a careful combination of our ideas and a lemma from~\cite{MTW13} one can prove the stronger ratios stated in Section~\ref{sec:results}. The details can be found in Section~\ref{app:laminar}.

\subsection{Algorithm for {\SP} over Unitary Partition Matroids} \label{ssc:partition_algorithm}

In a unitary partition matroid the elements are partitioned into $k$ disjoint non-empty classes $\{G_i\}_{i = 1}^k$. A set $S$ is independent in the matroid if and only if it contains at most one element of every class $G_i$. Algorithm~\ref{alg:partition} is an algorithm for {\SP} over unitary partition matroids. Observe that this algorithm can be implemented with no access to the matroid besides of the independence oracle. As usual, the algorithm assumes all weights are non-negative and disjoint. This assumption is without loss of generality since negative weights can be replaced with zero weights (as long as elements of negative weights are rejected even if the algorithm tries to accept them) and ties between weights can be broken arbitrarily.

\begin{algorithm}[ht]
\caption{\textsf{Partition-{\SP}}$(n)$} \label{alg:partition}
\tcp{Learning Phase}
Let $t \gets \lceil n/e \rceil$.\\
\lWith{probability $t - n/e$}{Let $X \gets t - 1$.}
\lOtherwise{Let $X \gets t$.}
Observe (and reject) the first $X$ elements of the input. Let $L$ be the set of these elements.\\

\BlankLine

\tcp{Selection Phase}
\For{each arriving element $u \in E \setminus L$}
{
	Let $G_u$ be the class of $u$.\\
	\If{$G_u$ has not been marked previously}
	{
		\If{$G_u \cap L \neq \varnothing$}
		{
			\If{$w(u) > \max_{u' \in G_u \cap L} w(u')$\label{ln:weight_comparison}}
			{
				Mark $G_u$.\\
				Accept $u$.\\
			}
		}
		\Else
		{
			Mark $G_u$.\\
			Accept $u$ with probability $X/|L|$.
		}
	}
	Add $u$ to $L$.\\
}
\end{algorithm}

Informally, Algorithm~\ref{alg:partition} observes (and discards) the $X \approx n/e$ first elements. Then, for every arriving element $u$ the algorithms makes the following decisions.
\begin{compactitem}
	\item If $u$'s class has been marked by a previous element, then $u$ is simply discarded. A marked class is a class from which we should accept no new elements. A class is usually marked because some element of it has already been accepted, but not always.
	\item Otherwise, if some elements of $u$'s class have already arrived and $u$ is better than all of them, then $u$ is accepted and its class is marked.
	\item Finally, if no element of $u$'s class has previously arrived, then its class is marked and with some probability $u$ is accepted. This is the only case in which a class can get marked despite the fact that no element of it has been accepted to the solution.
\end{compactitem}

\begin{observation}
Algorithm~\ref{alg:partition} always produces an independent set.
\end{observation}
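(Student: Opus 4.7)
The plan is to verify directly from the control flow of Algorithm~\ref{alg:partition} that at most one element is accepted from each class $G_i$. Since a set is independent in a unitary partition matroid exactly when it contains at most one element of each class, this suffices.

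First I would identify the two code paths in which an element can be accepted. An element $u$ is accepted only if its class $G_u$ is \emph{not} yet marked at the moment $u$ is processed. Inside the ``not marked'' branch, $u$ is accepted in one of two sub-cases: (a) $G_u \cap L \neq \varnothing$ and $w(u) > \max_{u'\in G_u\cap L} w(u')$, in which case $G_u$ is marked and $u$ is accepted deterministically; or (b) $G_u\cap L = \varnothing$, in which case $G_u$ is marked unconditionally and $u$ is accepted with probability $X/|L|$. The key observation to isolate is that \emph{every} acceptance is accompanied by marking $G_u$ in the same iteration. (In sub-case (a), if $w(u)$ does not beat the max, neither marking nor acceptance occurs, which is consistent.)

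Next I would argue by induction on the sequence of processed elements that the set of accepted elements at any time contains at most one element of each class. Suppose $u$ is the current arriving element and the invariant holds before processing $u$. If $G_u$ is already marked, the outer \textbf{if} is false and nothing is accepted, so the invariant is preserved. If $G_u$ is not marked, then by the observation of the previous paragraph no element of $G_u$ has ever been accepted (since any prior acceptance from $G_u$ would have marked it and marks are never removed). Hence even if $u$ is accepted now, $G_u$ contributes exactly one element to the accepted set, and the invariant still holds. Since this holds after processing every element, the final output contains at most one element per class.

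No serious obstacle is expected here; the only subtlety worth flagging is that marking in sub-case (b) happens \emph{before} the probabilistic accept, so it is possible for a class $G_u$ to be marked without any of its elements being accepted. This only strengthens the invariant, as it means fewer subsequent elements will ever be considered for acceptance, and it does not interfere with the conclusion that accepted elements are class-disjoint. Concluding, the output is independent in the unitary partition matroid $\cM$.
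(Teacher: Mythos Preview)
Your proposal is correct and follows essentially the same approach as the paper: the paper's proof is the one-line observation that every acceptance marks the class, after which all further elements of that class are dismissed. Your write-up simply expands this into an explicit case analysis and invariant, but the underlying argument is identical.
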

\begin{proof}
Once an element of a class $G$ is accepted, the class $G$ becomes marked and the algorithm dismisses immediately any future elements of this class.
\end{proof}

For every element $u \in E$, let $L_u$ be the set $L$ before $u$ is processed, \ie, it is the set of elements that arrived before $u$. Additionally, for every class $G_i$, let $u^*_i$ be the element with the largest weight in $G_i$ (which is also the element of $G_i$ in the optimal solution).

\begin{lemma} \label{lem:opt_prob}
For every $1 \leq i \leq k$, $u^*_i$ is accepted with probability $\frac{\lceil n/e \rceil}{n} - \frac{1}{e} + \sum_{j = \lceil n/e \rceil}^{n - 1} \frac{1}{ej}$.
\end{lemma}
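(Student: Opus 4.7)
The plan is to first characterize when $u^*_i$ is accepted purely in terms of the random arrival order and the learning-phase size $X$, then compute the acceptance probability by conditioning on the arrival position $T$ of $u^*_i$ and on $X$, and finally take expectations.

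\emph{Step 1 (Characterization).} By tracing the marking rules of Algorithm~\ref{alg:partition}, I will show that $u^*_i$ is accepted if and only if $T > X$ and one of the following (disjoint) events occurs: \emph{Case 1}, that no element of $G_i \setminus \{u^*_i\}$ arrives in positions $1$ through $T - 1$ (so $u^*_i$ is the first $G_i$-element to arrive), in which case $u^*_i$ is accepted with probability $X/(T - 1)$; or \emph{Case 2}, that the highest-weight element of $G_i \setminus \{u^*_i\}$ among those arriving in positions $1$ through $T - 1$ lies in positions $1$ through $X$, in which case $u^*_i$ is accepted with probability $1$. The key observation is that as long as some learning-phase $G_i$-element dominates every earlier-arriving selection-phase $G_i$-element, none of the latter marks $G_i$; whereas if the best $G_i$-element in positions $1$ to $T - 1$ sits in the selection phase, then either it or some earlier selection-phase $G_i$-element marks $G_i$ before $u^*_i$ arrives. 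The split into a deterministic/randomized acceptance is then immediate from the algorithm, since $G_i \cap L \neq \varnothing$ when $u^*_i$ arrives to an unmarked $G_i$ is precisely Case 2.

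\emph{Step 2 (Conditional acceptance probability).} Fix $t > x$. Conditional on $T = t$, the $n_i - 1$ elements of $G_i \setminus \{u^*_i\}$ occupy a uniformly random $(n_i - 1)$-subset of $\{1, \dots, n\} \setminus \{t\}$ with uniform element-to-position assignment. Letting $K$ be the number of these elements landing in $\{1, \dots, t - 1\}$, we have $\Pr[K = 0 \mid T = t] = \binom{n-t}{n_i-1}/\binom{n-1}{n_i-1}$. The main calculation is a symmetry argument showing that, conditional on $K \geq 1$, the position of the best-weight element of $G_i \setminus \{u^*_i\}$ in positions $1$ to $t - 1$ is uniform over $\{1, \dots, t - 1\}$: given $K$, the occupied positions form a uniform $K$-subset of $\{1, \dots, t - 1\}$ and the assignment of elements to these positions is uniform, so every position in $\{1,\dots,t-1\}$ is equally likely to host the best. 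Hence Case 1 and Case 2 contribute $\Pr[K = 0 \mid T = t] \cdot x/(t - 1)$ and $\Pr[K \geq 1 \mid T = t] \cdot x/(t - 1)$, respectively, which sum to $x/(t - 1)$ independently of $n_i$.

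\emph{Step 3 (Unconditioning).} Since $T$ is uniform on $\{1, \dots, n\}$, it follows that $\Pr[u^*_i \text{ accepted} \mid X = x] = \frac{x}{n}\sum_{j = x}^{n-1} \frac{1}{j}$. Averaging over the two-point distribution of $X$ (which puts mass $\lceil n/e \rceil - n/e$ on $\lceil n/e \rceil - 1$ and the remaining mass on $\lceil n/e \rceil$, so $\E[X] = n/e$) and simplifying algebraically yields the claimed formula. The main obstacle will be Step 1, which requires careful bookkeeping of the marking dynamics (and in particular the distinction between the first $G_i$-element overall and later best-so-far arrivals); the most satisfying step is Step 2, where the dependence on the class size $n_i$ cancels completely and we recover the classical single-class secretary formula.
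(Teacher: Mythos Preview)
Your proposal is correct and follows essentially the same route as the paper. The paper conditions on $X$ and the set $L_{u^*_i}$ of elements arriving before $u^*_i$, identifies the same three cases (no prior $G_i$-element; best prior $G_i$-element in the learning phase; best prior $G_i$-element in the selection phase), observes that the first two each occur/succeed with conditional probability $X/|L_{u^*_i}|$, and then averages over the uniform position of $u^*_i$ and the two-point distribution of $X$ exactly as you do. Your conditioning on $T$ and then on $K$ (rather than directly on $L_{u^*_i}$) is a cosmetic variation of the same argument.
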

\begin{proof}
Fix the variable $X$ and the set $L_{u^*_i}$ (and thus, also the location of $u^*_i$ in the input), and let us calculate the probability $u^*_i$ is accepted. If $|L_{u^*_i}| < X$, then $u^*_i$ arrives during the learning phase (\ie, it is one of the first $X$ elements), and thus, is it not accepted. If $|L_{u^*_i}| \geq X$, then there are three cases.
\begin{compactitem}
	\item The first case is when $G_i \cap L_{u^*_i} = \varnothing$. In this case $u^*_i$ is the first element of $G_i$ encountered by Algorithm~\ref{alg:partition}. Since it is encountered after the learning phase (\ie, it is not one of the first $X$ elements), it is accepted with probability $X/|L_{u^*_i}|$.
	\item The second case is when $G_i \cap L_{u^*_i} \neq \varnothing$ and the element $\hat{u}_i$ with the maximum weight in $G_i \cap L_{u^*_i}$ arrives during the learning phase. In this case every element of $G_i \cap L_{u^*_i}$ arriving after the learning phase fails the condition on Line~\ref{ln:weight_comparison} of Algorithm~\ref{alg:partition}. Thus, $G_i$ is still unmarked when $u^*_i$ arrives, and thus, it is accepted. Since the order of the elements in $L_{u^*_i}$ is uniformly random, this case happens with probability $X / |L_{u^*_i}|$ conditioned on $X$ and $L_{u^*_i}$.
	\item The final case is when $G_i \cap L_{u^*_i} \neq \varnothing$ and the element $\hat{u}_i$ with the maximum weight in $G_i \cap L_{u^*_i}$ arrives after the learning phase. In this case, either $G_i$ is already marked when $\hat{u}_i$ arrives or $G_i$ is marked when $\hat{u}_i$ is processed. Either way, when $u^*_i$ arrives $G_i$ is already marked, and thus, $u^*_i$ is dismissed.
\end{compactitem}
The above cases imply:
\[
	\Pr[\mbox{$u^*_i$ is accepted} \mid X, L_{u^*_i}]
	=
	\begin{cases}
		\frac{X}{|L_{u^*_i}|} & \mbox{if $|L_{u^*_i}| \geq X$} \enspace,\\
		0 & \mbox{otherwise} \enspace.
	\end{cases}
	\enspace.
\]

Regardless of $X$, the arrival time of $u^*_i$ is distributed uniformly between $1$ and $n$. Hence,
\[
	\Pr[\mbox{$u^*_i$ is accepted} \mid X]
	=
	\frac{1}{n} \cdot \sum_{j = X + 1}^n \frac{X}{j - 1}
	=
	\sum_{j = X}^{n - 1} \frac{X}{nj}
	\enspace.
\]
The law of total probability now gives:
\begin{align*}
	\Pr[\mbox{$u^*_i$ is accepted}]
	={} &
	\left(t - \frac{n}{e}\right) \cdot \sum_{j = t - 1}^{n - 1} \frac{t - 1}{nj} + \left(1 - t + \frac{n}{e}\right) \cdot \sum_{j = t}^{n - 1} \frac{t}{nj}\\
	={} &
	\left(\frac{t}{n} - \frac{1}{e}\right) + \left[(t - 1)\left(t - \frac{n}{e}\right) + t\left(1 - t + \frac{n}{e}\right)\right] \cdot \sum_{j = t}^{n - 1} \frac{1}{nj}\\
	={} &
	\frac{t}{n} - \frac{1}{e} + \sum_{j = t}^{n - 1} \frac{1}{ej}
	=
	\frac{\lceil n/e \rceil}{n} - \frac{1}{e} + \sum_{j = \lceil n/e \rceil}^{n - 1} \frac{1}{ej}
	\enspace.
	\qedhere
\end{align*}
\end{proof}

For ease of notation, let us denote $\alpha(n) = \left(\lceil n/e \rceil/n - e^{-1} + \sum_{j = \lceil n/e \rceil}^{n - 1} (ej)^{-1}\right)^{-1}$. Lemma~\ref{lem:opt_prob} shows that every element of $\OPT$ is accepted by Algorithm~\ref{alg:partition} with probability $1/\alpha(n)$, and thus, Algorithm~\ref{alg:partition} is $\alpha(n)$-competitive for {\SP} over unitary partition matroids.

\begin{observation} \label{obs:math_calc}
For every $n \geq 1$, $\alpha(n) \leq e$.
\end{observation}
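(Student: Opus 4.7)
The claim $\alpha(n) \leq e$ is equivalent to showing that the quantity
\[
\frac{\lceil n/e \rceil}{n} - \frac{1}{e} + \sum_{j = \lceil n/e \rceil}^{n - 1} \frac{1}{ej}
\]
is at least $1/e$. Setting $t = \lceil n/e \rceil$, the plan is to show the cleaner inequality
\[
\frac{t}{n} + \frac{1}{e}\sum_{j=t}^{n-1} \frac{1}{j} \geq \frac{2}{e}.
\]
The key observation is that $t/n \geq 1/e$ by definition of the ceiling, so the only question is how small the harmonic tail can be.

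The natural next step is to lower bound the harmonic tail by an integral: since $1/j \geq \int_j^{j+1} dx/x$, we have $\sum_{j=t}^{n-1} 1/j \geq \int_t^n dx/x = \ln(n/t)$, which holds trivially in the degenerate case $t=n$ as well. Substituting, it suffices to show
\[
\frac{t}{n} + \frac{1}{e}\ln\!\left(\frac{n}{t}\right) \geq \frac{2}{e}.
\]
With the substitution $r := t/n \in [1/e, 1]$, this becomes $h(r) := r - \tfrac{1}{e}\ln r \geq \tfrac{2}{e}$.

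Now the proof reduces to a single-variable calculus exercise: differentiate $h$ to get $h'(r) = 1 - 1/(er)$, which vanishes exactly at $r = 1/e$, is negative for $r < 1/e$ and positive for $r > 1/e$. Hence $h$ attains its global minimum at $r = 1/e$, where $h(1/e) = 1/e - (1/e)\ln(1/e) = 2/e$. This gives the required bound for every $r$, and in particular for our $r = t/n \geq 1/e$.

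I do not expect any real obstacle; the only minor subtlety is handling the edge case where the summation range is empty (which happens only at $n=1$, since $\lceil n/e\rceil \leq n$ with equality only there), but this case can be checked by hand ($\alpha(1) = e/(e-1) < e$) or absorbed into the integral bound with the convention that an empty sum equals zero. If a slightly sharper bound is ever needed, one could instead use $\sum_{j=t}^{n-1} 1/j \geq \ln(n/t) + \tfrac{1}{2n} - \tfrac{1}{2t}$ from the trapezoidal estimate, but the plain integral bound is already tight enough for $\alpha(n) \leq e$.
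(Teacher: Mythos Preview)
Your proof is correct and follows essentially the same approach as the paper: both reduce to the integral bound $\sum_{j=t}^{n-1} 1/j \geq \ln(n/t)$ and then verify the resulting one-variable inequality $t/n + \tfrac{1}{e}\ln(n/t) \geq 2/e$. The only cosmetic difference is in that last step: the paper splits the integral as $\int_{n/e}^n - \int_{n/e}^{t}$ and bounds the correction term linearly (equivalently, uses the tangent line $\ln r \leq er - 2$ at $r=1/e$), whereas you minimize $h(r) = r - \tfrac{1}{e}\ln r$ directly via calculus; both arrive at the same tight bound with equality at $r = 1/e$.
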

\begin{proof}
\begin{align*}
	\sum_{j = \lceil n/e \rceil}^{n - 1} \frac{1}{ej}
	\geq{} &
	\frac{1}{e} \cdot \int_{\lceil n/e \rceil}^n \frac{dx}{x}
	=
	\frac{1}{e} \cdot \left[\int_{n/e}^n \frac{dx}{x} - \int_{n/e}^{\lceil n/e \rceil} \frac{dx}{x} \right]\\
	\geq{} &
	\frac{1}{e} \cdot [\ln x]_{\lceil n/e \rceil}^n - \int_{n/e}^{\lceil n/e \rceil} \frac{dx}{n}
	=
	\frac{1}{e} + \frac{n/e - \lceil n/e \rceil}{n}
	=
	\frac{2}{e} - \frac{\lceil n/e \rceil}{n}
	\enspace.
	\qedhere
\end{align*}
\end{proof}

The next lemma completes the analysis of Algorithm~\ref{alg:partition}.
\begin{lemma}
No element is accepted with probability larger than $1/\alpha(n)$ by Algorithm~\ref{alg:partition}.
\end{lemma}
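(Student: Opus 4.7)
The plan is to reduce the bound for an arbitrary element $u$ to the bound already established for $u^*_i$ in Lemma~\ref{lem:opt_prob}, by a monotonicity/coupling argument. Specifically, I will show that if one modifies the instance by raising $w(u)$ to some value exceeding every other weight in $u$'s class $G_i$, then the acceptance probability of $u$ in the modified instance is at least its acceptance probability in the original instance. Since in the modified instance $u$ plays the role of the class-maximum $u^*_i$, Lemma~\ref{lem:opt_prob} will yield $\Pr[u \text{ accepted}] \leq 1/\alpha(n)$ as required.

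To execute the reduction, I would couple the two executions by fixing the same arrival order and the same random coins ($X$ and the Bernoulli flips used in the ``$G_v\cap L_v=\varnothing$'' branches). With this coupling, the learning-phase set $L$ of size $X$ is identical in the two executions. I would then argue by induction on the processing order that for every element $v$ processed strictly before $u$, the decisions made by the algorithm (whether $G_v$ becomes marked, and whether $v$ is accepted) are identical on both executions: each such decision depends only on the weights of elements that have already arrived, none of which is $u$ itself, so changing $w(u)$ has no effect. In particular, whether $G_i$ is marked at the moment $u$ is processed, and the set $L_u$ at that moment, are the same in both executions.

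The heart of the argument is then to examine the three possible branches taken at $u$ itself. If $G_i$ is already marked, $u$ is rejected in both executions. If $G_i \cap L_u = \varnothing$, then the algorithm marks $G_i$ and accepts $u$ with probability $X/|L_u|$ using the coupled coin, which is independent of $w(u)$. The only branch where $w(u)$ matters is the comparison $w(u) > \max_{u' \in G_i \cap L_u} w(u')$ on Line~\ref{ln:weight_comparison}: raising $w(u)$ can only turn a ``reject'' into an ``accept''. Hence the acceptance of $u$ is monotone in $w(u)$, giving the required inequality between the two executions.

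I do not expect a real obstacle in carrying this out. The only point requiring care is to make sure the coupling genuinely preserves $X$, $L$, and every per-class coin flip, so that the induction step works cleanly and the comparison at $u$ is the only place where the two executions can diverge. Once this is pinned down, the bound follows immediately from Lemma~\ref{lem:opt_prob} applied to the modified instance.
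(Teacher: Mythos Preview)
Your argument is correct, and it differs from the paper's in the coupling it uses. You keep the arrival order and all random coins fixed and raise $w(u)$ until it exceeds every other weight in its class $G_i$; monotonicity of the weight comparison on Line~\ref{ln:weight_comparison} then shows that $u$ is accepted at least as often in the modified instance, where $u$ is now the class-maximum and Lemma~\ref{lem:opt_prob} gives exactly $1/\alpha(n)$. The paper instead keeps the weights fixed and, for each arrival order $\pi$, swaps the positions of $u$ and $u^*_i$ to obtain an order $\sigma(\pi)$: whenever $u$ is accepted with positive probability under $\pi$ it must precede $u^*_i$, so the algorithm reaches that slot in the same state under both orders, and since $w(u^*_i)>w(u)$ the element $u^*_i$ is accepted under $\sigma(\pi)$ with at least the same conditional probability. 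Summing over $\pi$ and using that $\sigma$ is a bijection yields $\Pr[u\text{ accepted}]\le\Pr[u^*_i\text{ accepted}]=1/\alpha(n)$. Both routes are short and equally effective; yours isolates a clean monotonicity-in-weight property of the algorithm, while the paper's avoids modifying the instance altogether and works purely by permuting the input order.
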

\begin{proof}
Fix an arbitrary element $u$ of a class $G_i$. For every arrival order $\pi$ of the elements, let us define $\sigma(\pi)$ as the order obtained from $\pi$ by replacing $u$ and $u^*_i$. Observe that $\sigma$ is a bijection.

Consider now an arbitrary order $\pi$ given which Algorithm~\ref{alg:partition} accepts $u$ with probability $p > 0$. Clearly $u$ proceeds $u^*_i$ in $\pi$ because otherwise $w(u)$ could not be larger than the weights or all elements appearing earlier in $\pi$, which is a necessary condition for $u$ to be accepted. Thus, Algorithm~\ref{alg:partition} is in the same state when processing $u$ given $\pi$ and $u^*_i$ given $\sigma(\pi)$. Since $w(u^*_i) > w(u)$, this implies that Algorithm~\ref{alg:partition} accepts $u^*_i$ with probability of at least $p$ given the order $\sigma(\pi)$.

Recalling that $\sigma$ is a bijection, we now get:
\begin{align*}
	\Pr[\mbox{$u$ is accepted}]
	={} &
	\sum_{\pi} \frac{\Pr[\mbox{$u$ is accepted given $\pi$}]}{n!}
	\geq
	\sum_{\pi} \frac{\Pr[\mbox{$u^*_i$ is accepted given $\sigma(\pi)$}]}{n!}\\
	={} &
	\sum_{\pi} \frac{\Pr[\mbox{$u^*_i$ is accepted given $\pi$}]}{n!}
	=
	\Pr[\mbox{$u^*_i$ is accepted}]
	=
	\frac{1}{\alpha(n)}
	\enspace.
	\qedhere
\end{align*}
\end{proof}

\noindent \textbf{Remark}: The fact that Algorithm~\ref{alg:partition} is $\alpha(n)$-competitive while accepting no element with a probability larger than $1/\alpha(n)$ is sufficient to
extract all the power of
Theorem~\ref{thm:bound_probability_nonmonotone}
and get the result stated in the table in Section~\ref{sec:results}.
However, sometimes it is useful to have an algorithm that is $e$-competitive and accepts no element with a probability larger than $1/e$. This can be achieved by executing Algorithm~\ref{alg:partition} and accepting with probability $\alpha(n)/e$ every element accepted by Algorithm~\ref{alg:partition}.

\subsection{Results for Laminar Matroids} \label{app:laminar}

In this section we prove the competitive ratios stated in Section~\ref{sec:results} for laminar matroids. Throughout this section we assume $\cM$ is a laminar matroid. Ma et al.~\cite{MTW13} consider an algorithm which constructs the sets $M$ and $N$ in the same way this is done by Algorithm~\ref{alg:simulated_monotone}. For this algorithm, and thus, also for Algorithm~\ref{alg:simulated_monotone}, they proved the following lemma (assuming $\cM$ is laminar).

\begin{lemma}[Rephrased version of Lemma~10 of~\cite{MTW13}] \label{lem:laminar_external}
Consider Algorithm~\ref{alg:simulated_monotone} with 
{\Linear} being the procedure that accepts
every element of $N$ whose acceptance does not violate the
independence of the current solution.
Then 
\[
	\bE[w(Q)]
	\geq
	\left(1 - \frac{2\beta}{(1 - \beta)^3} \right) \cdot \bE[w(N)]
	\enspace,
\]
where $\beta = 2e(1 - p)$.
\end{lemma}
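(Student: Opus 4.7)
My plan is to exploit the fact that this lemma is billed as a rephrasing of Lemma~10 of \cite{MTW13}: the cleanest proof is to verify that the joint distribution of $(M, N)$ produced by Algorithm~\ref{alg:simulated_monotone} coincides with the one in the algorithm analyzed by Ma et al., and then transfer their bound directly. To do so, I would observe that Algorithm~\ref{alg:simulated_monotone} processes elements in the order of decreasing marginal $f$-value with respect to the current $M$, and that for each $u$ with $M + u \in \cI$, the element is added to $M$ with probability $p$ and to $N$ with probability $1-p$. Hence, conditioned on the trajectory of the offline greedy process that builds $M \cup N$, each element lands in $N$ independently with probability $1-p$. This matches the probabilistic set-up of \cite{MTW13}, and the weight function involved in the lemma statement is the same linear $w$ used by them.

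Independently of that reduction, a bottom-up sketch of the inequality would proceed in three steps. First, for any $u \in N$ that the specified procedure fails to accept, there must be some set $T$ in the laminar family with $u \in T$ whose capacity $c(T)$ is already saturated by elements of $Q$ when $u$ arrives. Since at most $c(T)$ elements of $T \cap (M \cup N)$ can belong to $M$ (as $M$ is independent), saturation in $Q$ reduces to controlling the Binomial$(k, 1-p)$ tail governing how many of the elements of $T \cap (M \cup N)$ land in $N$ rather than $M$. Second, summing these tail estimates along the laminar path from $u$ to the root yields a geometric-type series; tracking the three sources of summation (laminar depth, accepted $N$-elements within each $T$, and arrival time within the greedy trajectory) produces precisely the cubic dependence that gives the factor $(1-\beta)^{-3}$ with $\beta = 2e(1-p)$. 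Third, since greedy processes elements in decreasing marginal weight order, the per-element rejection bound can be converted directly to a bound on $\bE[w(N) - w(Q)]$, yielding the stated inequality.

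The main obstacle is the second step: obtaining the exact constant $2\beta/(1-\beta)^3$ requires a recursive combinatorial accounting on the laminar tree, where one must simultaneously handle the different depths of laminar ancestors and the dependencies between saturation events at nested sets. This bookkeeping is precisely what \cite{MTW13} carry out in their proof of Lemma~10, and since the distributional setup is identical, the pragmatic plan is to verify the distributional equivalence in one paragraph and invoke their bound, rather than to reproduce the tree-based calculation here.
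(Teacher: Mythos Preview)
Your approach matches the paper's treatment exactly: the paper does not give an independent proof of this lemma but simply observes that Ma et al.'s algorithm constructs the sets $M$ and $N$ in the same way as Algorithm~\ref{alg:simulated_monotone}, and then invokes their Lemma~10 directly. Your additional bottom-up sketch of the laminar tail-bound argument goes beyond what the paper provides, but your primary plan---verify distributional equivalence in one paragraph and cite \cite{MTW13}---is precisely what the paper does.
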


The last lemma can be used instead of Corollary~\ref{cor:out_weight_monotone} in the analysis of Algorithm~\ref{alg:simulated_monotone}. This allows us to prove the result for {\MSSP}.

\begin{corollary}
There exists a $144$-competitive algorithm for {\MSSP} over laminar matroids.
\end{corollary}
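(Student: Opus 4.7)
The plan is to use Algorithm~\ref{alg:simulated_monotone} with the particular {\Linear} procedure from Lemma~\ref{lem:laminar_external} (the procedure that greedily accepts every element of $N$ whose acceptance keeps the solution independent in $\cM$). This procedure is trivial to implement online on the restriction of $\cM$ to $E\setminus L$, so the equivalence between Algorithms~\ref{alg:online_monotone} and \ref{alg:simulated_monotone} still applies. The key conceptual move is that, for laminar matroids and $p$ close enough to $1$, Lemma~\ref{lem:laminar_external} gives us a direct bound of the form $\E[w(Q)] \geq \gamma \cdot \E[w(N)]$ with $\gamma = 1 - 2\beta/(1-\beta)^3$ and $\beta = 2e(1-p)$, which is stronger than anything one could extract from a generic $\alpha$-competitiveness assumption on {\Linear}.

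The next step is to replicate the proof of Proposition~\ref{prop:general_ratio_monotone} almost verbatim, but with Corollary~\ref{cor:out_weight_monotone} replaced by this new bound. Concretely, since $w$ is supported on $N$ we have $w(Q\cap N)=w(Q)$; Observation~\ref{obs:diff_monotone} lets us take $q=1$; Lemma~\ref{lem:m_n_ratio_monotone} converts $\E[w(N)]$ to $\frac{1-p}{p}\E[w(M)]$; and Lemma~\ref{lem:f_w_m_n_ratio_monotone} bounds $\E[f_w(N)]$ from below. Assembling these as in the original proof yields
\[
\E[f_w(Q\cap N)] \;\geq\; (\gamma-1)\cdot\tfrac{1-p}{p}\E[w(M)] \;+\; p\,f(\varnothing) \;+\; (1-p)\,\E[f_w(M)].
\]
Using Lemma~\ref{lem:M_equalities} to rewrite $\E[w(M)]=\E[f(M)]-f(\varnothing)$ and $\E[f_w(M)]=\E[f(M)]$, the coefficient of $f(\varnothing)$ simplifies to $(p^2+(1-\gamma)(1-p))/p \geq 0$, which we can drop by monotonicity (i.e.\ $f(\varnothing)\geq 0$). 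What remains is
\[
\E[f_w(Q\cap N)] \;\geq\; \frac{(\gamma+p-1)(1-p)}{p}\,\E[f(M)].
\]

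Finally, apply Lemma~\ref{lem:f_M_value_monotone} to lower bound $\E[f(M)] \geq (p/2)\,f(\OPT)$ and Property~\ref{prop:w_always_smaller} to pass from $f_w$ to $f$, obtaining
\[
\E[f(Q\cap N)] \;\geq\; \frac{(\gamma+p-1)(1-p)}{2}\,f(\OPT).
\]
The only remaining obstacle is numerical: one must verify that some $p\in(0,1)$ makes the reciprocal of this coefficient at most $144$. The difficulty is that $\gamma+p-1>0$ requires $\beta = 2e(1-p)$ to be quite small, which pushes $p$ close to $1$ and shrinks the factor $(1-p)$ simultaneously; so the two terms trade off. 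A direct calculation at (say) $p\approx 0.975$ gives $\beta \approx 0.136$, hence $\gamma\approx 0.58$, and the resulting coefficient is $\geq 1/144$. Picking such a $p$ in Algorithm~\ref{alg:simulated_monotone} (equivalently, in the online Algorithm~\ref{alg:online_monotone}) therefore yields a $144$-competitive algorithm for {\MSSP} over laminar matroids.
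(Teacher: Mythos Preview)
Your proposal is essentially the paper's own proof: you use Algorithm~\ref{alg:simulated_monotone} with the greedy {\Linear} of Lemma~\ref{lem:laminar_external}, combine it with Observation~\ref{obs:diff_monotone}, Lemmata~\ref{lem:m_n_ratio_monotone}, \ref{lem:f_w_m_n_ratio_monotone}, \ref{lem:M_equalities}, and \ref{lem:f_M_value_monotone}, and arrive at the identical bound $\E[f(Q\cap N)] \geq \tfrac{(1-p)}{2}\bigl(p - \tfrac{2\beta}{(1-\beta)^3}\bigr)f(\OPT)$ with $\beta=2e(1-p)$. One tiny numerical slip: at $p=0.975$ the reciprocal of the coefficient is about $144.5$, not quite $144$; the paper's optimizer is $p\approx 0.9763$, where the ratio does drop below $144$.
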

\begin{proof}
Consider Algorithm~\ref{alg:simulated_monotone}
using the algorithm {\Linear} defined by Lemma~\ref{lem:laminar_external}.
By Lemma~\ref{lem:laminar_external} we get
\[
	\bE[w(Q \cap N)]
	=
	\bE[w(Q)]
	\geq
	\left(1-\frac{2\beta}{(1-\beta)^3}\right) \cdot \bE[w(N)],
\]
where $\beta=2e(1-p)$.
Thus,
\begin{align} \label{eqn:first_bound_laminar_monotone}
	\bE[f_w(Q \cap N)]
	\geq{} &
	\bE[w(Q \cap N)] - \bE[w(N)] + \bE[f_w(N)]\\ \nonumber
	\geq{} &
	\left(1-\frac{2\beta}{(1-\beta)^3}\right)
    \cdot \bE[w(N)] - \bE[w(N)]
    + (1-p) \cdot \bE[f_w(M)] \\ \nonumber
	={} &
	(1-p) \cdot \bE[f_w(M)] - \frac{2\beta}{(1-\beta)^3} \cdot \bE[w(N)]
	\enspace,
\end{align}
where the first inequality holds by Observation~\ref{obs:diff_monotone} and the second by Lemmata~\ref{lem:f_w_m_n_ratio_monotone} and the non-negativity of $f$.

Recall that, by Lemmata~\ref{lem:M_equalities} and~\ref{lem:m_n_ratio_monotone}, $f_w(M) = f(M)$ and
\[
	\bE[w(N)]
	=
	\frac{1 - p}{p} \cdot \bE[w(M)]
	=
	\frac{1-p}{p} \cdot \bE[f(M) - f(\varnothing)]
	\leq
	\frac{1-p}{p} \cdot \bE[f(M)]
	\enspace.
\]
Plugging these inequalities into Inequality~\eqref{eqn:first_bound_laminar_monotone} yields:
\begin{align*}
	\bE[f_w(Q \cap N)]
	\geq{} &
	(1-p) \cdot \bE[f_w(M)] - \frac{2\beta}{(1-\beta)^3} \cdot \bE[w(N)]\\
	\geq{} &
	(1-p) \cdot \bE[f(M)] - \frac{2\beta}{(1-\beta)^3}\cdot
     \frac{1-p}{p} \cdot \bE[f(M)] \\
  \geq{} & 
  \frac{p}{2}\left(
    1-p - \frac{2\beta}{(1-\beta)^3} \cdot \frac{1-p}{p}
  \right) \cdot \bE[f(\OPT)]\\
  ={} & 
  \frac{p}{2}\left(
    1-p - \frac{4 e (1-p)}{(1-2e(1-p))^3} \cdot \frac{1-p}{p}
  \right) \cdot \bE[f(\OPT)],
\end{align*}
where the last inequality follows from $\bE[f(M)] \geq \frac{p}{2} f(\OPT)$,
as stated by Lemma~\ref{lem:f_M_value_monotone}.
The above expression is maximized for $p\approx 0.976299$ leading
to the claimed competitive ratio.
\end{proof}

To prove the result for {\SSP}, we first need an observation about Algorithm~\ref{alg:simulated}. Consider Algorithm~\ref{alg:simulated_M_N_simplified}.

\begin{algorithm}[ht]
\caption{\textsf{Middle-Algorithm}$(p)$} \label{alg:simulated_M_N_simplified}
\DontPrintSemicolon
\tcp{Initialization}
Let $M, N \gets \varnothing$.\\
Let $E' \gets E$.\\

\BlankLine

\tcp{Main Loop}
\While{$E' \neq \varnothing$}
{
	Let $u$ be the element of $E'$ maximizing $f(u \mid M)$, and remove $u$ from $E'$.\\
	Let $w(u) \gets f(u \mid M)$.\\
	\With{probability $(1 + p)/2$}
	{
		\If{$M + u$ is independent in $\cM$ and $w(u) \geq 0$ \label{ln:acceptance_conditions}}
		{
			\lWith{probability $1/(1 + p)$}
			{
				Add $u$ to $M$.
			}
			\lOtherwiseWith{probability $p / (1 + p)$}
			{
				Add $u$ to $N$.
			}
		}
	}
}
\end{algorithm}

Observe that every processed element obeying the conditions on Line~\ref{ln:acceptance_conditions} of the Algorithm~\ref{alg:simulated_M_N_simplified} is added by both Algorithms~\ref{alg:simulated} and Algorithm~\ref{alg:simulated_M_N_simplified} with probability $1/2$ to $M$ and with probability $p/2$ to $N$. Hence, both algorithms produce the same joint distribution of $M$ and $N$. On the other hand, in terms of the sets $M$ and $N$ Algorithm~\ref{alg:simulated_M_N_simplified} is identical to Algorithm~\ref{alg:simulated_monotone} up to two modifications:
\begin{compactitem}
	\item The value $p$ in Algorithm~\ref{alg:simulated_monotone} is replaced with $1/(1 + p)$.
	\item Algorithm~\ref{alg:simulated_M_N_simplified} dismisses some elements without processing them. These elements include all the elements of negative weights and every other element with probability $1 - (1 + p)/2$.
\end{compactitem}
The second modification does not affect the proof of~\cite{MTW13} for Lemma~\ref{lem:laminar_external} (in fact, the proof requires elements of negative weight to be dismissed). Hence, we get the following lemma which is obtained from Lemma~\ref{lem:laminar_external} by replacing $p$ with $1/(1 + p)$.

\begin{lemma} \label{lem:laminar_nonmonotone}
Consider Algorithm~\ref{alg:simulated} with 
{\Linear} being the procedure that accepts
every element of $N$ whose acceptance does not violate the
independence of the current solution.
Then 
\[
	\bE[w(Q)]
	\geq
	\left(1 - \frac{2\beta}{(1 - \beta)^3} \right) \cdot \bE[w(N)]
	\enspace,
\]
where $\beta = 2e(1 - 1/(1 + p))$.
\end{lemma}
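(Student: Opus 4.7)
The plan is to reduce Lemma~\ref{lem:laminar_nonmonotone} to Lemma~\ref{lem:laminar_external} by a coupling argument that bridges Algorithm~\ref{alg:simulated} and Algorithm~\ref{alg:simulated_monotone} through the auxiliary Algorithm~\ref{alg:simulated_M_N_simplified}. The two algorithms process elements in different orders (random in the online interpretation vs.\ by decreasing marginal value in the offline simulation), but Algorithm~\ref{alg:simulated_M_N_simplified} has already been constructed precisely to make the comparison transparent in terms of the joint distribution of $(M,N,w)$.

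First, I would verify that Algorithm~\ref{alg:simulated} and Algorithm~\ref{alg:simulated_M_N_simplified} induce the same joint distribution of $(M,N,w)$. The key observation is that, for each element $u$ processed with $M+u\in\cI$ and $w(u)=f(u\mid M)\geq 0$, both algorithms put $u$ into $M$ with probability $1/2$ and into $N$ with probability $p/2$, independently of past randomness. In Algorithm~\ref{alg:simulated_M_N_simplified} this is decomposed as: with probability $(1+p)/2$ we ``act'', and conditioned on acting we put $u$ into $M$ with probability $1/(1+p)$ and into $N$ with probability $p/(1+p)$, which multiplies out to the same marginal probabilities $1/2$ and $p/2$. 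Since the criterion for $u$ being considered is the same in both algorithms, and since the weights $w(u)$ are defined identically as $f(u\mid M)$, the sets $M$, $N$ and the weight function $w$ (restricted to $N$) have identical joint distributions.

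Next, I would observe that Algorithm~\ref{alg:simulated_M_N_simplified} is, as far as the construction of $(M,N,w)$ is concerned, a copy of Algorithm~\ref{alg:simulated_monotone} with parameter $p' := 1/(1+p)$ in place of $p$, \emph{plus} two harmless modifications: (i) elements of negative marginal value are dropped rather than steered into $N_0$, and (ii) each otherwise-eligible element is globally ignored with probability $1-(1+p)/2$. Both modifications only reduce $N$ (and the corresponding $w$-weight); neither adds any element to $N$ that would not have appeared there under Algorithm~\ref{alg:simulated_monotone}. The proof of Lemma~\ref{lem:laminar_external} in~\cite{MTW13} controls $\E[w(Q)]/\E[w(N)]$ via structural properties of the laminar matroid together with the random-order arrival of elements of $N$; these properties are preserved under the modifications, since dropping elements before they ever enter $N$ does not change the conditional law of the surviving elements nor the laminar structure exploited in their argument. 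Hence Lemma~\ref{lem:laminar_external} applies verbatim to Algorithm~\ref{alg:simulated_M_N_simplified} with $p$ replaced by $p' = 1/(1+p)$.

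Finally, I would assemble the pieces. Substituting $p' = 1/(1+p)$ into the definition of $\beta$ from Lemma~\ref{lem:laminar_external} yields $\beta = 2e(1-p') = 2e\bigl(1 - 1/(1+p)\bigr)$, which is precisely the expression in Lemma~\ref{lem:laminar_nonmonotone}. Combined with the equality in distribution of $(M,N,w)$ between Algorithms~\ref{alg:simulated} and~\ref{alg:simulated_M_N_simplified}, the bound on $\E[w(Q)]/\E[w(N)]$ transfers to Algorithm~\ref{alg:simulated}, completing the proof. The main obstacle is the one identified above: namely checking that the proof of~\cite{MTW13} is robust to the extra thinning step. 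This is a qualitative check against the structure of the argument in~\cite{MTW13} rather than a recomputation, and if it failed we would have to redo their analysis tracking the thinning explicitly; but since the thinning is independent of all other randomness and monotone (it only removes elements from $N$), the competitive bound between $Q$ and $N$ is unaffected.
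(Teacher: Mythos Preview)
Your proposal is correct and follows essentially the same approach as the paper: the paper also introduces Algorithm~\ref{alg:simulated_M_N_simplified} as a bridge, observes that it shares the joint distribution of $(M,N)$ with Algorithm~\ref{alg:simulated} via the same probability decomposition you give, identifies it with Algorithm~\ref{alg:simulated_monotone} under the substitution $p\mapsto 1/(1+p)$ up to the two modifications you list, and then asserts (as you do) that these modifications do not affect the proof of~\cite{MTW13}. Your write-up is slightly more explicit about why the thinning is harmless, but the structure and content of the argument are the same.
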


The last lemma can be used instead of Corollary~\ref{cor:out_weight} in the analysis of Algorithm~\ref{alg:simulated}. This allows us to prove the result for {\SSP}.

\begin{corollary}
There exists a $585$-competitive algorithm for {\SSP} over laminar matroids.
\end{corollary}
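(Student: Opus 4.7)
The plan is to mirror the monotone laminar argument, but carried out inside the non-monotone framework of Algorithm~\ref{alg:simulated}, using the convoluted function $f_w$ as a surrogate for $f$ exactly as in the proof of Theorem~\ref{thm:main_nonmonotone}. Concretely, I would instantiate Algorithm~\ref{alg:simulated} with the same ``greedy'' choice of {\Linear} that appears in Lemma~\ref{lem:laminar_nonmonotone}, namely the procedure that accepts every arriving element of $N$ whose inclusion preserves independence. Since $N\subseteq E\setminus L$ and all $w$-weights are non-negative, this is a legitimate choice of {\Linear}, and Lemma~\ref{lem:laminar_nonmonotone} then yields the strong per-edge guarantee
\[
   \bE[w(Q)] \;\geq\; \left(1 - \tfrac{2\beta}{(1-\beta)^3}\right) \bE[w(N)], \qquad \beta \;=\; 2e\bigl(1-\tfrac{1}{1+p}\bigr) \;=\; \tfrac{2ep}{1+p},
\]
which is a much better bound than what Corollary~\ref{cor:out_weight} could give through a generic $\alpha$.

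Next I would assemble the chain of inequalities used in Proposition~\ref{prop:general_ratio_nonmonotone}, but substituting this laminar-specific bound in place of Corollary~\ref{cor:out_weight}. Starting from Observation~\ref{obs:diff_monotone}, $\bE[f_w(Q\cap N)] \geq \bE[w(Q\cap N)] - \bE[w(N)-f_w(N)]$, I would apply Lemma~\ref{lem:laminar_nonmonotone} to the first term, and then invoke Lemma~\ref{lem:f_w_m_n_ratio} to rewrite $\bE[f_w(N)]$ in terms of $\bE[f_w(M)]$ and $f(\varnothing)$, together with Lemma~\ref{lem:m_n_ratio} to express $\bE[w(N)] = p\bE[w(M)]$. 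Using Lemma~\ref{lem:M_equalities} to collapse $f_w(M) = f(M)$ and $w(M) = f(M)-f(\varnothing)\leq f(M)$, the bound becomes a single linear combination of $\bE[f(M)]$ and $f(\varnothing)$, all non-negative. A short calculation then gives
\[
   \bE[f(Q\cap N)] \;\geq\; \bE[f_w(Q\cap N)] \;\geq\; \left[\tfrac{p(1-p)}{1+p} - \tfrac{2\beta p}{(1-\beta)^3}\right]\bE[f(M)].
\]
Finally, Corollary~\ref{cor:f_M_value} supplies $\bE[f(M)]\geq f(\OPT)/8$, so the overall competitive ratio is
\[
   \left(\tfrac{1}{8}\left[\tfrac{p(1-p)}{1+p} - \tfrac{2\beta p}{(1-\beta)^3}\right]\right)^{-1},\qquad \beta = \tfrac{2ep}{1+p}.
\]

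The only remaining work is a one-variable numerical optimization of this expression over $p\in(0, 1/(2e-1))$, which is the regime where $\beta<1$; the optimum value of $p$ (which I expect to be quite small, in the same ballpark as the $(1-p)$ value $\approx 0.976$ found in the monotone corollary) should yield a ratio at most $585$. I expect the main obstacle to be purely notational: juggling the three quantities $\bE[w(M)], \bE[f_w(M)], \bE[f(M)]$ and keeping the sign of every term correct when substituting $w(M)=f(M)-f(\varnothing)$, since in the non-monotone case we cannot simply discard $f(\varnothing)$. Once the bookkeeping is done, the conclusion follows by choosing the optimal $p$ and verifying the resulting constant.
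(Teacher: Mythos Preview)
Your proposal is correct and follows essentially the same route as the paper: you instantiate Algorithm~\ref{alg:simulated} with the greedy {\Linear} of Lemma~\ref{lem:laminar_nonmonotone}, combine that bound with Observation~\ref{obs:diff_monotone}, Lemmata~\ref{lem:M_equalities}, \ref{lem:m_n_ratio}, \ref{lem:f_w_m_n_ratio}, and Corollary~\ref{cor:f_M_value}, and arrive at exactly the expression $\tfrac{p}{8}\bigl(\tfrac{1-p}{1+p} - \tfrac{2\beta}{(1-\beta)^3}\bigr)$ that the paper obtains. The paper likewise drops the $f(\varnothing)$ terms using non-negativity (so your worry there is unfounded), and optimizes at $p\approx 0.023769$, matching your expectation of a small $p$ and yielding the stated $585$.
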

\begin{proof}
Consider Algorithm~\ref{alg:simulated} using the algorithm {\Linear} defined by Lemma~\ref{lem:laminar_nonmonotone}.
By Lemma~\ref{lem:laminar_nonmonotone} we obtain
\[
	\bE[w(Q \cap N)]
	=
	\bE[w(Q)]
	\geq
	\left(1-\frac{2\beta}{(1-\beta)^3}\right) \cdot \bE[w(N)],
\]
where $\beta=2e(1-1/(1+p))$.
Thus,
\begin{align} \label{eqn:first_bound_laminar_nonmonotone}
	\bE[f_w(Q \cap N)]
	\geq{} &
	\bE[w(Q \cap N)] - \bE[w(N)] + \bE[f_w(N)]\\ \nonumber
	\geq{} &
	\left(1-\frac{2\beta}{(1-\beta)^3}\right)
    \cdot \bE[w(N)] - \bE[w(N)] + \frac{p(1-p)}{1+p}
      \cdot \bE[f_w(M)] \\ \nonumber
	={} &
	\frac{p(1-p)}{1+p} \cdot \bE[f_w(M)] - \frac{2\beta}{(1-\beta)^3} \cdot \bE[w(N)]
	\enspace,
\end{align}
where the first inequality holds by Observation~\ref{obs:diff_monotone} and the second by Lemmata~\ref{lem:f_w_m_n_ratio} and the non-negativity of $f$.

Recall that by Lemmata~\ref{lem:M_equalities} and~\ref{lem:m_n_ratio}, $f_w(M) = f(M)$ and
\[
	\bE[w(N)]
	=
	p \cdot \bE[w(M)]
	=
	p \cdot \bE[f(M) - f(\varnothing)]
	\leq
	p \cdot \bE[f(M)]
	\enspace.
\]
Plugging these inequalities into Inequality~\eqref{eqn:first_bound_laminar_monotone} yields:
\begin{align*}
	\bE[f_w(Q \cap N)]
	\geq{} &
	\frac{p(1-p)}{1+p} \cdot \bE[f_w(M)] - \frac{2\beta}{(1-\beta)^3} \cdot \bE[w(N)]\\
	\geq{} &
	 p \left(\frac{1-p}{1+p} - \frac{2\beta}{(1-\beta)^3}\right) \cdot \bE[f(M)]\\
  \geq{} &
	 \frac{p}{8} \left(\frac{1-p}{1+p} - \frac{2\beta}{(1-\beta)^3}\right) \cdot 
   \bE[f(\OPT)]\\
  ={} &
	 \frac{p}{8} \left(\frac{1-p}{1+p}
     - \frac{4 e (1-1/(1+p))}{(1-2e(1-1/(1+p)))^3}\right) \cdot 
   \bE[f(\OPT)],
\end{align*}
where the last inequality follows by $\E[f(M)]\geq f(\OPT)/8$ which
holds by Corollary~\ref{cor:f_M_value}.
The above expression is maximized for
$p\approx 0.023769$, leading to the claimed competitive ratio.
\end{proof}

\section{Equivalence of Algorithms~\ref{alg:online} and~\ref{alg:simulated}} \label{app:equivalence}

In this section we prove that Algorithms~\ref{alg:online} and~\ref{alg:simulated} share the joint distribution of the set $M$ and the output set. First, for completeness, let us state the standard algorithm {\Greedy} as Algorithm~\ref{alg:greedy}. Recall that {\Greedy} is used by Algorithm~\ref{alg:online}, and observe that, as mentioned in Section~\ref{sec:algorithm}, the solution $M$ of {\Greedy} starts as the empty set and elements are added to it one by one. 

\begin{algorithm}[ht]
\caption{\textsf{Greedy}} \label{alg:greedy}
\tcp{Initialization}
Let $M \gets \varnothing$.\\
Let $E' \gets E$.\\

\BlankLine

\tcp{Main Loop}
\While{$E' \neq \varnothing$}
{
	Let $u$ be the element of $E'$ maximizing $f(u \mid M)$, and remove $u$ from $E'$.\\
	\If{$M + u$ is independent in $\cM$ and $f(u \mid M) \geq 0$}
	{
		Add $u$ to $M$.\\
	}
}
\Return{$M$}.\\
\end{algorithm}

Let $L$ and $F$ be two independent random subsets of $E$, where the set $L$ (respectively, $F$) contains every element $u \in E$ with probability $1/2$ (respectively, $p$), independently. By Observation~\ref{obs:LpropP}, the set $L$ constructed by Algorithm~\ref{alg:online} has exactly the same distribution as the above set $L$. Hence, given access to the above sets $L$ and $F$, Algorithm~\ref{alg:online} can be rewritten as Algorithm~\ref{alg:online_coupled}. Notice that the only changes made during the rewrite are omission of the calculation of $L$ and replacement of the random coin toss on Line~\ref{line:online_change} with a reference to $F$.

\SetKwIF{With}{OtherwiseWith}{Otherwise}{with}{do}{otherwise with}{otherwise}{}
\begin{algorithm}[ht]
\caption{\textsf{Coupled Online}$(L, F)$} \label{alg:online_coupled}
Let $M$ be the output of {\Greedy} on the set $L$.\\
Let $N \gets \varnothing$.\\
\For{each arriving element $u \in E \setminus L$}
{
	Let $w(u) \gets 0$.\\
	\If{$u$ is accepted by {\Greedy} when applied to $M + u$}
	{
		\If{$u \in F$\label{line:online_change}}
		{
			Add $u$ to $N$.\\
			Let $M_u \subseteq M$ be the solution of {\Greedy} immediately before it adds $u$ to it.\\
			Update $w(u) \gets f(u \mid M_u)$.\\
		}
	}
	Pass $u$ to {\Linear} with weight $w(u)$.
}
\Return{$Q \cap N$, where $Q$ is the output of {\Linear}}.
\end{algorithm}

Similarly, Algorithm~\ref{alg:simulated} can be rewritten using $L$ and $F$ as Algorithm~\ref{alg:simulated_coupled}. In this case the coin tosses on Lines~\ref{line:first_change} and~\ref{line:third_change} have been replaced with references to $L$, and the coin toss on Line~\ref{line:second_change} has been replaced with a reference to $F$. It is important to note that an element is never checked at both Lines~\ref{line:first_change} and~\ref{line:third_change}, and thus, the independence of the randomness used by these lines is preserved.

\begin{algorithm}[ht]
\caption{\textsf{Coupled Simulated}$(L, F)$} \label{alg:simulated_coupled}
\DontPrintSemicolon
\tcp{Initialization}
Let $M, N, N_0 \gets \varnothing$.\\
Let $E' \gets E$.\\

\BlankLine

\tcp{Main Loop}
\While{$E' \neq \varnothing$}
{
	Let $u$ be the element of $E'$ maximizing $f(u \mid M)$, and remove $u$ from $E'$.\\
	Let $w(u) \gets f(u \mid M)$.\\
	\If{$M + u$ is independent in $\cM$ and $w(u) \geq 0$}
	{
		\lIf{$u \in L$}
		{
			Add $u$ to $M$.\label{line:first_change}
		}
		\lElseIf{$u \in F$}
		{
			Add $u$ to $N$.\label{line:second_change}
		}
		\Else
		{
			Update $w(u) \gets 0$.\\
			Add $u$ to $N_0$.
		}
	}
	\Else
	{
		Let $w(u) \gets 0$.\\
		\lIf{$u \not \in L$}{Add $u$ to $N_0$.\label{line:third_change}}
	}
}
Run {\Linear} with $N \cup N_0$ as the input (in a uniformly random order) and the weights defined by $w$.\\
\Return{$Q \cap N$, where $Q$ is the output of {\Linear}}.
\end{algorithm}

In the rest of this section we show that
conditioned on $L$ and $F$, Algorithms~\ref{alg:online_coupled} and~\ref{alg:simulated_coupled}
produce the same set $M$ and have the same output distribution. Hence, the
original Algorithm~\ref{alg:simulated} has the same joint distribution of the
set $M$ and the output set as Algorithm~\ref{alg:online}.
Let us start the proof with the following simple observation.

\begin{observation} \label{obs:N_0_elements}
The elements passed to {\Linear} by both Algorithms~\ref{alg:online_coupled} and~\ref{alg:simulated_coupled} are exactly the elements of $E \setminus L$. Moreover, the elements of $E \setminus (L \cup N)$ are all assigned $0$ weights.
\end{observation}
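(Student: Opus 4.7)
The plan is to verify both statements directly by inspection of the pseudocode of the two algorithms. For Algorithm~\ref{alg:online_coupled} the claim is essentially visible on the page: the main for-loop ranges explicitly over $u \in E \setminus L$ and its last line passes the current $u$ to {\Linear}, so exactly the elements of $E \setminus L$ are passed. For the weight portion, $w(u)$ is initialized to $0$ at the top of each iteration and only overwritten in the sub-branch that also adds $u$ to $N$, hence every element of $E \setminus (L \cup N)$ is passed with weight $0$.

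For Algorithm~\ref{alg:simulated_coupled} one needs slightly more work, since elements reach {\Linear} via the set $N \cup N_0$ rather than directly. I would prove the deterministic identity $N \cup N_0 = E \setminus L$ by fixing an arbitrary $u \in E$ and splitting into four cases according to whether $u \in L$ and whether the outer test ``$M + u \in \cI$ and $w(u) \geq 0$'' succeeds. If $u \in L$ and the test succeeds then $u$ is placed into $M$; if $u \in L$ and the test fails then $u$ is dropped entirely, since the ``Else'' branch only appends to $N_0$ when $u \notin L$. If $u \notin L$ and the test succeeds then $u$ enters $N$ or $N_0$ depending on whether $u \in F$; if $u \notin L$ and the test fails then $u$ enters $N_0$. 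Consequently $u \in N \cup N_0$ if and only if $u \notin L$, as claimed, and the first statement for Algorithm~\ref{alg:simulated_coupled} follows.

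The same case analysis handles the weight claim. Whenever $u$ ends up in $N_0$, the algorithm has explicitly set $w(u) \gets 0$: in the good-but-$u \notin F$ sub-branch this is done before appending $u$ to $N_0$, and in the failing ``Else'' branch $w(u)$ is set to $0$ at the top of the branch. Combining this with the previous paragraph, which identifies $N_0$ with $E \setminus (L \cup N)$, we conclude that every element of $E \setminus (L \cup N)$ is passed to {\Linear} with weight $0$.

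There is no real mathematical obstacle here; the only care required is clerical, namely not to overlook the asymmetric handling of the $u \in L$ case inside the ``Else'' branch of Algorithm~\ref{alg:simulated_coupled}, which is precisely what prevents $L$-elements from ending up in $N_0$ and hence in {\Linear}'s input, and is what makes the correspondence $N \cup N_0 = E \setminus L$ work on the nose for every realization of the random sets $L$ and $F$.
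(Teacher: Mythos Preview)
Your proposal is correct. The paper does not supply a proof for this observation at all; it is stated as a ``simple observation'' and left to the reader. Your case analysis is exactly the routine verification one would do by reading the pseudocode, and it is accurate in all details, including the key point that the \texttt{Else} branch of Algorithm~\ref{alg:simulated_coupled} only appends $u$ to $N_0$ when $u \notin L$, which is what forces $N \cup N_0 = E \setminus L$.
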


Next, observe that Algorithm~\ref{alg:simulated_coupled} obtains $M$ by running a modified version of {\Greedy} on the elements of $E$. The modified version is identical to the original one, except that every time immediately before adding an element $u \in E \setminus L$ to the result, the element is ``stolen'' (and gets to $N$ or $N_0$ instead). We denote such a modified run of {\Greedy} by {\Greedy}$(E, E \setminus L)$. More generally, we use {\Greedy}$(A, B)$ to denote a run of greedy on the set $A$, where the elements of $B \subseteq A$ are stolen if {\Greedy} tries to add them to its solution.

\begin{observation} \label{obs:greedy_properties}
For four sets $A \supseteq B$ and $C \supseteq D$ obeying $\Greedy(A, B) \subseteq C \setminus D \subseteq A \setminus B$, the runs $\Greedy(A, B)$ and $\Greedy(C, D)$ behave the same in the following sense:
\begin{itemize}
	\item The outputs of the two runs are identical, \ie, $\Greedy(A, B) = \Greedy(C, D)$.
	\item The run $\Greedy(A, B)$ attempts to add an element of $A \cap C$ to the solution if and only if the run $\Greedy(C, D)$ attempts to add this element, moreover, both runs attempt to add the element to the same partial solution.
\end{itemize}
\end{observation}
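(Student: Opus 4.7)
My plan is to prove both assertions simultaneously by induction on the number of elements added so far to the running solution $M$. The inductive invariant will be twofold: after $k$ additions, (a) both runs have the same partial solution $M_k$, and (b) the set of $A\cap C$-elements already considered by $\Greedy(A,B)$ equals that considered by $\Greedy(C,D)$, each such element having been considered at the same partial solution in both runs. The base case $k=0$ is trivial, and once the invariant is proven for all $k$, both conclusions of the observation follow immediately: part (a) at the last sync point gives the identity of outputs, and (b) gives the claim about attempts on $A\cap C$-elements.

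For the inductive step I would first note that between two consecutive additions the partial solution is constant, so each run considers its remaining elements in decreasing order of the marginal contribution $f(\cdot \mid M_k)$. Let $v_{k+1}$ be the next element added by $\Greedy(A,B)$. The hypothesis $\Greedy(A,B)\subseteq C\setminus D\subseteq A\setminus B$ forces $v_{k+1}\in(A\setminus B)\cap(C\setminus D)\subseteq A\cap C$, so by the inductive hypothesis $v_{k+1}$ is also in the remaining pool of $\Greedy(C,D)$ and is not stolen there. Moreover, $v_{k+1}$ passes the independence and marginal-value checks at $M_k$ in both runs, since these depend only on $\cM$, $f$, and $M_k$. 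Hence it suffices to show that $\Greedy(C,D)$ does not add any element $u$ with $f(u\mid M_k) > f(v_{k+1}\mid M_k)$ before encountering $v_{k+1}$.

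I would split this into two cases depending on where such a candidate $u$ lies. If $u\in A\cap C$, then by the inductive hypothesis $u$ is also in the remaining pool of $\Greedy(A,B)$, which considers it before $v_{k+1}$ but does not add it; so either $u\in B$ or $u$ fails the check at $M_k$. In the first subcase, the contrapositive of $C\setminus D\subseteq A\setminus B$ gives $C\cap B\subseteq D$, so $u\in D$ and $\Greedy(C,D)$ steals it. In the second subcase the check is identical under $\Greedy(C,D)$ and $u$ is skipped. If instead $u\in C\setminus A$, then $C\setminus A\subseteq D$ (since $C\setminus D\subseteq A$), so $u\in D$ and is again stolen by $\Greedy(C,D)$. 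In every subcase $u$ is not added, so the next element added by $\Greedy(C,D)$ is indeed $v_{k+1}$, establishing (a) after $k+1$ additions. Invariant (b) for the new interval follows from the same argument: every $A\cap C$-element with marginal $\geq f(v_{k+1}\mid M_k)$ is in the remaining pool of both runs by the inductive hypothesis, and each is attempted by both runs at the common state $M_k$.

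The main conceptual obstacle is that the two runs generally process elements in different orders, since one considers ``private'' elements from $A\setminus C$ while the other considers those from $C\setminus A$. The key insight enabled by the hypothesis is that private elements can never grow $M$: those in $A\setminus(C\setminus D)$ cannot lie in $\Greedy(A,B)\subseteq C\setminus D$ and so, if not in $B$, must fail the check; and elements of $C\setminus A$ are forced into $D$ and hence are always stolen. Therefore the private processing never disturbs the synchronized evolution of $M$. A minor technicality of ties in the argmax used by $\Greedy$ is handled by assuming a fixed deterministic tie-breaking rule applied uniformly in both runs.
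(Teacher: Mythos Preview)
Your proof is correct and rests on the same core insight as the paper's, namely that the trajectory of \Greedy\ is determined solely by the elements that actually enter the partial solution $M$, so that stolen or rejected elements are irrelevant. The paper states this in a two-sentence observation, whereas you have unpacked it into an explicit induction on the number of additions, with a carefully maintained invariant on which $A\cap C$-elements have been processed; this is exactly the rigorous version of the paper's sketch, and the termination case (one run halting while the other would still add an element) follows by the same case analysis you give for the inductive step.
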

\begin{proof}
The behavior of {\Greedy} depends only on elements that were added to the result set. Hence, elements that appear in the input but are stolen before being added to the output does not affect the output of {\Greedy}, and the same holds for elements that greedy does not attempt to add to the output.
\end{proof}

\begin{corollary}
Conditioned on a set $L$, Algorithms~\ref{alg:online_coupled} and~\ref{alg:simulated_coupled} produce the same set $M$.
\end{corollary}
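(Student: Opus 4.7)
The plan is to reduce this to a single application of Observation~\ref{obs:greedy_properties}. First I would re-read the construction of $M$ in Algorithm~\ref{alg:simulated_coupled} and note that, when stripped of the bookkeeping concerning $N$, $N_0$ and $w$, it does the following: it scans the elements of $E$ in decreasing order of marginal contribution with respect to the current $M$, and when it encounters an element $u$ with $M+u\in\cI$ and $f(u\mid M)\geq 0$, it adds $u$ to $M$ if $u\in L$ and otherwise discards it. This is exactly the modified run $\Greedy(E,E\setminus L)$ in the notation introduced above Observation~\ref{obs:greedy_properties}. Similarly, the set $M$ in Algorithm~\ref{alg:online_coupled} is $\Greedy(L)=\Greedy(L,\varnothing)$.

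Next I would invoke Observation~\ref{obs:greedy_properties} with $A=E$, $B=E\setminus L$, $C=L$, $D=\varnothing$. To apply it I need to verify $\Greedy(A,B)\subseteq C\setminus D\subseteq A\setminus B$, which in this instance reads $\Greedy(E,E\setminus L)\subseteq L\subseteq L$; both inclusions are immediate, since stolen elements never enter the output of $\Greedy(E,E\setminus L)$. The observation then yields $\Greedy(E,E\setminus L)=\Greedy(L,\varnothing)=\Greedy(L)$, i.e.\ the two sets $M$ coincide given $L$.

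The only mildly delicate step is the claim that Algorithm~\ref{alg:simulated_coupled}'s bookkeeping does not perturb the sequence of elements considered by the underlying greedy process: the weight updates and the placement of elements into $N$ or $N_0$ do not alter $M$, and the order in which elements are processed depends only on $f$ and the current $M$, not on $N$ or $N_0$. Once this is observed, the equivalence with $\Greedy(E,E\setminus L)$ is syntactic and the proof is a one-line invocation of Observation~\ref{obs:greedy_properties}. I do not anticipate any real obstacle beyond making this identification cleanly.
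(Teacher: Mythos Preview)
Your proposal is correct and follows essentially the same approach as the paper: identify the two constructions of $M$ as $\Greedy(L,\varnothing)$ and $\Greedy(E,E\setminus L)$ and then invoke Observation~\ref{obs:greedy_properties}. The only (inconsequential) difference is that you assign $(A,B)=(E,E\setminus L)$ and $(C,D)=(L,\varnothing)$, whereas the paper swaps these roles and verifies $\Greedy(L,\varnothing)\subseteq L=E\setminus(E\setminus L)$ instead.
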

\begin{proof}
Algorithm~\ref{alg:online} selects $M$ as $\Greedy(L, \varnothing)$, while Algorithm~\ref{alg:simulated_coupled} select $M$ as $\Greedy(E, E \setminus L)$. The equality between these sets follows immediately from Observation~\ref{obs:greedy_properties} since:
\[
	\Greedy(L, \varnothing) \subseteq L = E \setminus (E \setminus L)
	\enspace.
	\qedhere
\]
\end{proof}

\begin{corollary}
Conditioned on sets $L$ and $F$, Algorithms~\ref{alg:online_coupled} and~\ref{alg:simulated_coupled} produce the same set $N$ and assign the same weights to the elements of $N$.
\end{corollary}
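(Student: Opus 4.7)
The plan is to couple the two algorithms element by element and show that, for each $u \in E \setminus L$, they agree on (i) whether $u$ joins $N$ and (ii) the weight $w(u)$ recorded for it. In Algorithm~\ref{alg:simulated_coupled}, when $u$ is processed during the main loop, the current $M$ is some subset $M_u \subseteq M$; the element ends up in $N$ with weight $f(u \mid M_u)$ iff $u \in F$, $M_u + u \in \cI$ and $f(u \mid M_u) \geq 0$. I want to prove that in Algorithm~\ref{alg:online_coupled}, the run $\Greedy(M+u)$ attempts to add $u$ at the very same partial solution $M_u$, which will immediately match both the acceptance test and the recorded weight with those of the simulated algorithm.

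The technical heart of the argument is a comparison of greedy runs. First, the main loop of Algorithm~\ref{alg:simulated_coupled} is exactly $\Greedy(E, E \setminus L)$ with additional bookkeeping: the independence/non-negativity checks on $E\setminus L$ elements only affect whether a stolen element joins $N$ or $N_0$, not the evolving partial solution. Second, the run $\Greedy(M+u)$ used by Algorithm~\ref{alg:online_coupled} behaves identically to $\Greedy(M+u, \{u\})$ up to (and including) the step where $u$ is picked, since stealing $u$ only affects what happens after $u$ has been selected, not the partial solution present at that instant. So it suffices to show that $\Greedy(E, E \setminus L)$ and $\Greedy(M+u, \{u\})$ attempt to add $u$ at the same partial solution.

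This I plan to obtain as a direct application of Observation~\ref{obs:greedy_properties} with $A=E$, $B=E\setminus L$, $C=M+u$, $D=\{u\}$. The previous corollary gives $\Greedy(A,B)=M$, hence $\Greedy(A,B) = M \subseteq M = C \setminus D$; and $C\setminus D = M \subseteq L = A \setminus B$ since $M\subseteq L$ and $u \notin L$. Thus the hypothesis of the observation is met, and it yields that both runs attempt to add every element of $A\cap C = M+u$ at identical partial solutions. Specializing to $u$ produces the desired $M_u$.

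With the partial-solution equality in hand, the conclusion is immediate: Algorithm~\ref{alg:online_coupled} adds $u$ to $N$ exactly when $u \in F$ and $\Greedy(M+u)$ accepts $u$, which is precisely the event $u \in F$ together with $M_u + u \in \cI$ and $f(u\mid M_u)\geq 0$; this matches the $N$-insertion rule of Algorithm~\ref{alg:simulated_coupled}, and the weights $f(u\mid M_u)$ agree. I expect the main obstacle to be the careful verification that Observation~\ref{obs:greedy_properties} truly applies with the chosen parameters and that the offline $\Greedy(M+u)$ of Algorithm~\ref{alg:online_coupled} genuinely coincides with $\Greedy(M+u,\{u\})$ at the crucial instant — once these two identifications are pinned down, the rest of the argument is just unwinding the definitions.
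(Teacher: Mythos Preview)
Your proposal is correct and follows essentially the same approach as the paper: both reduce the claim to an application of Observation~\ref{obs:greedy_properties} with $A=E$, $B=E\setminus L$, $C=M+u$, $D=\{u\}$, after noting that replacing $\Greedy(M+u,\varnothing)$ by $\Greedy(M+u,\{u\})$ does not change the behavior up to the moment $u$ is considered. Your verification of the hypothesis $\Greedy(A,B)\subseteq C\setminus D\subseteq A\setminus B$ matches the paper's chain $\Greedy(E,E\setminus L)=M=(M+u)-u\subseteq L=E\setminus(E\setminus L)$.
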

\begin{proof}
Consider an element $u \in E \setminus L$. Algorithm~\ref{alg:online_coupled} adds $u$ to $N$ if and only if $u \in F$ and $\Greedy(M + u, \varnothing)$ tries to add $u$ to its solution. When this happens, $u$ is assigned the weight $f(u \mid M_u)$, where $M_u$ is the solution {\Greedy} has immediately before it process $u$. Clearly both the weight and the membership of $u$ in $N$ does not change if $\Greedy(M + u, \varnothing)$ is replaced with $\Greedy(M + u, u)$ in the above description.

On the other hand, Algorithm~\ref{alg:simulated_coupled} adds $u$ to $N$ if and only if $u \in F$ and $\Greedy(N, E \setminus L)$ tries to add $u$ to its solution. When this happens, $u$ is assigned the weight $f(u \mid M_u)$, where $M_u$ is the solution {\Greedy} has immediately before it process $u$. By Observation~\ref{obs:greedy_properties}, both $\Greedy(M + u, u)$ and $\Greedy(E, E \setminus L)$ behave the same because:
\[
	\Greedy(E, E \setminus L) = \Greedy(L, \varnothing) = M = (M + u) - u \subseteq L = E \setminus (E \setminus L) \enspace.
\]
Hence, Algorithms~\ref{alg:online_coupled} and~\ref{alg:simulated_coupled} make the same decisions regarding the membership and weight of $u$.
\end{proof}

Observation~\ref{obs:N_0_elements} together with the last corollary shows that, conditioned on $L$ and $F$, {\Linear} is fed by the same set $E \setminus L$ of elements and weights by Algorithms~\ref{alg:online_coupled} and~\ref{alg:simulated_coupled}. To complete the proof that both algorithms produce the same output distribution, conditioned on $L$ and $F$, we just need to show that the elements of $E \setminus L$ are passed to {\Linear} in the same distribution of orders under both algorithms. This is true by the following observations.
\begin{compactitem}
	\item Algorithm~\ref{alg:online_coupled} passes the elements of $E \setminus L$ in the order in which they arrive, which is uniformly random.
	\item Algorithm~\ref{alg:simulated_coupled} passes the elements of $N \cup N_0 = E \setminus L$ in a uniformly random order.
\end{compactitem}

\section{Missing Proofs}\label{app:missingProofs}

In this appendix we provide the missing proofs
for Proposition~\ref{prp:partial_problem} and 
Lemma~\ref{lem:greedy_determinisitic_property_nonnormalized}.

\subsection{Proof of Proposition~\ref{prp:partial_problem}}

\begin{repproposition}{prp:partial_problem}
Given an algorithm {\Linear} for {\SP}, there exists an algorithm for Partial-{\SP} whose competitive ratio for every matroid $\cM$ is as good as the competitive ratio of {\Linear} for this matroid.
\end{repproposition}

The algorithm we use to prove Proposition~\ref{prp:partial_problem} is Algorithm~\ref{alg:partial_algorithm}.

\begin{algorithm}[ht]
\caption{\textsf{Algorithm for Partial-{\SP}}} \label{alg:partial_algorithm}
\tcp{Initialization}
Let $L' \gets L$.\\
Let $r \gets n - |L|$.\\

\BlankLine

\tcp{Main Loop}
\While{$|L'| + r > 0$}
{
	\With{probability $\nicefrac{r}{(|L'| + r)}$}
	{
		Update $r \gets r - 1$.\\
		Pass the next arriving element to {\Linear} with its original weight $w(u)$.\\
	}
	\Otherwise
	{
		Let $u$ be a uniformly random element of $L'$.\\
		Update $L' \gets L' - u$.\\
		Pass $u$ to {\Linear} with a weight of $0$.\\
	}
}
Return the output of {\Linear} without the elements of $L$.
\end{algorithm}

\begin{observation} \label{obs:legal_input}
The input passed to {\Linear} by Algorithm~\ref{alg:partial_algorithm} is a uniformly random permutation of $E$, where the weight of an element $u \in E$ is defined by:
\[
	w'(u)
	=
	\begin{cases}
		0 & \text{if $u \in L$} \enspace,\\
		w(u) & \text{otherwise} \enspace.
	\end{cases}
\]
\end{observation}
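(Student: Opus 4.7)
The plan is to verify two separate claims implicit in the observation: (i) the weight $w'(u)$ that {\Linear} sees on each element matches the stated piecewise formula, and (ii) the order in which elements are handed to {\Linear} is a uniformly random permutation of $E$. Claim (i) is essentially by inspection of Algorithm~\ref{alg:partial_algorithm}: by the Partial-{\SP} model, the only elements that actually arrive in the input stream are those in $E\setminus L$, and these are forwarded with their original weight $w(u)$; elements drawn instead from the maintained set $L'\subseteq L$ are forwarded with weight $0$. Since the loop terminates only when $|L'|+r=0$ and each iteration decreases exactly one of $|L'|,r$ by one, every element of $L$ is forwarded exactly once (with weight $0$) and every element of $E\setminus L$ exactly once (with weight $w(u)$), matching the definition of $w'$.

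For claim (ii) I plan to invoke the following standard fact about uniformly random permutations of a disjoint union $E=A\sqcup B$: such a permutation can be produced by independently sampling a uniformly random permutation $\pi_A$ of $A$, a uniformly random permutation $\pi_B$ of $B$, and then interleaving them by the rule that at each step, given $a$ remaining $A$-slots and $b$ remaining $B$-slots, the next position is filled from the $A$-stream with probability $a/(a+b)$ and from the $B$-stream with probability $b/(a+b)$, independently across steps. With $A=L$ and $B=E\setminus L$, Algorithm~\ref{alg:partial_algorithm} implements exactly this procedure: by the Partial-{\SP} model the arriving $(E\setminus L)$-elements come in uniformly random order (that is $\pi_B$); repeatedly drawing a uniformly random element from $L'$ and removing it produces a uniformly random permutation of $L$ (that is $\pi_A$); and the Bernoulli interleaving uses precisely probability $r/(|L'|+r)$ with $r=|B|$ remaining and $|L'|=|A|$ remaining.

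The main technical step will therefore be to state and justify the interleaving lemma, which I would do by a short induction on $|E|=a+b$: after conditioning on which of the two streams supplies the first element (with the correct probabilities $a/(a+b)$ and $b/(a+b)$), the remaining $a+b-1$ positions are filled by the same recursive rule applied to streams of sizes either $(a-1,b)$ or $(a,b-1)$, and one checks by counting that this reproduces the uniform distribution on permutations of $E$. The remaining ingredient is the independence of the three sources of randomness used by the algorithm—the arrival order on $E\setminus L$, the internal uniform draws from $L'$, and the Bernoulli interleaving coins—which holds by construction. Combining these gives the claimed uniform distribution of the order in which elements are passed to {\Linear}, and together with (i) completes the observation.
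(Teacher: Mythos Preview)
Your proposal is correct and follows essentially the same idea as the paper's proof, though with more detail. The paper's proof simply observes that at each step the algorithm passes a uniformly random element from the set of not-yet-passed elements (of size $|L'|+r$), which is exactly what your interleaving lemma unwinds: conditioning on which stream supplies the next element and using that both the arrival order on $E\setminus L$ and the internal draws from $L'$ are uniform gives each remaining element probability $1/(|L'|+r)$ of being next.
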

\begin{proof}
The only part of the observation that requires a proof is the claim that the elements of $E$ are passed to {\Linear} in a uniformly random order. To see why this is true, notice that $r$ is maintained as the number of input elements that have not been passed yet to {\Linear} and $L'$ is the set of elements of $L$ that have not been passed yet to {\Linear}. Hence, at each time point the set of elements that have not been passed yet to {\Linear} is of size $r + |L'|$, and the algorithm passes a uniformly random element from this set.\end{proof}

The following lemma implies Proposition~\ref{prp:partial_problem}.

\begin{lemma}
Algorithm~\ref{alg:partial_algorithm} is an algorithm for Partial-{\SP} whose competitive ratio for every matroid $\cM$ is as good as the competitive ratio of {\Linear} for this matroid.
\end{lemma}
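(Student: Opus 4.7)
\textbf{Proof plan for Proposition~\ref{prp:partial_problem}.}
The plan is to argue that Algorithm~\ref{alg:partial_algorithm} is a legitimate online algorithm for Partial-\SP, and that its output inherits the competitive guarantee of \Linear\ almost verbatim, because the weight function $w'$ defined in Observation~\ref{obs:legal_input} differs from $w$ only on elements of $L$ (which are zeroed out). First I would verify online admissibility: whenever a real element $u \in E \setminus L$ arrives, the algorithm enters the ``with probability $r/(|L'|+r)$'' branch (possibly multiple ``otherwise'' iterations beforehand, but these only feed virtual elements of $L'$ to \Linear\ and do not involve $u$), and then forwards $u$ to \Linear\ with its true weight $w(u)$; the algorithm accepts $u$ into its output if and only if \Linear\ accepts $u$, and these decisions are made immediately and irrevocably. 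Decisions made by \Linear\ on the virtual $L'$-elements do not affect acceptance of real elements, since elements of $L$ are discarded from the final output.

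Next, I would invoke Observation~\ref{obs:legal_input}, which says that the elements fed to \Linear\ form a uniformly random permutation of $E$, under the weight function
\[
  w'(u) \;=\; \begin{cases} 0 & u\in L,\\ w(u) & u\in E\setminus L. \end{cases}
\]
Thus \Linear\ faces a bona fide \SP\ instance on the matroid $\cM$ with weight function $w'$. By the assumed $\alpha$-competitiveness of \Linear\ on $\cM$, if $Q$ denotes its output, then $\bE[w'(Q)] \geq w'(\OPT_{w'})/\alpha$, where $\OPT_{w'}$ is a maximum $w'$-weight independent set of $\cM$.

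Now I would compare $w'(\OPT_{w'})$ with the Partial-\SP\ optimum. Let $\OPT$ be a maximum $w$-weight independent set of $\cM$ contained in $E\setminus L$ (this is the offline optimum of the Partial-\SP\ instance). Since $\OPT$ is independent in $\cM$ and $w'$ agrees with $w$ on $E\setminus L$, we have $w'(\OPT_{w'}) \geq w'(\OPT) = w(\OPT)$. Finally, the algorithm returns $Q\setminus L$. Independence is preserved since $Q$ is independent in $\cM$. Because $w'$ vanishes on $L$,
\[
  w(Q\setminus L) \;=\; w'(Q\setminus L) \;=\; w'(Q) - w'(Q\cap L) \;=\; w'(Q).
\]
Combining these steps yields $\bE[w(Q\setminus L)] \geq w(\OPT)/\alpha$, proving the claimed competitive ratio.

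There is no substantive obstacle here; the only slightly delicate point is confirming that the online structure of Algorithm~\ref{alg:partial_algorithm} is preserved, i.e.\ that at the moment a real element of $E\setminus L$ is about to be irrevocably accepted or rejected, the algorithm has enough information to make that decision purely from \Linear's decision on that very element. This is immediate from the description of the loop, since real elements are processed only in the ``with probability $r/(|L'|+r)$'' branch and their acceptance is directly tied to \Linear's acceptance.
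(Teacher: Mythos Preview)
Your proposal is correct and follows essentially the same route as the paper: invoke Observation~\ref{obs:legal_input} to conclude that {\Linear} faces a genuine {\SP} instance on $\cM$ with weights $w'$, apply $\alpha$-competitiveness, and then use that $w'$ vanishes on $L$ to translate both the output value and the optimum back to $w$ restricted to $E\setminus L$. The paper's proof is slightly terser (it writes the chain $\bE[w(S\setminus L)] = \bE[w'(S)] \geq \alpha^{-1}\max_{T\in\cI} w'(T) = \alpha^{-1}\max_{T\in\cI\cap 2^{E\setminus L}} w(T)$ directly), whereas you spell out the online-admissibility check explicitly; this extra verification is a welcome addition but does not constitute a different approach.
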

\begin{proof}
Assume {\Linear} is $\alpha$-competitive for an arbitrary matroid $\cM = (E, \cI)$. By Observation~\ref{obs:legal_input}, whenever Algorithm~\ref{alg:partial_algorithm} faces an instances of Partial-{\SP} over the matroid $\cM$ with a weight function $w$, it makes {\Linear} face an instance of {\SP} over the same matroid with weights given by $w'$. Hence, if we denote by $S$ the output of {\Linear}, then, by the guarantee of {\Linear}:
\[
	\bE[w'(S)]
	\geq
	\alpha \cdot \max_{T \in \cI} w'(T)
	\enspace.
\]

Recall that $w'$ is equal to $w$ for elements of $E \setminus L$ and is equal to $0$ for all other elements. Hence,
\[
	\bE[w(S \setminus L)]
	=
	\bE[w'(S)]
	\geq
	\alpha \cdot \max_{T \in \cI} w'(T)
	=
	\alpha \cdot \max_{T \in \cI \cap 2^{E \setminus L}} w'(T)
	=
	\alpha \cdot \max_{T \in \cI \cap 2^{E \setminus L}} w(T)
	\enspace.
\]
This completes the proof of the lemma since $S \setminus L$ is the output of Algorithm~\ref{alg:partial_algorithm} and $\max_{T \in \cI \cap 2^{E \setminus L}} w(T)$ is the value of the optimal solution for the instance of Partial-{\SP} faced by Algorithm~\ref{alg:partial_algorithm}.
\end{proof}

\subsection{Proof of Lemma~\ref{lem:greedy_determinisitic_property_nonnormalized}}

\begin{replemma}{lem:greedy_determinisitic_property_nonnormalized}
For a matroid and a non-negative submodular function $f$, if $S$ is the independent set returned by {\Greedy}, then for any independent set $C$, $f(S) \geq f(C \cup S) / 2$.
\end{replemma}

First, we need some definitions. Let $k = |S|$, and for every $0 \leq i \leq k$ let $S_i$ be the solution of {\Greedy} after $i$ elements are added to it. Additionally, for every $1 \leq i \leq k$, let $u_i$ be the single element in $S_i \setminus S_{i - 1}$. We also need to define the sets $\{C_i\}_{i = 0}^k$ recursively as follows:
\begin{compactitem}
	\item The set $C_0$ is simply $C$. Observe that $C_0 \cup S_0 = C$ is independent.
	\item For every $1 \leq i \leq k$, the set $C_i$ is a maximal independent subset of $C_{i - 1} + u_i$ that contains $S_i$. It is also useful to define $C'_i = C_{i - 1} \setminus C_i$, the set of elements that appear in $C_{i - 1}$ but not in $C_i$. By matroid properties, since $S_i - u_i = S_{i - 1} \subseteq C_{i - 1}$ and $C_{i - 1}$ is independent, the size of $C'_i$ is at most $1$.
\end{compactitem}

Let us now make a few observations regarding the above definitions.

\begin{observation} \label{obs:S_C_k_relation}
$f(S_k) \geq f(C_k)$.
\end{observation}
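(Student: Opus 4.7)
The plan is to derive $f(C_k)\leq f(S_k)$ by showing that every element in $C_k\setminus S_k$ has non-positive marginal contribution to $S_k$, after which submodularity and a one-at-a-time addition argument conclude the proof.

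The key step is to establish that $f(c\mid S_k)\leq 0$ for every $c\in C_k\setminus S_k$. I would use the fact that the while-loop in {\Greedy} removes exactly one element from $E'$ per iteration, so {\Greedy} considers every element of $E$ in some iteration. Since $c\notin S_k$, greedy must have rejected $c$ when it processed it; at that moment the current partial solution equals some $S_j$ with $0\leq j\leq k$. By the rejection rule, either $S_j+c$ is dependent or $f(c\mid S_j)<0$. The first possibility can be ruled out: since $c\in C_k$ and $S_j\subseteq S_k\subseteq C_k$, we have $S_j+c\subseteq C_k$, and independence of $C_k$ implies independence of $S_j+c$. Hence $f(c\mid S_j)<0$, and submodularity applied to the inclusion $S_j\subseteq S_k$ gives $f(c\mid S_k)\leq f(c\mid S_j)<0$.

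Finally, enumerating $C_k\setminus S_k=\{c_1,\ldots,c_m\}$ in any order and using submodularity, each marginal $f(c_l\mid S_k\cup\{c_1,\ldots,c_{l-1}\})$ is bounded above by $f(c_l\mid S_k)\leq 0$. Telescoping the sum $f(C_k)-f(S_k)=\sum_{l=1}^m f(c_l\mid S_k\cup\{c_1,\ldots,c_{l-1}\})$ yields $f(C_k)\leq f(S_k)$, as required. There is no serious obstacle; the only point that requires care is ruling out the independence-violation cause for the rejection of $c$, which is immediate once one notices that $S_j+c\subseteq C_k$.
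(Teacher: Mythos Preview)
Your proof is correct and follows essentially the same approach as the paper: both hinge on the fact that every element of $C_k\setminus S_k$ has non-positive marginal with respect to $S_k$, which holds because $C_k$ is independent and {\Greedy} rejected each such element. The paper packages this as a two-line proof by contradiction (if $f(C_k)>f(S_k)$, submodularity yields some $u\in C_k\setminus S_k$ with $f(u\mid S_k)>0$, contradicting that $S_k$ is {\Greedy}'s output), whereas you spell out the rejection argument explicitly and telescope; the content is the same.
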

\begin{proof}
Assume towards a contradiction that $f(S_k) < f(C_k)$. Then, by submodularity since $S_k \subseteq C_k$, there must exist an element $u \in C_k \setminus S_k$ obeying $f(u \mid S_k) > 0$. On the other hand, since $C_k$ is independent we must also have that $S_k + u$ is independent. However, the existence of an element with these properties contradicts the fact that $S_k$ is the output of {\Greedy}.
\end{proof}

\begin{lemma} \label{lem:gain_bound}
For every $1 \leq i \leq k$, $f(u_i \mid S_{i - 1}) \geq f(C'_i \cup C_k) - f(C_k)$.
\end{lemma}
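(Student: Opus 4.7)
The plan is to split on the size of $C'_i$, which by construction is either $0$ or $1$. If $C'_i = \varnothing$, then $C'_i \cup C_k = C_k$ so the right-hand side vanishes, and the inequality reduces to $f(u_i \mid S_{i-1}) \geq 0$, which holds because we are using the variant of \Greedy\ that only adds an element when its marginal gain is non-negative.

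The substantive case is $C'_i = \{v\}$ for a single element $v \in C_{i-1} \setminus C_i$. I would first record the structural facts the recursive construction gives us: inductively $S_{i-1} \subseteq C_{i-1}$ (each $C_j$ was chosen to contain $S_j$), hence $S_{i-1} + v \subseteq C_{i-1}$ is independent, so $v$ is a feasible exchange candidate at greedy step $i$; and $S_{i-1} \subseteq S_k \subseteq C_k$. Using these, I would dispose of the easy sub-cases first: if $v \in C_k$ then $f(C'_i \cup C_k) = f(C_k)$ and the RHS is zero; if $v \notin C_k$ but $f(v \mid C_k) \leq 0$ the RHS is non-positive; in both situations $f(u_i \mid S_{i-1}) \geq 0$ finishes the bound.

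The remaining sub-case is $v \notin C_k$ with $f(v \mid C_k) > 0$. Here submodularity together with $S_{i-1} \subseteq C_k$ and $v \notin C_k$ (so $v \notin S_{i-1}$) gives $f(v \mid S_{i-1}) \geq f(v \mid C_k) > 0$, which makes $v$ a candidate that \Greedy\ could legitimately have chosen at step $i$ (feasible and with non-negative marginal). A standard exchange-style comparison then yields $f(u_i \mid S_{i-1}) \geq f(v \mid S_{i-1}) \geq f(v \mid C_k) = f(C'_i \cup C_k) - f(C_k)$.

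The one technical point that needs care is verifying the exchange step, because Algorithm~\ref{alg:greedy} discards elements from $E'$ once they are processed, so it is not immediately obvious that $v$ is still available to be compared against $u_i$. I would handle this by observing that if $v$ was discarded at some earlier iteration with current solution $S_j$ (for $j < i$), then either $S_j + v \notin \cI$, which by matroid monotonicity would contradict $S_{i-1} + v \in \cI$, or $f(v \mid S_j) < 0$, which by submodularity would contradict $f(v \mid S_{i-1}) > 0$. Ruling out both possibilities forces $v \in E'$ at step $i$, so the maximality of $u_i$'s marginal among elements of $E'$ at that moment gives $f(u_i \mid S_{i-1}) \geq f(v \mid S_{i-1})$, completing the argument.
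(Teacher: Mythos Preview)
Your proof is correct and follows essentially the same line as the paper's: split on $|C'_i|\in\{0,1\}$, note that the removed element $v$ can be feasibly added to $S_{i-1}$ because $S_{i-1}+v\subseteq C_{i-1}\in\cI$, invoke \Greedy's choice to get $f(u_i\mid S_{i-1})\geq f(v\mid S_{i-1})$, and finish with submodularity via $S_{i-1}\subseteq C_k$. The paper's argument is shorter because it asserts $f(u_i\mid S_{i-1})\geq f(u'_i\mid S_{i-1})$ directly from ``\Greedy\ chose $u_i$'' without your finer sub-case split on $v\in C_k$ or the sign of $f(v\mid C_k)$; your additional care in verifying that $v$ is still in $E'$ at step $i$ fills in a detail the paper leaves implicit. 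One small remark: in your last paragraph you should note (it follows from $v\notin C_k\supseteq S$) that $v\notin S$, so if $v$ left $E'$ before step $i$ it was indeed \emph{discarded} rather than added; this is why your two-case dichotomy for early removal is exhaustive.
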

\begin{proof}
If $C'_i = \varnothing$, then the right hand side of the inequality we want to prove is $0$ while its left hand side is non-negative (since {\Greedy} chose to add $u_i$ to $S_{i - 1}$). Thus, we concentrate from this point on the case $|C'_i| = 1$.

Let $u'_i$ be the single element of $C'_i$. Since $C_i$ contains the set $S_{i - 1}$ and is independent, we know that $u'_i$ is not in $S_{i - 1}$ and could be added to $S_{i - 1}$ by {\Greedy} without violating independence. On the other hand, since {\Greedy} chose to add $u_i$ to $S_{i - 1}$, we must have:
\[
	f(u_i \mid S_{i - 1})
	\geq
	f(u'_i \mid S_{i - 1})
	\geq
	f(u'_i \mid C_k)
	=
	f(C'_i \cup C_k) - f(C_k)
	\enspace,
\]
where the second inequality follows by submodularity since $S_{i - 1} \subseteq S_k \subseteq C_k$.
\end{proof}

We are now ready to prove Lemma~\ref{lem:greedy_determinisitic_property_nonnormalized}.
\begin{align*}
	f(S)
	={} &
	\frac{f(S_k) + f(\varnothing) + \sum_{i = 1}^k f(u_i \mid S_{i - 1})}{2}
	\geq
	\frac{f(C_k) + f(\varnothing) + \sum_{i = 1}^k [f(C'_i \cup C_k) - f(C_k)]}{2}\\
	\geq{} &
	\frac{f(C_k) + f(\varnothing) + [f(\bigcup_{i = 1}^k C'_i \cup C_k) - f(C_k)]}{2}
	=
	\frac{f(\varnothing) + f(S \cup C)}{2}
	\geq
	\frac{f(S \cup C)}{2}
	\enspace,
\end{align*}
where the first inequality holds by Observation~\ref{obs:S_C_k_relation} and Lemma~\ref{lem:gain_bound}, the second inequality holds by submodularity of $f$ and the fact that the $C_i'$ are disjoint, and the last by the non-negativity of $f$.

}

\end{document}